\newtheorem{definition}{Definition}
\newtheorem{theorem}{Theorem}
\newtheorem{lemma}{Lemma}
\newtheorem{proposition}{Proposition}
\newtheorem{corollary}{Corollary}
\newtheorem{remark}{Remark}
\newtheorem{assumption}{Assumption}
\begin{document}

\title{Stability and Scalability of Blockchain Systems}
\author{Aditya Gopalan\thanks{University of Illinois at Urbana-Champaign, gopalan6@illinois.edu. Work done while affiliated with The University of Texas at Austin.} \and Abishek Sankararaman\thanks{University of California, Berkeley, abishek@berkeley.edu. Work done while affiliated with The University of Texas Austin.} \and
Anwar Walid\thanks{Nokia Bell Labs, anwar.walid@nokia-bell-labs.com} \and Sriram Vishwanath\thanks{The University of Texas at Austin, sriram@utexas.edu}}
\date{}


\maketitle
\begin{abstract}
The blockchain paradigm provides a mechanism for content dissemination and distributed consensus on Peer-to-Peer (P2P) networks.
While this paradigm has been widely adopted in industry, it has not been carefully analyzed in terms of its network scaling with respect to the number of peers.
Applications for blockchain systems, such as cryptocurrencies and IoT, require this form of network scaling.

In this paper, we propose a new stochastic network model for a blockchain system.
We identify a structural property called \emph{one-endedness}, which we show to be desirable in any blockchain system as it is directly related to distributed consensus among the peers.
We show that the stochastic stability of the network is sufficient for the one-endedness of a blockchain.
We further establish that our model belongs to a class of network models, called monotone separable models.
This allows us to establish upper and lower bounds on the stability region.
The bounds on stability depend on the connectivity of the P2P network through its conductance and allow us to analyze the scalability of blockchain systems on large P2P networks.
We verify our theoretical insights using both synthetic data and real data from the Bitcoin network.
\end{abstract}
\section{Introduction}
\label{sec:intro}
The blockchain paradigm, introduced in the Bitcoin whitepaper \cite{nakamoto2008bitcoin}, enables distributed consensus over a peer-to-peer network.
Each peer constantly mines new information called \emph{blocks}, which can consist of more fine-grained information called \emph{transactions}.
Thus, blocks in the network are created over time.
Each peer that creates (mines) a block also creates \emph{references} to one or more previously created blocks.
Peers also communicate blocks in order to synchronize their information sets; \emph{i.e.}, the sets of blocks and references the peers are aware of.

One of the main goals of a blockchain system is to enable consensus through distributed trust.
Trust is achieved by the references -- a peer only references a block for which they have verified the contents.
In order to achieve distributed consensus, all peers should trust the same blocks. 
If all peers trust a block, it is called \emph{confirmed}.
A natural performance requirement of a blockchain system is that the subset of blocks which are confirmed grows with time as blocks are created.
This, however, is not guaranteed as blocks are created over time at different locations in the network and then need to be disseminated.
Due to bandwidth limitations, communications on the network are not instantaneous and experience delays.
If blocks are created too quickly~\cite{bagaria2019prism, yang2019prism, eyal2016bitcoin}, the delays cause network congestion and can prevent a blockchain system from confirming blocks.
As blockchain technology matures and evolves~\cite{nakamoto2008bitcoin, buterin2013ethereum, popov2017iota, bagaria2019prism, eyal2016bitcoin, pass2017fruitchains, yang2019prism}, it is natural to study the scalability challenges that arise due to its adoption.
The scalability of distributed consensus protocols has been studied in various other contexts~\cite{sompolinsky2015secure, sanghavi2007gossiping, shah2009gossip, ugrinovskii2013conditions, dimakis2006geographic, ranganathan2001gossip, ioannidis2009optimal}.

The defining features that allow a blockchain system to confirm blocks are twofold: (1) the causality of block references, and (2) the dissemination of all blocks to all peers.
This paper presents a novel stochastic model of the core blockchain protocol which considers these key aspects to assess the impact of block creation rates, bandwidth limitations, and network topology on the performance of a blockchain system.
A blockchain system has two components -- a peer-to-peer network that disseminates blocks mined by a peer to others, and temporal dynamics, by which different peers mine blocks at different instances of time. 
The classical peer-to-peer models characterize the dissemination of blocks among peers through communication protocols, but do not capture the arrival of exogenous blocks. 
On the other hand, models in queueing networks, precisely characterize the temporal dynamics of block arrivals, but cannot capture dependencies between blocks imposed by the peer-to-peer gossip dynamics.

In this paper, we propose a new stochastic model that captures both the bandwidth-limited gossip-based dissemination of blocks among peers in a peer-to-peer network and exogenous block arrivals.
We model blocks as vertices and references as directed edges in a directed acyclic graph, as in \cite{papadis2018stochastic, bagaria2019prism}.
Formally, we model newly mined blocks as an arrival process such that each new block arrives to the network at the peer who mines it.
Upon a block arrival, a peer adds the new block to its local copy of the blockchain.
Peers also communicate over the network to synchronize their local copies of the blockchain based on a gossip-like protocol.
Due to bandwidth limitations, communicated blocks are subject to delays which depend on the instantaneous network congestion.
Precisely, each peer communicates blocks to neighboring peers at a given rate of communication, and communicates the oldest block not possessed by the other peer.
Thus, blocks experience a network delay, as they are disseminated in a First-Come First-Served (FCFS) basis by the peers.
It is therefore possible that a new block arrives to the system before the block(s) it references are confirmed.
In particular, not all peers necessarily have all previously arrived blocks upon the arrival of a new block.
Due to the aforementioned causality of block references, it is important to maintain the order in which blocks are disseminated across the peer-to-peer network so that a blockchain system can reliably confirm blocks.

In this paper, we study the problems of \emph{stability} and \emph{scalability} of our blockchain model.
Broadly, stability implies that there exists a positive block arrival rate at which blocks can be confirmed for a fixed peer-to-peer network.
The stability of blockchain systems ensures that an external observer can determine, in finite time, which blocks will eventually be confirmed.
In this paper, scalability implies that there exists a (fixed) positive block arrival rate at which blocks can be confirmed as the number of peers grows.
Scalability ensures that in our model, the performance of the blockchain system does not degrade as the number of peers participating in the system grows.
The problem of optimizing throughput by selecting system parameters and designing other protocols is thus dependent on first ensuring stability and scalability of distributed consensus on the underlying peer-to-peer network. 

In current implementations of blockchain systems, the block arrival rate is governed by algorithms such as Proof-of-Work and Proof-of-Stake \cite{nakamoto2008bitcoin, buterin2013ethereum}, but we abstract such algorithms to the more general notion of an arrival process.
This paper treats blocks as the atomic unit for blockchain systems.
However, in certain applications of blockchain systems, such as in cryptocurrencies, a block consists of multiple transactions, which are the atomic unit.
However, even in such systems (\textit{e.g.}~\cite{nakamoto2008bitcoin}), distributed consensus is achieved at the block level. 
We leave it to future work extend our model to include the dynamics at the transaction level.

\subsection{Contributions of this Paper}
This paper studies distributed consensus dynamics in blockchain networks and establishes the conditions under which such consensus can occur.
The contributions of this paper are threefold.

\noindent {\textbf{Asymptotic Structural Properties}} -- Motivated by the fact that blockchain systems require network resources (in the form of bandwidth) in order to confirm blocks in the ledger, we begin with a natural performance requirement.
As the bandwidth consumed by a blockchain system grows as time $t\to\infty$; thus the number of confirmed blocks should also grow.
To this end, we consider the evolution of a blockchain system in the limit as time $t\to\infty$ and show that if there are infinitely many confirmed blocks, the sub-graph of confirmed blocks exhibits the qualitative property of \emph{one-endedness}.
A precise definition of one-endedness is given in Section \ref{sec:definitions}.
Furthermore, we find that, the one-endedness of the limiting blockchain DAG is a sufficient condition for the existence of infinitely many confirmed blocks.

Armed with these results, we analyze two natural constructions of blockchain DAGs, which we refer to as the \emph{tree} and \emph{throughput-optimal} policies.
The tree policy is implemented in the Bitcoin and Ethereum blockchains, the two most widely-adopted blockchain systems \cite{nakamoto2008bitcoin, buterin2013ethereum}; the throughput-optimal policy is introduced in \cite{lewenberg2015inclusive}.
We show that, if the network is stable, then any blockchain constructed under these two policies has a one-ended limiting DAG. 
Thus, if the network is stable, these two constructions are able to confirm infinitely many blocks in the limit as time $t\to\infty$.

\noindent {\textbf{Stability and Scalability}} -- 
We compute bounds on the stability region of blockchain systems as a function of the block arrival rate, network bandwidth limitations, and network topology. 
Namely, we bound the maximum block arrival rate to the system such that a blockchain system using the tree or throughput-policy can confirm infinitely many blocks as time $t\to\infty$.
Precise definitions are given in Section \ref{sec:stability-scalability}.
Our analysis assumes that the input process is stationary, but not necessarily Poisson.
We find that $\mu$, the maximum block arrival rate to ensure stability, satisfies $\frac{\phi_H}{2\log N} \leq \mu \leq \inf_{S \subset H}\phi_H^{(S)}$, where $\phi_H$ is the conductance of the peer-to-peer network $H$, with $N$ peers, and $\phi_H^{(S)}$ is the conductance of the cut $S$.

Following the stability analysis, we use our bounds to assess the scalability of blockchain systems.
A sequence of peer-to-peer networks $(H_k)_{k\in\mathbb{N}}$ is scalable if there exists a positive block arrival rate $\lambda^*$ such that each network is stable with arrival rate $\lambda^*$.
We determine a necessary condition for scalability, which in turn provides a sufficient condition for the lack thereof.
We show as an example that sequences of peer-to-peer networks containing large stars not scalable.

\noindent {\textbf{Bitcoin System Evaluation using Real Data Traces}} -- Finally, we turn to numerical simulation to characterize quantitative measures of blockchain system network performance.
We prove that under the tree policy, stable blockchain systems confirm infinitely many blocks as time $t\to\infty$.
Our proof identifies a set of blocks called the \emph{distinguished path}, using which an external entity who is aware of the global network dynamics can determine, in finite time, which blocks will eventually be confirmed.
This determination relies on the network behavior of peer-to-peer dynamics in blockchain systems; from this result we are able to determine several network metrics for stable blockchain systems.
For small peer-to-peer networks, we simulate these performance metrics with respect to block arrival rates.

Using measurements of the Bitcoin peer-to-peer network taken by \cite{decker2013information} and \cite{gencer2018decentralization}, and traces of the Bitcoin peer-to-peer network taken by \cite{blockchair2020data} we compare the performance of our model with a simulated Poisson block arrival input and with an input of traces of the Bitcoin blockchain system.
We find that the assumption of Poisson block arrivals is a good approximation of the real process.

\subsection{Organization of this Paper}
In Section \ref{sec:model}, we present our stochastic network model and relevant definitions for our analysis.
In Section \ref{sec:structural-properties} we identify a structural relationship between confirmed blocks and provide a sufficient condition for the existence of infinitely many confirmed blocks as time $t\to\infty$.
In Section \ref{sec:one-ended-policies}, we show that stable blockchain systems using the tree and throughput-optimal policies confirm infinitely many blocks as time $t\to\infty$.
In Section \ref{sec:stability}, we derive bounds on the stability region for our model.
We also use our stability bounds to deduce the non-scalability of networks containing large stars and show that the per-peer block arrival rate decreases to $0$ with network size, even in a scalable sequence of networks.
In Section \ref{sec:interpretations}, we interpret our theoretical results and identify new network metrics to characterize stable blockchain dynamics.
In Section \ref{sec:simulations}, we conduct numerical experiments.
In Section \ref{sec:related_work}, we discuss related work.
We provide concluding remarks in Section \ref{sec:conclusion}. Proofs are deferred to appendices.

\section{System Model}
\label{sec:model}

Our model consists of a collection of peers on a peer-to-peer network.
Each peer adds blocks to the blockchain system in a process called \emph{mining}.
Newly mined blocks are added to the peers' individual copies of the global blockchain ledger.
Peers subsequently communicate newly mined blocks to other peers on the peer-to-peer network. 
Communications over the peer-to-peer network incur delays, which depend on the instantaneous network congestion. 
Each peer represents the instantaneous state of its copy of the blockchain as a DAG and updates its copy of the DAG according to both the communications received over the peer-to-peer network, as well as through block mining.
We describe this process more formally below.

\subsection{Stochastic Network Model}

\noindent {\bf Peer-to-Peer Network} - 
Our model consists of $N$ peers connected to each other by an undirected graph $H$. 
Each edge $(i, j)$ of $H$ represents a bi-directional communication link between peers $i$ and $j$.
Associated with each peer $p \in \{1, \ldots, N\}$, at each time $t \in \mathbb{R}_{+}$, is a DAG $G_p(t)$, whose vertex set is denoted by $B_p(t) \subset \mathbb{N} \cup \{0\}$ and edge set $E_p(t)$.
The DAGs $G_p(t),\ p \in \{1, \ldots, N\}$ represent the state of the blockchain from the perspective of peer $p$ at time $t$. 
The set $B_p(t)$ represents the set of blocks known to peer $p$ at time $t$ and the set $E_p(t)$ represents the aforementioned block references.

From henceforth and for clarity, $G_p(t)$, the blockchain graph at a peer $p$ at time $t$, and the union $G(t): = \bigcup_{0 \leq s \leq t} \bigcup_{p \in\{1, \ldots, N\}} G_p(s)$ are referred to as \emph{DAGs}. 
The vertices of any DAG are referred to as \emph{blocks} and the (directed) edges are referred to as \emph{references}.
Similarly, the graph $H$, which represents the communication structure among the $N$ peers is referred to as a \emph{network}.
The vertices of any network are referred to as \emph{peers} and the (undirected) edges are referred to as \emph{links}.


At time $0$, we assume that $G_p(0)$ is a single vertex indexed $0$, for all peers $p \in\{1, \ldots, N\}$. 
We denote by $B(t) = \bigcup_{p \in \{1, \ldots, N\}}B_p(t)$ and $E(t) = \bigcup_{p \in \{1, \ldots, N\}}E_p(t)$. 
The DAG $G(t)$ is the graph on the vertex set $B(t)$ with edge set $E(t)$.

\noindent {\bf Block Arrival and Reference Selection Process} - The DAGs $G_p(t)$ associated with the peers evolve with time as new blocks arrive to the system.
More precisely, the arriving blocks are indexed by the natural numbers  $\{1, 2, \ldots\}$.
Recall that at time $0$, all peers are in agreement about block $0$. 
Blocks arrive in continuous time (according to a stationary point process $A$ with intensity $\lambda$), with each block $i \in \mathbb{N}$ arriving at a (random) peer denoted by $p_i \in \{1, \ldots, N\}$. 
If a block $i$ arrives at peer $p$ at time $t$, we specify this event as peer $p$ \emph{mines} block $i$ at time $t$.
When a block indexed $i$ arrives at peer $p$, $p$ is instantly aware of the index of the newly arrived block.
In other words, the arriving block is instantly added to the block set of the DAG associated with peer $p$.
The outgoing references from block $i$ are chosen from among $B_{p}(t)\setminus \{i\}$ according to a fixed policy depending only on the DAG $G_{p}(t^{-})$, where $t^-$ is a moment in time infinitesimally before $t$. 
For each block $i \in \mathbb{N}$, we denote by $\mathcal{O}_i$ the set of outgoing neighbors of block $i$. 
If block $i$ is mined by peer $p$, at time $t$, $\mathcal{O}_i \subseteq B_{p}(t) \subseteq \{1, \ldots, i-1\}$. 
Notice that the set $\mathcal{O}_i$, is only chosen at the time of arrival by the peer to which block $i$ arrives to, and is fixed henceforth. 
We give examples of policies that select $\mathcal{O}_i$ in the sequel.

\noindent {\bf Communication Among Peers} - Associated with each peer $p \in \{1, \ldots, N\}$ is a marked point process $T_p$, for which each mark corresponds to another peer in $\{1, \ldots, N\} \setminus \{p\}$. 
At each epoch of $T_p$, peer $p$ contacts a peer $q$, given by the mark of the epoch. 
Instantly, peer $q$'s block set $B_{q}(t)$ is updated to include the lowest numbered block in $B_p(t) \setminus B_{q}(t)$ if this set is non-empty and the reference set $E_q(t)$ is also updated accordingly.
Observe that if peer $p$ communicates block $j \in \mathbb{N}$ to peer $q$ at time $t$, $\mathcal{O}_j \subseteq B_q(t)$. 
For otherwise, then one of the block in $\mathcal{O}_j$ would be communicated, as the communication policy sends the lowest numbered block and for every block $j$, peer $p$ and time $t$, $\mathcal{O}_j \subseteq \{1,\cdots,j-1\}$.

Note that the DAGs $G_p(t), p \in \{1, \ldots, N\}$ and $G(t)$ are \emph{random DAGs} as their growth is governed by a stochastic process.
These graphs are parameterized by the P2P network $H$ but we do not include this in our notation as the context will always be clear.

Observe that P2P network dynamics are a continuous time rumor-spreading process with exogenous arrivals \cite{ganeshnotes}.
Here, rumors represent blocks which are disseminated on the network.
For simplicity and without loss of generality, we assume that each peer $p$ has unit communication bandwidth, \textit{i.e.}, the process $T_p$ is rate $1$ block per second.
We relax this assumption in Remark \ref{remark-bandwidth}.
As a peer $p$ can communicate at most a single block at the epochs of $T_p$, the block dissemination is bandwidth-limited. 


\noindent {\bf Longest Chain Policies} - In this paper, we only consider the case where the outgoing edges of a block are chosen according to a class of \emph{deterministic} policies that we call \emph{Longest Chain Policies}.
This class of policies are such that for each arriving block $i \in \mathbb{N}$, which arrives at a random peer $p \in \{1, \ldots, N\}$, at time $t \in \mathbb{R}_{+}$, at least one of its outgoing edges connects to a vertex $j \in B_{p}(t)$, which is farthest away (in the sense of number of hops in $G_{p}(t)$) from block $0$. 
Formally, for each peer $p \in \{1,\ldots,N\}$ and time $t \in \mathbb{R}_{+}$, denote by the non-empty set
\begin{align}
\mathcal{L}_{p}(t):=\{ j \in B_{p}(t) : d(j,0) \geq d(j',0), \forall j' \in B_{p}(t) \},
\label{eqn:longest_chain_policy}
\end{align}
where $d(\cdot, 0)$ is the hop distance from $\cdot$ to 0 in $G_p(t)$.
The class of longest chain policies is such that for every block $i \in \mathbb{N}$ which arrives at peer $p$, at least one of its outgoing edges is in the set $\mathcal{L}_{p}(t)$. 
In other words, for every block $i \in \mathbb{N}$, that arrives at peer $p \in \{1, \ldots, N\}$, at time $t \in \mathbb{R}_{+}$, the set $\mathcal{O}_i \cap \mathcal{L}_{p}(t)$ is non-empty. 
This class of policies construct simple DAGs, \textit{i.e.}, for any two blocks $i > j \geq 0$, there is at most one directed edge from $i$ to $j$ in $G(t) := \bigcup_{p \in \{1, \ldots, N\}} G_p(t)$, for all $t \geq 0$.

In this paper, we consider the following two reference selection policies.
In both policies, we fix a block $i \in \mathbb{N}$, that arrives at a (random) peer $p \in \{1, \ldots, N\}$, at time $t \in \mathbb{R}_{+}$.
\begin{enumerate}
    \item \emph{Tree Policy} - $\mathcal{O}_i \subseteq \mathcal{L}_{p}(t),$ such that $|\mathcal{O}_i| = 1$.
    Every block has exactly one outgoing reference, chosen according to a deterministic rule from the set $\mathcal{L}_{p}(t)$. 
    We assume without loss of generality that each block $i$ has an outgoing reference to the least indexed block in $\mathcal{L}_{p}(t)$ in the event that $|\mathcal{L}_{p}(t)| > 1$.
    \item \emph{Throughput Optimal Policy} - \\
    $\mathcal{O}_i = \{b \in B_p(t) : b \text{ is a leaf in } G_p(t)\}.$
    Every block connects to all leaves in $G_p(t^-)$.
    We explain after Corollary \ref{cor:throughput-all-confirmed} why this policy is called throughput-optimal.
    
\end{enumerate}

In this paper, we only analyze blockchain systems that use the tree and throughput-optimal policies.
Bitcoin \cite{nakamoto2008bitcoin} and Ethereum \cite{buterin2013ethereum}, the two most widely adopted blockchain implementations, both use the tree policy, and the throughput-optimal policy is studied in \cite{lewenberg2015inclusive}.

\noindent {\bf Example Blockchain Realization} - 
See Figure \ref{fig:realization} for an example realization of the arrival and transmission processes on a peer-to-peer network with 2 peers, $p$ and $q$.
Both peers use the tree policy.
In the example, the arrival process $A$ has points at times $1.1, 2.4, 4.0$, and $6.2$, with marks $p, p, q$, and $q$, respectively.
The transmission process $T_{p}$ occurs at times $2.6, 5.2$ and the transmission process $T_{q}$ occurs at times $5.8, 6.9$.
The figure depicts the DAG $G(t)$ throughout the duration of these point processes and enumerates the sets $B_p(t), B_q(t)$.
Subfigures (b), (c), (d), (e), (f), (g), (h) capture the system in increasing time.

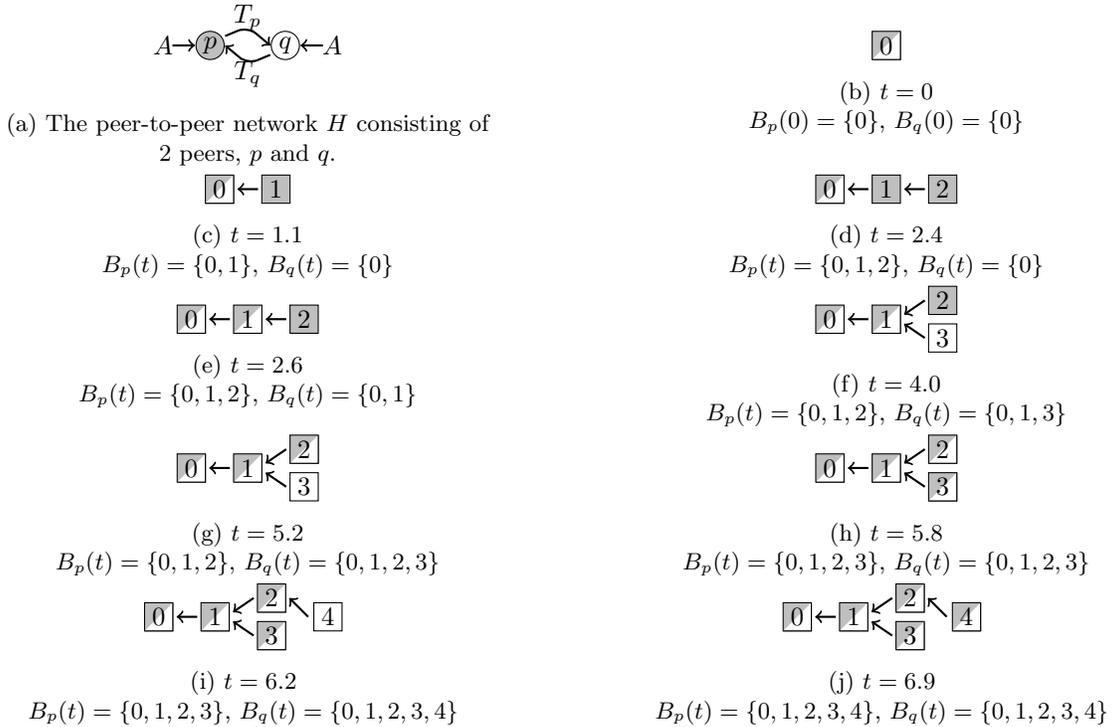
\begin{figure}[htp]
    \centering
    \captionsetup{justification=centering}
    \begin{subfigure}[!bt]{0.4\columnwidth}
    {\centering
    \begin{tikzpicture}[scale=0.5]
        \draw [fill=lightgray] (-1, 0) circle [radius=0.375];
        \node at (-1, 0) {$p$};
        \draw [fill=white] (1, 0) circle [radius=0.375];
        \node at (1, 0) {$q$};
        
        \draw [->, rounded corners, thick] (-0.6, 0.25) -- (0, 0.5) -- (0.6, 0);
        \node at (0, 0.75) {$T_{p}$};
        \draw [->, rounded corners, thick] (0.6, -0.25) -- (0, -0.5) -- (-0.6, 0);
        \node at (0, -0.75) {$T_{q}$};
            
        \draw [->, thick] (-2, 0) -- (-1.4, 0);
        \node at (-2.25, 0) {$A$};
        \draw [->, thick] (2, 0) -- (1.4, 0);
        \node at (2.25, 0) {$A$};
        \end{tikzpicture}
        \par}
    \caption{The peer-to-peer network $H$ consisting of 2 peers, $p$ and $q$.}
\end{subfigure}
\hspace{0.1\columnwidth}
\begin{subfigure}[!bt]{0.4\columnwidth}
    {\centering
    \begin{tikzpicture}[scale=0.5]
        \draw [fill=lightgray] (0, 0) rectangle (0.75, 0.75);
        \filldraw [fill=white] (0, 0) -- (0.75, 0) -- (0.75, 0.75);
        \node at (0.375, 0.375) {0};
    \end{tikzpicture}
    \par}
    \caption{$t = 0$
    \\
    $B_p(0) = \{0\}$, $B_q(0) = \{0\}$}
\end{subfigure}
\\
\begin{subfigure}[!bt]{0.4\columnwidth}
    {\centering
    \begin{tikzpicture}[scale=0.5]
        \draw [fill=lightgray] (0, 0) rectangle (0.75, 0.75);
        \filldraw [fill=white] (0, 0) -- (0.75, 0) -- (0.75, 0.75);
        \node at (0.375, 0.375) {0};
            
        \draw [fill=lightgray] (1.5, 0) rectangle (2.25, 0.75);
        \node at (1.875, 0.375) {1};
            
        \draw [->, thick] (1.4, 0.375) -- (0.85, 0.375);
    \end{tikzpicture}
    \par}
    \caption{$t = 1.1$
    \\
    $B_p(t) = \{0, 1\}$, $B_q(t) = \{0\}$}
\end{subfigure}
\hspace{0.1\columnwidth}
\begin{subfigure}[!bt]{0.4\columnwidth}
    {\centering
    \begin{tikzpicture}[scale=0.5]
        \draw [fill=lightgray] (0, 0) rectangle (0.75, 0.75);
        \filldraw [fill=white] (0, 0) -- (0.75, 0) -- (0.75, 0.75);
        \node at (0.375, 0.375) {0};
            
        \draw [fill=lightgray] (1.5, 0) rectangle (2.25, 0.75);
        \node at (1.875, 0.375) {1};
            
        \draw [fill=lightgray] (3, 0) rectangle (3.75, 0.75);
        \node at (3.375, 0.375) {2};
            
        \draw [->, thick] (1.4, 0.375) -- (0.85, 0.375);
        \draw [->, thick] (2.9, 0.375) -- (2.35, 0.375);
    \end{tikzpicture}
    \par}
    \caption{$t = 2.4$
    \\
    $B_p(t) = \{0, 1, 2\}$, $B_q(t) = \{0\}$}
\end{subfigure}   
\\
\begin{subfigure}[!bt]{0.4\columnwidth}
    {\centering
    \begin{tikzpicture}[scale=0.5]
        \draw [fill=lightgray] (0, 0) rectangle (0.75, 0.75);
        \filldraw [fill=white] (0, 0) -- (0.75, 0) -- (0.75, 0.75);
        \node at (0.375, 0.375) {0};
            
        \draw [fill=lightgray] (1.5, 0) rectangle (2.25, 0.75);
        \filldraw [fill=white] (1.5, 0) -- (2.25, 0) -- (2.25, 0.75);
        \node at (1.875, 0.375) {1};
            
        \draw [fill=lightgray] (3, 0) rectangle (3.75, 0.75);
        \node at (3.375, 0.375) {2};
            
        \draw [->, thick] (1.4, 0.375) -- (0.85, 0.375);
        \draw [->, thick] (2.9, 0.375) -- (2.35, 0.375);
    \end{tikzpicture}
    \par}
    \caption{$t = 2.6$
    \\
    $B_p(t) = \{0, 1, 2\}$, $B_q(t) = \{0, 1\}$}
\end{subfigure}
\hspace{0.1\columnwidth}
\begin{subfigure}[!bt]{0.4\columnwidth}
    {\centering
    \begin{tikzpicture}[scale=0.5]
        \draw [fill=lightgray] (0, 0) rectangle (0.75, 0.75);
        \filldraw [fill=white] (0, 0) -- (0.75, 0) -- (0.75, 0.75);
        \node at (0.375, 0.375) {0};
            
        \draw [fill=lightgray] (1.5, 0) rectangle (2.25, 0.75);
        \filldraw [fill=white] (1.5, 0) -- (2.25, 0) -- (2.25, 0.75);
        \node at (1.875, 0.375) {1};
            
        \draw [fill=lightgray] (3, 0.5) rectangle (3.75, 1.25);
        \node at (3.375, 0.875) {2};
            
        \draw [fill=white] (3, 0.25) rectangle (3.75, -0.5);
        \node at (3.375, -0.125) {3};
            
        \draw [->, thick] (1.4, 0.375) -- (0.85, 0.375);
        \draw [->, thick] (2.9, 0.875) -- (2.35, 0.5);
        \draw [->, thick] (2.9, -0.125) -- (2.35, 0.25);
    \end{tikzpicture}
    \par}
    \caption{$t = 4.0$
    \\
    $B_p(t) = \{0, 1, 2\}$, $B_q(t) = \{0, 1, 3\}$}
\end{subfigure}   
\\
\begin{subfigure}[!bt]{0.4\columnwidth}
    {\centering
    \begin{tikzpicture}[scale=0.5]
        \draw [fill=lightgray] (0, 0) rectangle (0.75, 0.75);
        \filldraw [fill=white] (0, 0) -- (0.75, 0) -- (0.75, 0.75);
        \node at (0.375, 0.375) {0};
            
        \draw [fill=lightgray] (1.5, 0) rectangle (2.25, 0.75);
        \filldraw [fill=white] (1.5, 0) -- (2.25, 0) -- (2.25, 0.75);
        \node at (1.875, 0.375) {1};
            
        \draw [fill=lightgray] (3, 0.5) rectangle (3.75, 1.25);
        \filldraw [fill=white] (3, 0.5) -- (3.75, 0.5) -- (3.75, 1.25);
        \node at (3.375, 0.875) {2};
            
        \draw [fill=white] (3, 0.25) rectangle (3.75, -0.5);
        \node at (3.375, -0.125) {3};
            
        \draw [->, thick] (1.4, 0.375) -- (0.85, 0.375);
        \draw [->, thick] (2.9, 0.875) -- (2.35, 0.5);
        \draw [->, thick] (2.9, -0.125) -- (2.35, 0.25);
    \end{tikzpicture}
    \par}
    \caption{$t = 5.2$
    \\
    $B_p(t) = \{0, 1, 2\}$, $B_q(t) = \{0, 1, 2, 3\}$}
\end{subfigure}
\hspace{0.1\columnwidth}
\begin{subfigure}[!bt]{0.4\columnwidth}
    {\centering
    \begin{tikzpicture}[scale=0.5]
        \draw [fill=lightgray] (0, 0) rectangle (0.75, 0.75);
        \filldraw [fill=white] (0, 0) -- (0.75, 0) -- (0.75, 0.75);
        \node at (0.375, 0.375) {0};
            
        \draw [fill=lightgray] (1.5, 0) rectangle (2.25, 0.75);
        \filldraw [fill=white] (1.5, 0) -- (2.25, 0) -- (2.25, 0.75);
        \node at (1.875, 0.375) {1};
            
        \draw [fill=lightgray] (3, 0.5) rectangle (3.75, 1.25);
        \filldraw [fill=white] (3, 0.5) -- (3.75, 0.5) -- (3.75, 1.25);
        \node at (3.375, 0.875) {2};
            
        \draw [fill=lightgray] (3, 0.25) rectangle (3.75, -0.5);
        \filldraw [fill=white] (3, -0.5) -- (3.75, -0.5) -- (3.75, 0.25);
        \node at (3.375, -0.125) {3};
            
        \draw [->, thick] (1.4, 0.375) -- (0.85, 0.375);
        \draw [->, thick] (2.9, 0.875) -- (2.35, 0.5);
        \draw [->, thick] (2.9, -0.125) -- (2.35, 0.25);
    \end{tikzpicture}
    \par}
    \caption{$t = 5.8$
    \\
    $B_p(t) = \{0, 1, 2, 3\}$, $B_q(t) = \{0, 1, 2, 3\}$}
\end{subfigure}
\\
\begin{subfigure}[!bt]{0.4\columnwidth}
    {\centering
    \begin{tikzpicture}[scale=0.5]
        \draw [fill=lightgray] (0, 0) rectangle (0.75, 0.75);
        \filldraw [fill=white] (0, 0) -- (0.75, 0) -- (0.75, 0.75);
        \node at (0.375, 0.375) {0};
            
        \draw [fill=lightgray] (1.5, 0) rectangle (2.25, 0.75);
        \filldraw [fill=white] (1.5, 0) -- (2.25, 0) -- (2.25, 0.75);
        \node at (1.875, 0.375) {1};
            
        \draw [fill=lightgray] (3, 0.5) rectangle (3.75, 1.25);
        \filldraw [fill=white] (3, 0.5) -- (3.75, 0.5) -- (3.75, 1.25);
        \node at (3.375, 0.875) {2};
            
        \draw [fill=lightgray] (3, 0.25) rectangle (3.75, -0.5);
        \filldraw [fill=white] (3, -0.5) -- (3.75, -0.5) -- (3.75, 0.25);
        \node at (3.375, -0.125) {3};
        
        \draw [fill=white] (4.5, 0) rectangle (5.25, 0.75);
        \node at (4.875, 0.375) {4};
            
        \draw [->, thick] (1.4, 0.375) -- (0.85, 0.375);
        \draw [->, thick] (2.9, 0.875) -- (2.35, 0.5);
        \draw [->, thick] (2.9, -0.125) -- (2.35, 0.25);
        \draw [->, thick] (4.35, 0.375) -- (3.85, 0.875);
    \end{tikzpicture}
    \par}
    \caption{$t = 6.2$
    \\
    $B_p(t) = \{0, 1, 2, 3\}$, $B_q(t) = \{0, 1, 2, 3, 4\}$}
\end{subfigure}
\hspace{0.1\columnwidth}
\begin{subfigure}[!bt]{0.4\columnwidth}
    {\centering
    \begin{tikzpicture}[scale=0.5]
        \draw [fill=lightgray] (0, 0) rectangle (0.75, 0.75);
        \filldraw [fill=white] (0, 0) -- (0.75, 0) -- (0.75, 0.75);
        \node at (0.375, 0.375) {0};
            
        \draw [fill=lightgray] (1.5, 0) rectangle (2.25, 0.75);
        \filldraw [fill=white] (1.5, 0) -- (2.25, 0) -- (2.25, 0.75);
        \node at (1.875, 0.375) {1};
            
        \draw [fill=lightgray] (3, 0.5) rectangle (3.75, 1.25);
        \filldraw [fill=white] (3, 0.5) -- (3.75, 0.5) -- (3.75, 1.25);
        \node at (3.375, 0.875) {2};
            
        \draw [fill=lightgray] (3, 0.25) rectangle (3.75, -0.5);
        \filldraw [fill=white] (3, -0.5) -- (3.75, -0.5) -- (3.75, 0.25);
        \node at (3.375, -0.125) {3};
        
        \draw [fill=lightgray] (4.5, 0) rectangle (5.25, 0.75);
        \filldraw [fill=white] (4.5, 0) -- (5.25, 0) -- (5.25, 0.75);
        \node at (4.875, 0.375) {4};
            
        \draw [->, thick] (1.4, 0.375) -- (0.85, 0.375);
        \draw [->, thick] (2.9, 0.875) -- (2.35, 0.5);
        \draw [->, thick] (2.9, -0.125) -- (2.35, 0.25);
        \draw [->, thick] (4.35, 0.375) -- (3.85, 0.875);
    \end{tikzpicture}
    \par}
    \caption{$t = 6.9$
    \\
    $B_p(t) = \{0, 1, 2, 3, 4\}$, $B_q(t) = \{0, 1, 2, 3, 4\}$}
\end{subfigure}
    \captionsetup{justification=justified, singlelinecheck=false}
    \caption{A sample realization of $G(t)$ wherein blocks are added according to the tree policy.
    Here, gray blocks represent blocks known to Peer $p$ and white blocks are known to Peer $q$.
    Blocks shaded both gray and white are known to both peers $p$ and $q$.}
    \label{fig:realization}
\end{figure}


\begin{remark}
A typical assumption made in the blockchain literature (see, \textit{e.g.}, \cite{nakamoto2008bitcoin, bagaria2019prism, papadis2018stochastic}) is that the block arrival process is Poisson.
Under this assumption, our system dynamics are Markovian as all arrivals and communications occur with independent and exponentially distributed increments.
In this paper, we relax the Poisson arrival assumption, and treat the block arrival process as a general stationary and ergodic process.

Decker and Wattenhofer \cite{decker2013information} find that information propagation in blockchain systems resembles gossip-based P2P information dissemination.
Our communication model is motivated by their findings and our assumption of Poisson intercommunication times among peers is consistent with the implementation of the Bitcoin blockchain \cite{fanti2018dandelion++}.
\end{remark}

\subsection{Stablility and Scalability}
\label{sec:stability-scalability}
We now define stability and scalability for blockchain-like systems, which are motivated from distributed agreement among peers.

\begin{definition}
  A blockchain system is \emph{consistent} at time $t$ if $B_p(t) = B(t)$ for all peers $p \in \{1, \ldots, N\}$.
  Such a time $t$ is called a \emph{time of consistency}. 
  In other words, the system is consistent at time $t$ if all peers have identical block sets.
  \label{def:consistent}
\end{definition}

In Figure \ref{fig:realization}, subfigures (b), (h), and (j) are times of consistency.
We discuss consistency and its relation to analogous concepts in the queueing theory literature in Sections \ref{sec:interpretations} and \ref{sec:simulations}.

\begin{definition}
  A blockchain system is \emph{stable} (or \emph{recurrent}) for a given block arrival rate $\lambda$ if, almost surely, there exists an infinite sequence $(C_k)_{k \in \mathbb{N}}$ of times such that the system is consistent, and $G(C_j) \neq G(C_k)$ if $j \neq k$.
  \label{def:stability}
\end{definition}

In other words, the blockchain system is stable if there are infinitely many times of consistency with at least one block arrival between subsequent times of consistency. 
We now define scalability.

\begin{definition}
  Consider a sequence $(H_k)_{k \in \mathbb{N}}$ of P2P communication networks, where the number of peers in $H_k$ increases monotonically as $k$ grows.
  A blockchain system is \emph{scalable on $(H_k)_k$}  
  if there exists a non-zero block arrival rate $\lambda^* > 0$ such that, for every $k \in \mathbb{N}$, the blockchain system is stable with rate $\lambda^*$ on the P2P network $H_k$.
  If there is no such non zero block arrival rate $\lambda^*$, the system with P2P networks $(H_k)_{k \in \mathbb{N}}$ is {\emph{non-scalable}}.
  \label{def:scalability}
\end{definition}
 
Stability is a property of the blockchain system for a fixed P2P network, in the limit as time goes to infinity. 
Notice that stability is an asymptotic concept and can only be assessed in the limit as time goes to infinity. 
In addition, stability is a property only of the block communication dynamics on a P2P network and the block arrival rate and is not dependent on the reference selection policy by which individual peers add blocks. This follows as stability only requires that all peers be aware of the same set of blocks infinitely often, and the communication among peers is governed only by block indices.

Scalability, on the other hand, is a \emph{double limit}, where for a fixed P2P system, we take time to infinity, and then subsequently, take the size of the P2P system to infinity. 
Thus, the definition of scalability is with respect to a particular sequence of P2P networks, whose vertex sets grow to infinity.
Precisely, we deem an infinite sequence of P2P networks scalable if there exists a non-zero arrival rate such that a blockchain system on every P2P network of the sequence is stable for that rate.



\subsubsection{Connections to Peer-to-Peer and Queueing Networks}

As the P2P network is bandwidth-limited, stability and scalability are not guaranteed \textit{a priori}.  
To see this, fix the number of peers $k \in \mathbb{N}$ and the bandwidth at all peers to be $1$, \textit{e.g.}, the intensity of the communication processes $(T_p)_{p=1}^k$ are all equal to $1$.
In such a setting, the \emph{total bandwidth} in the network increases with the number of peers $k$, as each peer has unit outgoing bandwidth.
However, as each block that arrives to the system must be communicated to all other peers to ensure stability, increasing the number of peers in the system may increase the dissemination time of any given block.
Our definitions of stability and scalability, capture this trade-off and facilitate the analysis of large-scale \emph{bandwidth-limited} blockchain systems.


The definitions of stability and scalability resemble those studied in classical queueing networks \cite{kelly2011reversibility}. 
Indeed, one can view our process, as consisting of blocks (or ``customers,'' in the standard queueing literature), arriving into a ``queue'' at rate $\lambda$. 
Each block (customer) ``leaves'' once it has been disseminated to all the peers in the peer-to-peer network. 
However, the rate of service given to any block (customer), is not a function of the queue state alone, as in either the First-Come First-Served (FCFS) or Generalized Processor Sharing disciplines. 
Instead, the rate of service depends on the internal blockchain states of the peers in the peer-to-peer network and the associated communication processes among peers.
Thus, a direct analysis of stability cannot be conducted by coupling our model to any equivalent queueing network model.
Despite the difference between our model and traditional queueing networks, we use some of the technical ideas developed to study networks of queues (\cite{baccelli1995saturation}), such as the monotone coupling and saturation rules, to analyze our system.




\subsection{Preliminaries on Infinite DAGs}
\label{sec:definitions}
In subsection, we provide definitions and key properties of infinite DAGs.
These will be used to describe the blockchain DAG later in the paper, where the vertices will be blocks and the (directed) edges will be references.


\begin{definition}
  A \emph{maximal path} from a vertex $i$ in a DAG is any path beginning at $i$ and ending at some vertex $j$ such that no edges originate at $j$.
\end{definition}
Note that maximal paths exist for all vertices in finite DAGs; but in general this need not be true for infinite DAGs.
In general, a vertex may have more than one maximal path. 

Halin \cite{halin1964unendliche} introduces the concept of \emph{ends} in infinite graphs, which we re-state for infinite DAGs below.
\begin{definition}
  An \emph{infinite path} in a DAG $G(V, E)$ is a sequence of vertices $(v_k)_{k \in \mathbb{N}}$ such that either the directed edge $(v_i, v_{i+1}) \in E$ for all $i \in \mathbb{N}$, or the directed edge $(v_{i+1}, v_i) \in E$ for all $i \in \mathbb{N}$.
\end{definition}
Note that any finite or infinite path in a DAG does not contain any repeated vertices and has either a first or a last vertex $v_1 \in V$.

\begin{definition}[Halin \cite{halin1964unendliche}]
  Two infinite paths $p_1, p_2$ in an infinite DAG are \emph{equivalent} if there exists a third infinite path $p_3$ with $|p_3 \cap p_1| = |p_3 \cap p_2| = \infty$, where the intersection is over vertices.
  \label{def:path_equivalence}
\end{definition}

The equivalence relation in Definition \ref{def:path_equivalence} partitions the set of all infinite paths in a DAG into equivalence classes called \emph{ends}.
\begin{definition}[Halin \cite{halin1964unendliche}]
  For any $n \in \mathbb{N} \cup \{0, \infty\}$, an infinite DAG is \emph{n-ended} if it has $n$ ends, \textit{i.e.}, there are exactly $n$ different equivalence classes of infinite paths. 
  An infinite DAG is \emph{one-ended} if all of its infinite paths are equivalent. If there are no infinite paths in DAG, then it has $0$ ends.
  \label{def:one-ended}
\end{definition}
Note that under Definition \ref{def:one-ended}, all finite DAGs have 0 ends.

We give the following examples to illustrate the number of ends in various graphs.
All of these examples are on the vertex set $\mathbb{N}$.
\begin{enumerate}
    \item There is an edge from vertex $i$ to vertex $1$ for all $i > 1$, and no other edges.
    As there are no infinite paths, this graph is $0$ ends.
    \item There is an edge from vertex $i$ to vertex $i + 1$ for all $i \in \mathbb{N}$.
    Here, any infinite path is necessarily of the form $j, j+1, \ldots$ for some $j \in \mathbb{N}$.
    For any two infinite paths beginning at $j, k \in \mathbb{N}$, the path $\max(j, k), \max(j, k)+1, \ldots$ intersects both paths infinitely often; hence this graph is one-ended.
    \item There is an edge from vertex $1$ to vertex $2$.
    In addition, there is an edge from vertex $i$ to vertex $i + 2$.
    The following infinite paths are not equivalent: $1, 3, 5, \ldots$ and $2, 4, 6, \ldots$.
    Thus this graph is $2$-ended.
\end{enumerate}
Further work on ends in random graphs can be found in \cite{aldous2007processes, baccelli2018eternal, baccelli2018renewal}.

\begin{definition}
  A DAG is \emph{locally finite} if each vertex has finite in-degree and finite out-degree.
  \label{def:locally-finite}
\end{definition}

Note that any finite DAG is also locally finite; however an infinite DAG need not be locally finite.

\begin{proposition}
  Suppose a block $i \in \mathbb{N}$ arrives to the system at time $t_i$ and chooses its neighbors using a fixed edge selection policy.
  Then every maximal path from $i$ in $G(t)$ ends at block 0 for all $t \geq t_i$.
  \label{prop:maximal_path}
\end{proposition}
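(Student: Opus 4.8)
The plan is to prove the statement by strong induction on the block index $i$, using the single structural invariant that every reference in $G(t)$ points from a block to one of \emph{strictly smaller} index. First I would record this invariant: when block $i$ is mined at peer $p$ at time $t_i$, its out-neighbourhood is chosen as $\mathcal{O}_i \subseteq B_p(t_i^-) \setminus \{i\} \subseteq \{0,1,\dots,i-1\}$ and is fixed for all later times. Hence for every directed edge $(v, w)$ of $G(t)$ we have $w < v$, and moreover each $w \in \mathcal{O}_v$ arrives strictly before $v$, so $t_w < t_v \le t$ whenever $t \ge t_v$. In particular the entire edge structure emanating from $i$ and from all of its descendants is frozen once $t \ge t_i$, so it suffices to fix such a $t$ and argue on the static DAG $G(t)$.

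For the base case $i = 0$, block $0$ is the genesis vertex and has no outgoing references, so the only maximal path from $0$ is the trivial one-vertex path, which ends at $0$. For the inductive step, suppose the claim holds for every block $j < i$, and take any $i \ge 1$ and $t \ge t_i$. Because the policy is a longest chain policy, $\mathcal{O}_i \cap \mathcal{L}_{p}(t_i) \ne \emptyset$, so $\mathcal{O}_i$ is non-empty and block $i$ is \emph{not} itself the endpoint of a maximal path. Therefore any maximal path from $i$ begins with a first step $i \to j$ to some $j \in \mathcal{O}_i$, followed by a path all of whose remaining vertices lie in $\{0,\dots,j\}$. That remaining sub-path is itself a maximal path from $j$ in $G(t)$: it starts at $j$, follows directed edges of $G(t)$, and ends at a vertex from which no edge originates. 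Since $j < i$ and $t \ge t_i > t_j$, the induction hypothesis applies to it, so the sub-path ends at block $0$; hence so does the whole maximal path from $i$.

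Two small points make the induction airtight, and they constitute the only real obstacle. The first is the guarantee $\mathcal{O}_i \neq \emptyset$ for all $i \ge 1$: this is exactly what the longest chain assumption ($\mathcal{O}_i \cap \mathcal{L}_{p}(t_i) \ne \emptyset$) buys us, and it is what forces every maximal path to terminate \emph{at $0$} rather than at some other sink. The second is that a suffix of a maximal path is a maximal path from its second vertex, which follows immediately from the definition of a maximal path together with the strict index-decrease along edges (so no vertex is revisited and the tail is a genuine path). I expect no further difficulty: the strict index-decrease obviates any explicit appeal to well-foundedness, since the strong induction over $\mathbb{N}$ simultaneously yields both that each maximal path from $i$ is finite and that it terminates at block $0$.
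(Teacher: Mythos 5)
Your proof is correct and follows essentially the same route as the paper's: strong induction on the block index, using the facts that every reference points to a strictly smaller index and that the tail of a maximal path from $i$ is a maximal path from some $j \in \mathcal{O}_i$. Your write-up is somewhat more careful than the paper's (explicitly justifying $\mathcal{O}_i \neq \emptyset$ and the suffix-of-a-maximal-path step), but there is no substantive difference in approach.
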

The proof is given in Appendix \ref{prf:maximal_path}.
   

\begin{definition}
  Under the tree policy, we define the \emph{distinguished path} of any finite DAG $G$ to be the longest maximal path. 
  If the longest maximal path is not unique, then the distinguished path is the longest maximal path beginning at the vertex having the least index.
  \label{def:distinguished_path}
\end{definition}
Under the tree policy, the distinguished path in $G_p(t)$, for any peer $p$ and time $t \geq 0$, is the longest maximal path beginning from the vertex $i := \inf \{b \in \mathcal{L}_p(t)\}$.

\section{Asymptotic Properties of $G(t)$}
\label{sec:structural-properties}
Recall that the goal of a blockchain system is such that a set of $N$ anonymous peers can agree on an ordered set of events without asking each other for local state information.
A natural requirement is that the $N$ peers should be able to agree on an infinite set of events if given an infinite time horizon.
In this section, we take the limit as $t \to \infty$ in order to determine precisely a set of blocks which are agreed upon by all peers -- we call these blocks \emph{confirmed}.
We show in Lemma \ref{lem:infinitely-confirmed} that when there are infinitely many confirmed blocks, the subgraph of confirmed blocks is one-ended.
The main result in this section, stated in Lemma \ref{lem:one-ended-confirmed}, shows that one-endedness of the blockchain DAG in the limit as $t \to \infty$ guarantees the existence of infinitely many confirmed blocks.

\subsection{Limiting Blockchain DAG}
Our analyses in this paper involve studying an infinite DAG which is constructed by iteratively adding a single vertex to a finite DAG.
We denote by the \emph{limiting blockchain DAG} $G(\infty) := \bigcup_{t \geq 0} G(t) = \bigcup_{t \geq 0} \bigcup_{p=1}^N G_p(t)$. 
The DAG $G(\infty)$ is infinite, as its vertex set consists of all blocks, i.e., the vertex set of $G(\infty)$ is $\mathbb{N}$. 
Observe from the definition of $G(t)$, that the map $t \rightarrow G(t)$ is monotone non-decreasing. 
This follows as the outgoing edges for a block are never changed after it is created at the instant of block arrival, and blocks (vertices) are never removed. 
Notice that for all times $0 \leq t < \infty$, $G(t)$ is a almost surely finite, as almost surely, in any finite interval of time, only finitely many blocks arrive to the system. 
Notice that almost surely, for all times $0 \leq t < \infty$, $G(t)$ is finite, as only finitely many blocks arrive to the system before any finite $t$.
Thus, $G(\infty) := \bigcup_{t \geq 0} G(t)$ is well-defined as it can almost surely be expressed as a countable union of DAGs. In Appendix \ref{app:technical-dag}, we give additional technical details and establish that $G(\infty)$ as an appropriate \emph{limit} of a sequence of finite graphs, and thus we call it the \emph{limiting blockchain DAG}.


\subsection{Confirmed Blocks in $G(\infty)$}

In this subsection give a precise definition of a \emph{confirmed block}.
In the sequel we identify the relationship between confirmed blocks in $G(\infty)$ and identify a sufficient condition such that infinitely many confirmed blocks exist.

\begin{definition}
  A block $b$ in $G(\infty)$ is \emph{confirmed} if all but finitely many blocks of index greater than $b$ have a path to $b$ in $G(\infty)$.
  \label{def:confirmed}
\end{definition}

Observe that the Definition \ref{def:confirmed} is an asymptotic property; namely it can only be verified in the limit as $t\to\infty$ and not at any finite time.

\begin{definition}
  A peer $p$ \emph{trusts} a block $b$ if there exists a time $t$ and a block $b'$, such that $b'$ arrives to peer $p$ at time time $t$ and is connected by a directed path to $b$ in $G_p(t)$.
  \label{def:trust}
\end{definition}

Note that the notion of trust in Definition \ref{def:trust} is one of distributed agreement.
This is motivated from the fact that building on an existing block requires that a peer has verified that block's content.

\begin{proposition}
  If $b$ is a confirmed block, then all peers in $H$ trust the block $b$.
  \label{prop:trust}
\end{proposition}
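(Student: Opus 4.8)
The plan is to fix an arbitrary peer $p$ and exhibit a block mined at $p$ whose references trace a directed path all the way down to $b$ inside $p$'s own DAG, thereby establishing trust in the sense of Definition~\ref{def:trust}. The engine of the argument is a \emph{reference-closure invariant}: for every peer $p$ and every time $t$, the block set $B_p(t)$ is closed under the taking of references, i.e.\ if $c \in B_p(t)$ then $\mathcal{O}_c \subseteq B_p(t)$, and moreover $G_p(t)$ is exactly the sub-DAG of $G(\infty)$ induced on $B_p(t)$ (recall that the outgoing edges of a block are fixed once and for all at its arrival, so $G(\infty)$ has an edge $c \to d$ precisely when $d \in \mathcal{O}_c$). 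Granting this invariant, whenever $c \in B_p(t)$ every block reachable from $c$ by a directed path in $G(\infty)$ already lies in $B_p(t)$, and every edge along such a path lies in $E_p(t)$; thus directed reachability in $G(\infty)$ from any vertex of $B_p(t)$ is faithfully reproduced inside $G_p(t)$.

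First I would prove the invariant by induction over the (almost surely discrete) sequence of events --- block arrivals and peer-to-peer communications --- that alter the block sets. The base case is immediate, since $B_p(0) = \{0\}$ and block $0$ has no outgoing references. When a block $i$ is mined at $p$ at time $t$, the reference-selection rule gives $\mathcal{O}_i \subseteq B_p(t^-) \subseteq B_p(t)$ and the edges of $i$ are added to $E_p(t)$, while previously present blocks retain the invariant because block sets are monotone non-decreasing. The communication step is where the First-Come First-Served rule is essential: when $p$ transmits to $q$ the lowest-numbered block $j \in B_p(t^-) \setminus B_q(t^-)$, any $k \in \mathcal{O}_j$ satisfies $k < j$ and, by the inductive hypothesis at $p$, $k \in B_p(t^-)$; were $k \notin B_q(t^-)$, then $k$ would be a strictly lower-numbered element of $B_p(t^-) \setminus B_q(t^-)$, contradicting the minimality of $j$. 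Hence $\mathcal{O}_j \subseteq B_q(t^-) \subseteq B_q(t)$, which is exactly the observation recorded in the model description, and the invariant is preserved at $q$.

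With the invariant in hand, fix a peer $p$. Because $b$ is confirmed, Definition~\ref{def:confirmed} furnishes a finite threshold $M \geq b$ such that every block of index greater than $M$ has a directed path to $b$ in $G(\infty)$. Since the arrival process is stationary and ergodic with positive intensity and each peer is marked infinitely often, peer $p$ almost surely mines infinitely many blocks, so some block $b'$ of index greater than $M$ arrives to $p$ at a time $t$. Then $b'$ admits a directed path $b' = v_0 \to v_1 \to \cdots \to v_m = b$ in $G(\infty)$; applying the reachability consequence of the invariant to $b' \in B_p(t)$ shows that this entire path, vertices and edges alike, lies in $G_p(t)$. Thus $b'$ is connected to $b$ by a directed path in $G_p(t)$, so $p$ trusts $b$, and since $p$ was arbitrary, all peers trust $b$.

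I expect the main obstacle to be the communication case of the invariant, where correctness hinges on peers always forwarding the \emph{lowest-numbered} missing block: this in-order dissemination is precisely what prevents a block from entering a peer's view before all of its ancestors, and any weakening of the FCFS rule would break the closure property. A secondary point requiring care is justifying that every peer almost surely mines blocks of arbitrarily large index; I would derive this from the assumptions on the marked arrival process rather than from any stability hypothesis, since this proposition is intended to hold prior to establishing stability.
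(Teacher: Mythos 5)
Your proof is correct and follows essentially the same route as the paper's: since $b$ is confirmed, all but finitely many higher-indexed blocks reach $b$ in $G(\infty)$, and since each peer mines infinitely many blocks, each peer eventually mines one with such a path, which is then visible in that peer's own DAG. The only difference is one of explicitness — your reference-closure invariant carefully justifies why the path in $G(\infty)$ already lies in $G_p(t)$, a step the paper's proof leaves implicit, relying on the observation (recorded in the model section) that FCFS dissemination guarantees $\mathcal{O}_j \subseteq B_q(t)$ whenever $j$ is delivered to $q$.
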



The proof is given in Appendix \ref{prf:trust}. 

The Bitcoin whitepaper suggests that as the number of blocks with a directed path to any particular block $b$ increases, a peer can be increasingly confident that all other peers are aware of and have added blocks that reference block $b$.
Definitions \ref{def:confirmed} and \ref{def:trust} and Proposition \ref{prop:trust} aim to capture this notion by looking at the asymptotic structure of the blockchain DAG.

\subsection{Confirmed Blocks and One-Endedness}
\label{sec:one-ended-infinite_confirm}

Observe that a natural requirement on the performance of blockchain systems is that there are infinitely many confirmed blocks. 
Otherwise, the peers consume infinite bandwidth in the limit as $t \to \infty$, yet they only confirm finitely many blocks.


For the rest of this section we assume that the limiting DAG $G(\infty)$ is locally finite. 
Recall that a DAG is locally finite if each vertex has finite degree.
This is done without loss of generality, as we show in Section \ref{sec:one-ended-policies} that the two most natural polices of interest, the tree and throughput-optimal policies, lead to locally finite limiting DAGs.

\begin{lemma}
  Denote by $\widehat{G}(\infty)$ the subgraph of $G(\infty)$ consisting of all confirmed blocks.
  If $G(\infty)$ is locally finite and there are infinitely many confirmed blocks, then $\widehat{G}(\infty)$ is one-ended.
  \label{lem:infinitely-confirmed}
\end{lemma}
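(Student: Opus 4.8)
The plan is to exploit three facts: references always point to strictly smaller indices, so following references only decreases the block index; being confirmed is closed under following a reference; and $\widehat{G}(\infty)$, as a subgraph of the locally finite $G(\infty)$, is itself locally finite. First I would record the shape of an infinite path (ray) in $\widehat{G}(\infty)$. Since $\mathcal{O}_i \subseteq \{1,\dots,i-1\}$, every reference edge $(i,j)$ satisfies $i>j$. Consequently the orientation $(v_i,v_{i+1})\in E$ in the paths of Definition \ref{def:path_equivalence} would force a strictly decreasing infinite sequence of nonnegative integers, which is impossible; hence every ray has the form $v_1<v_2<\cdots$ with a direct reference $v_{k+1}\to v_k$, and along such a ray $v_{k+1}$ reaches every earlier $v_\ell$ by following the path itself.

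Next I would establish a downward-closure property of confirmation: if $b$ is confirmed and $b\to x$ is an edge, then $x$ is confirmed, because every block with a path to $b$ also has one to $x$, and only the finitely many indices in $(x,b]$ can be lost. Iterating along any path shows that whenever a confirmed block reaches another block, the realizing path lies entirely in $\widehat{G}(\infty)$. By Proposition \ref{prop:maximal_path} every block reaches $0$, so $0$ is confirmed (Definition \ref{def:confirmed}), and every confirmed block is joined to $0$ by a path inside $\widehat{G}(\infty)$. Thus $\widehat{G}(\infty)$ is connected; being also infinite and locally finite, it contains a ray (König's infinity lemma), so it is not $0$-ended.

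The heart of the proof is to show that any two rays $P=(p_k)_k$ and $Q=(q_j)_j$ are equivalent in the sense of Definition \ref{def:path_equivalence}. I would argue this through the finite-separator description of ends, which is valid for locally finite graphs: it suffices to show that for every finite vertex set $S$, the tails of $P$ and $Q$ lie in a common connected component of $\widehat{G}(\infty)\setminus S$. Given finite $S$, choose a confirmed block $c$ with $c>\max S$, which exists because there are infinitely many confirmed blocks. Since $c$ is confirmed, all but finitely many blocks reach $c$, so all sufficiently large $p_k$ and $q_j$ reach $c$. Each witnessing path decreases the index monotonically from $p_k$ (respectively $q_j$) down to $c$, so every vertex on it has index $\ge c>\max S$ and therefore avoids $S$, while by downward closure it stays inside $\widehat{G}(\infty)$. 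Hence the tails of both rays connect to $c$ within $\widehat{G}(\infty)\setminus S$ and share a component; as $S$ is arbitrary, $P$ and $Q$ determine the same end. Combined with the existence of a ray, this yields exactly one end, i.e.\ one-endedness in the sense of Definition \ref{def:one-ended}.

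The main obstacles I anticipate are two bookkeeping matters rather than deep difficulties. The first is ensuring that the paths witnessing reachability can simultaneously be taken inside the confirmed subgraph (handled by downward closure) and be made to avoid the separator $S$ (handled by index-monotonicity, which forces every intermediate index to exceed $\max S$). The second is reconciling Halin's ``third path'' formulation of equivalence in Definition \ref{def:path_equivalence} with the finite-separator notion of ends invoked above; for this I would cite (or briefly reprove) the standard fact that the two notions coincide for locally finite graphs, or, if a self-contained argument is preferred, explicitly weave a third ray through the common confirmed blocks $c$ obtained for an increasing exhaustion of finite separators, thereby producing a single path meeting both $P$ and $Q$ infinitely often.
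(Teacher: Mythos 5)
Your proof is correct in substance and is considerably more careful than the paper's own argument, which it genuinely improves upon rather than merely paraphrases. The paper's proof has two steps: it rules out the case that all components of $\widehat{G}(\infty)$ are finite, and then asserts that two rays of confirmed blocks meeting only finitely often would leave some confirmed block ``missing references from infinitely many other blocks'' --- a step that is left essentially unargued (two vertex-disjoint rays of confirmed blocks are perfectly consistent with every block being reachable from all later blocks, so the contradiction is not with confirmation directly but with inequivalence of the rays). Your route supplies exactly the missing mechanism: downward closure of confirmation along references, index-monotonicity of directed paths (so that a witnessing path from a high-index block down to a confirmed block $c>\max S$ automatically avoids any finite separator $S$ and stays inside $\widehat{G}(\infty)$), and then either the finite-separator characterization of ends or the explicit weaving of a third ray through an increasing sequence of common confirmed waypoints. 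The weaving does work with Definition~\ref{def:path_equivalence} as stated: each descending witness path, read in reverse, is consistently oriented, consecutive segments meet only at their shared waypoint because indices decrease monotonically along each segment, and the concatenation hits both rays infinitely often. What your approach buys is a complete proof; what the paper's buys is brevity.

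One small repair: König's infinity lemma applied to the connected, infinite, locally finite graph $\widehat{G}(\infty)$ yields an \emph{undirected} ray, whereas Definitions~\ref{def:path_equivalence} and~\ref{def:one-ended} count only consistently oriented infinite paths, so an undirected ray does not by itself rule out the $0$-ended case. The fix is already implicit in your last paragraph: since there are infinitely many confirmed blocks and each is reachable from all but finitely many blocks, one can choose confirmed blocks $c_{i_1}<c_{i_2}<\cdots$ with a directed path from $c_{i_{k+1}}$ to $c_{i_k}$ for every $k$; by downward closure and index-monotonicity these paths lie in $\widehat{G}(\infty)$ and concatenate into a consistently oriented ray. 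With that substitution the argument is complete.
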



The proof is given in Appendix \ref{prf:infinitely-confirmed}. 

Lemma \ref{lem:infinitely-confirmed} shows that if there are infinitely many confirmed blocks, then $G(\infty)$ has at least one end. Moreover, the (infinite) subgraph of all confirmed blocks in $G(\infty)$ is one ended.
In the following lemma we establish that the one-endedness of $G(\infty)$ is a sufficient condition for the existence of infinitely many confirmed blocks.

\begin{lemma}
  If $G(\infty)$ is one-ended and locally finite, then the set of confirmed blocks is infinite.
  \label{lem:one-ended-confirmed}
\end{lemma}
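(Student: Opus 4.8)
The plan is to translate \emph{confirmed} into a reachability statement, use one-endedness together with local finiteness to expose a nested family of ``infinite components'' of $G(\infty)$, and then extract from each component a bottleneck block that all but finitely many blocks reference; such a block is confirmed, and producing one for arbitrarily large index gives infinitely many.

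First I would record the two features of $G(\infty)$ that make its directed structure tractable. Every reference points from a block to one of strictly smaller index, so every directed path is strictly decreasing and, by Proposition \ref{prop:maximal_path}, every maximal directed path terminates at block $0$; in particular the undirected graph underlying $G(\infty)$ is connected. Consequently a forward-monotone infinite path is impossible (its indices would decrease while staying bounded below by $0$), so every infinite path is a \emph{backward}-monotone, i.e.\ strictly increasing, sequence $v_1<v_2<\cdots$ of blocks joined by references. With this, Definition \ref{def:confirmed} reads: a block $b$ is confirmed if and only if the set of blocks that reach $b$ (have a directed path to $b$) is cofinite; indeed the blocks of index $<b$ can never reach $b$ and are finite in number, so the content of the definition is precisely that only finitely many blocks of index $>b$ fail to reach $b$.

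Next I would convert one-endedness into a cut statement. For any finite $S\subset\mathbb{N}$, since $G(\infty)$ is connected and locally finite (Definition \ref{def:locally-finite}), every component of $G(\infty)\setminus S$ is adjacent to $S$ and only finitely many edges leave the finite set $S$, so $G(\infty)\setminus S$ has finitely many components. Because every infinite path is an increasing chain, and hence an undirected ray, one-endedness (Definition \ref{def:one-ended}) forces exactly one of these components to be infinite. Applying this to $S=\{0,1,\dots,n\}$ yields, for each $n$, a unique infinite component $C_n$ on the blocks of index $>n$, with finite complement $F_n:=\{i>n\}\setminus C_n$, and with $C_{n+1}\subseteq C_n$. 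Local finiteness also makes the downward frontier finite: there are no edges between $C_n$ and $F_n$, so every directed path out of a block $u\in C_n$ stays inside $C_n$ until, in a single step, it drops to a block of some finite frontier set $\partial_n\subseteq\{0,\dots,n\}$.

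The decisive step is then to produce, for arbitrarily large $n$, a single block through which all but finitely many blocks of $C_n$ route. Under the tree policy this is transparent: $C_n$ is a tail of the spine together with the finite bushes hanging above level $n$, its least-indexed block $b$ is a spine vertex, and every block of $C_n$ reaches $b$; since $F_n$ and $\{0,\dots,b-1\}$ are finite, the set of blocks reaching $b$ contains the cofinite set $C_n$, so $b$ is confirmed, and letting $n\to\infty$ produces infinitely many distinct such $b$. The hard part, and the main obstacle, is the general locally finite case: I must upgrade the qualitative one-endedness into this quantitative ``single bottleneck'' claim, i.e.\ rule out a persistent bifurcation of the downward routes of $C_n$. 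The heart of the matter is to show that if no frontier block were reached by cofinitely many blocks of $C_n$, then at least two frontier blocks would each be reached by infinitely many blocks along chains that stay separated inside $C_n$; feeding these into K\"onig's lemma on the locally finite DAG would yield two increasing infinite paths that cannot be bridged by any third increasing path, hence two inequivalent ends, contradicting Definition \ref{def:one-ended}. Granting this, for each $n$ some bottleneck block $b_n$ (of index $>n$) is reached by all but finitely many blocks, so $b_n$ is confirmed, and these are infinitely many distinct confirmed blocks. A secondary technical point I would need to settle is that the monotone-path notion of ends in Definition \ref{def:one-ended} coincides here with the finite-cut/infinite-component characterization used above; this holds precisely because in $G(\infty)$ every ray can be taken increasing and every downward path ends at $0$.
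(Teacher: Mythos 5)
Your route is genuinely different from the paper's. The paper argues by contradiction on the global count: it takes the confirmed block of greatest index $b'$, places it on an infinite path $p$, picks the first non-confirmed block $b$ on $p$ above $b'$, extracts (via local finiteness) an infinite path among the infinitely many blocks with no path to $b$, and observes that this path and $p$ cannot be equivalent. Your decomposition into nested infinite components $C_n$ of $G(\infty)\setminus\{0,\dots,n\}$ with finite frontiers, followed by pigeonhole on the frontier, is more constructive: it exhibits the confirmed blocks rather than refuting their finiteness. The step you flag as the ``hard part'' is in fact the same mechanism the paper uses — two infinite increasing paths whose vertex sets are separated by reachability to a fixed block cannot be bridged by a third increasing path, since any bridge would create a directed path from a non-reaching block to a reaching one. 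Your sketch of it is sound, provided the K\"onig extraction is carried out \emph{inside} each reachability class (the sub-DAG of blocks reaching $f$, and the sub-DAG of blocks reaching $f'$ but not $f$, each of which is closed under following references until the index drops below the relevant threshold, so the extracted rays really do stay separated).

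There is, however, one concrete gap at the end. Your bottleneck is a \emph{frontier} block $f_n\in\partial_n\subseteq\{0,\dots,n\}$, yet your conclusion asserts a confirmed block $b_n$ of index $>n$; as written, nothing prevents the same small-index block (e.g.\ block $0$, which is always confirmed) from serving as $f_n$ for every $n$, in which case you have produced only one confirmed block, not infinitely many. Two repairs are available. Either run the pigeonhole argument on the \emph{inner} boundary of $C_n$ (the blocks of $C_n$ with an edge into $\{0,\dots,n\}$), which is finite because $\partial_n$ is finite and each of its members has finite in-degree by Definition~\ref{def:locally-finite}, and which every block of $C_n$ must route through since directed paths are index-decreasing; the resulting bottleneck then genuinely has index $>n$. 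Or keep the frontier bottleneck and add the observation that finite in-degree forces each fixed block $j$ to lie in $\partial_n$ for only finitely many $n$, so $\min\partial_n\to\infty$ and the confirmed blocks $f_n$ take infinitely many distinct values. With either fix your argument closes.
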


  

The proof is given in Appendix \ref{prf:one-ended-confirmed}. 

The results in Lemmas \ref{lem:infinitely-confirmed} and \ref{lem:one-ended-confirmed} are asymptotic guarantees in the limit as $t \to \infty$; like stability and confirmation, these results cannot be determined at any finite time $t$.
Lemmas \ref{lem:infinitely-confirmed} and \ref{lem:one-ended-confirmed} show that having a one-ended limiting DAG $G(\infty)$ is a desirable property in a blockchain system, as it ensures that the number of confirmed blocks (evaluated in the limit as $t\to\infty$) is infinite.
In Section \ref{sec:one-ended-policies}, we show that under the assumption of stability, both the tree and throughput-optimal policies produce one-ended limiting DAGs.

\section{One-Endedness Under the Tree and Throughput-Optimal Policies}
\label{sec:one-ended-policies}

In this section, we show that stable blockchain systems using the tree and throughput-optimal policies produce one-ended limiting blockchain DAGs.
Recall from Definitions \ref{def:consistent} and \ref{def:stability} that a blockchain system is stable if there exists an infinite sequence of times of consistency $(C_k)_{k \in \mathbb{N}}$ such that $G(C_j) \neq G(C_k)$ if $j \neq k$.

We begin by showing that stable blockchains constructed using the tree and throughput-optimal policies have locally finite limiting DAGs.
\begin{lemma}
  Consider a stable blockchain system.
  If its DAG construction is as per the tree or throughput-optimal policy, its limiting DAG $G(\infty)$ is locally finite.
  \label{lem:locally-finite}
\end{lemma}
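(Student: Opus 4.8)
The plan is to verify the two defining conditions of local finiteness (Definition \ref{def:locally-finite})—finite out-degree and finite in-degree at every block—separately, and to reduce both to the existence of the infinite, distinct sequence of consistency times $(C_k)_{k\in\mathbb{N}}$ (which we list in increasing order) guaranteed by stability (Definition \ref{def:stability}). Finite out-degree is immediate for both policies: under the tree policy every block has out-degree exactly $1$ by construction, and under the throughput-optimal policy a block $i$ arriving at peer $p$ references every leaf of $G_p(t_i^-)$, of which there are at most $|B_p(t_i^-)|\le i-1<\infty$. The entire difficulty therefore lies in bounding the in-degree of an arbitrary block $b$, i.e. the number of distinct blocks $i$ that ever select $b$ as an outgoing reference.

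The unifying idea is to show that stability forces $b$ to become \emph{permanently ineligible} to be referenced: I will exhibit a (random, almost surely finite) consistency time $C_k$ after which no arriving block can reference $b$ at any peer. Since almost surely only finitely many blocks arrive in the finite interval $[0,C_k]$, this bounds the in-degree of $b$. Two structural facts drive the argument, both consequences of the model: (i) at a consistency time all peers share the identical DAG, hence the identical edge set; and (ii) blocks and references are never deleted, so $t\mapsto G_p(t)$ is monotone for each $p$. In particular, immediately after a consistency time and before the next arrival, every difference set $B_p\setminus B_q$ is empty, so communications transfer nothing and each $G_p$ stays frozen at the common DAG until the next arrival.

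For the tree policy, a block $b$ acquires an incoming edge from $i$ only when $b=\inf\mathcal{L}_{p}(t_i^-)$, that is, only while $b$ sits at the maximum hop-distance from $0$ in $G_{p}$ (cf. \eqref{eqn:longest_chain_policy} and Definition \ref{def:distinguished_path}); note that $d(b,0)$ is fixed at $b$'s creation, since the tree policy gives $b$ a single, permanent parent. I will first show that the common maximum depth $D_k$ at the consistency times is strictly increasing: by fact (ii) the first block arriving after $C_k$ sees $G_p=G(C_k)$ and, by the longest-chain property, attaches at depth $D_k+1$, so $D_{k+1}\ge D_k+1$ because $G(C_{k+1})\neq G(C_k)$ forces at least one such arrival. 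Hence $D_k\to\infty$, and choosing $k$ with $D_k>d(b,0)$, every peer thereafter retains a block strictly deeper than $b$, so $b\notin\mathcal{L}_{p}(t)$ for all $p$ and all $t\ge C_k$; thus $b$ is referenced only by blocks arriving in $[0,C_k]$.

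For the throughput-optimal policy, $b$ can be referenced by $i$ only while $b$ is still a leaf (a block of in-degree zero) of $G_p(t_i^-)$, and the moment $b$ is referenced at a peer it acquires an incoming edge there and—by (ii)—ceases to be a leaf at that peer forever. If $b$ is never referenced its in-degree is $0$; otherwise, letting $\tau$ be the first time $b$ is referenced and taking the first consistency time $C_k>\tau$, fact (i) gives every peer the edge into $b$ at $C_k$, so $b$ is a leaf at no peer from $C_k$ onward and can be referenced only by the finitely many blocks that arrive in $[0,C_k]$. In both cases the in-degree of $b$ is almost surely finite, completing the proof. I expect the main obstacle to be the tree-policy step establishing strict growth of $D_k$ across consistency times; the delicate point is justifying both that no arrival is \emph{missed} between consistency times (so that the longest-chain attachment actually raises the global maximum depth) and that post-consistency gossip cannot alter any $G_p$ before the next arrival, both of which rest on facts (i)--(ii) above.
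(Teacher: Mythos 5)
Your proof is correct, but it takes a longer and genuinely different route from the paper's. The paper handles out-degree exactly as you do, but for in-degree it makes a single deterministic observation: under both policies a new block only ever references blocks that are currently leaves at the mining peer (for the tree policy this holds because any block in $\mathcal{L}_p(t)$ must be a leaf, else a deeper block would exist), and the moment peer $p$ mines a block referencing $b$, block $b$ ceases to be a leaf (respectively, ceases to be deepest) at $p$ forever. Hence each of the $N$ peers contributes at most one incoming edge to $b$, giving the uniform bound $\deg^{-}(b)\le N$ --- with no use of the stability hypothesis at all. You actually have this ``referenced only while a leaf'' observation in your throughput-optimal argument, but instead of counting per peer you route both policies through the consistency-time machinery: strict growth of the common depth $D_k$ for the tree policy, and global propagation of the first incoming edge for the throughput-optimal policy. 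Your argument is sound (the strict-growth step is essentially the same induction the paper later uses to prove Theorem \ref{thm:tree-one-ended}), but it only yields an almost-surely finite in-degree rather than the deterministic bound of $N$, and it consumes the stability assumption where none is needed. The trade-off is that your approach previews structural facts (monotone depth at consistency times, permanent ineligibility of old blocks) that are reused later in the paper, whereas the paper's proof is self-contained and quantitative.
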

  The proof is given in Appendix \ref{prf:locally-finite}.
  

\subsection{Tree Policy}
\begin{theorem}
  Suppose peers add blocks according to the tree policy. 
  If a blockchain system is stable, then its limiting DAG $G(\infty)$ is one-ended.
  \label{thm:tree-one-ended}
\end{theorem}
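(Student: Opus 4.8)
The plan is to extract from the stability hypothesis a single ray that absorbs all of the tree's infinite behaviour. First observe that under the tree policy $G(\infty)$ is an infinite tree: every block other than $0$ has out-degree exactly one, and by Proposition~\ref{prop:maximal_path} its unique outgoing edge points toward $0$, i.e.\ to a strictly smaller index. Consequently an infinite path of the first type (following edges forward) would be a strictly decreasing sequence of naturals, which is impossible, so every infinite path is an \emph{upward ray} $r_1 < r_2 < \cdots$ in which $r_i$ is the parent (outgoing-edge target) of $r_{i+1}$. By Lemma~\ref{lem:locally-finite}, $G(\infty)$ is locally finite. Proving one-endedness thus reduces to showing that any two such rays share a common tail.

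The engine of the proof is a claim I would establish first. Fix a time of consistency $C_k$; let $d_k := \max_{b \in B(C_k)} d(b,0)$ and let $\beta_k := \inf\{ b \in \mathcal{L}_p(C_k)\}$ be the least-indexed block at depth $d_k$ (this $\mathcal{L}_p(C_k)$ is common to all $p$ by consistency). \textbf{Claim:} every block created after $C_k$ is a descendant of $\beta_k$; equivalently, all but finitely many blocks of $G(\infty)$ lie in the subtree rooted at $\beta_k$. The structural facts are: at $C_k$ every peer possesses $\beta_k$, and blocks are never deleted, so every peer permanently has local maximal depth at least $d_k$; hence every block mined after $C_k$ has depth at least $d_k+1$, no new block of depth $\le d_k$ is ever created, and $\beta_k$ remains the globally least-indexed block of depth $d_k$ for all time. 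I would then induct on the creation order of blocks mined after $C_k$. A new block mined by peer $p$ attaches to the least-indexed element of $\mathcal{L}_p$; if $p$'s current depth equals $d_k$ this element is $\beta_k$ (it lies in $\mathcal{L}_p$ and is globally least-indexed at that depth), while if $p$'s depth exceeds $d_k$ the chosen parent was itself created after $C_k$, hence a descendant of $\beta_k$ by the inductive hypothesis. Either way the new block descends from $\beta_k$.

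Granting the Claim, one-endedness follows quickly. Since $G(C_{k+1}) \neq G(C_k)$, at least one block is created between consecutive consistency times, and it has depth at least $d_k+1$; hence $d_{k+1} \ge d_k+1$ and $d_k \to \infty$. Because $\beta_{k+1}$ has depth $d_{k+1} > d_k$ it cannot lie in $G(C_k)$, so it was created after $C_k$ and is a descendant of $\beta_k$; thus $\beta_1, \beta_2, \ldots$ occupy a single strictly ascending ancestral chain and determine a ray $R^\ast$ through $0$. Now let $R = (r_1, r_2, \ldots)$ be any infinite ray. For every $k$ with $d_k \ge d(r_1,0)$, the Claim places all but finitely many vertices of $R$ inside the subtree of $\beta_k$, so $\beta_k$ is an ancestor of some $r_i$; as the ancestors of $r_i$ of depth at least $d(r_1,0)$ are precisely $r_1,\ldots,r_i$, this forces $\beta_k \in R$. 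Hence $R$ contains $\beta_k$ for all large $k$, so $R$ and $R^\ast$ meet in infinitely many vertices and are equivalent in the sense of Definition~\ref{def:path_equivalence}. As $R$ was arbitrary, $G(\infty)$ is one-ended.

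The main obstacle is the Claim, and within it the fact that communication delays allow a lagging peer to keep extending a branch other than $\beta_k$'s, so that a priori two deep branches could both grow forever and produce two ends. The resolution, and the only place where stability is used, is that a time of consistency forces \emph{every} peer to hold $\beta_k$ at the maximal depth simultaneously and permanently, after which the least-index tie-breaking rule of the tree policy channels all subsequent depth growth through $\beta_k$. Making the induction on creation order airtight — in particular the bookkeeping that $\beta_k$ remains globally least-indexed at depth $d_k$ and that no shallower block is ever reborn after $C_k$ — is the crux of the argument.
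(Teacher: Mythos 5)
Your argument is correct and takes essentially the same route as the paper's proof: your $\beta_k$ is precisely the paper's $v_k$, the start of the distinguished path in $G(C_k)$ (cf.\ Definition~\ref{def:distinguished_path}); your Claim is the content of Lemma~\ref{lem:distinguished_path}, recast in the equivalent form ``every block created after $C_k$ is a descendant of $\beta_k$''; and your final step that every infinite ray contains all but finitely many $\beta_k$ is exactly Lemma~\ref{lem:one_ended}. The only cosmetic difference is that you work directly with upward rays in the tree and argue the absorption step constructively, where the paper passes to the edge-reversed DAG $\widetilde{G}(\infty)$ and argues by contradiction.
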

  The proof is given in Appendix \ref{prf:tree-one-ended}.

\begin{corollary}
  Under the tree policy, if the exogenous arrival and gossip processes together are taken as a Markov process, then $G(\infty)$ is one-ended if the Markov process is positive recurrent.
  \label{cor:markov_dynamics}
\end{corollary}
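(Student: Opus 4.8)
The plan is to reduce the corollary to Theorem~\ref{thm:tree-one-ended} by showing that positive recurrence of the combined arrival-and-gossip Markov process implies stability in the sense of Definition~\ref{def:stability}. Once stability is in hand, Theorem~\ref{thm:tree-one-ended} immediately gives that $G(\infty)$ is one-ended. Thus the entire content of the corollary is the single implication \emph{positive recurrence $\Rightarrow$ stability}, and the proof should be short modulo the care needed to line up the Markov-chain notion of recurrence with the pathwise notion of stability in Definition~\ref{def:stability}.

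First I would make the Markovian state explicit. As observed in Section~\ref{sec:stability-scalability}, stability depends only on the block-communication dynamics and not on the reference-selection policy, so the relevant state need only record, for each block that has not yet reached every peer, the subset of peers that currently possess it. Because communication is FCFS in the block index and there are finitely many peers, this possession configuration is a countable-state descriptor, and under Poisson arrivals with exponential inter-communication clocks its transitions are memoryless, so it is a genuine continuous-time Markov process. The key observation is that the distinguished \emph{empty} configuration, in which no block is undelivered, is exactly the set of states for which $B_p(t)=B(t)$ for all $p$, i.e.\ the times of consistency of Definition~\ref{def:consistent}.

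The main step is then to convert positive recurrence into the sequence $(C_k)_{k\in\mathbb{N}}$ demanded by Definition~\ref{def:stability}. Positive recurrence of the (irreducible) chain guarantees that the empty configuration is visited infinitely often almost surely, with finite mean return time, so the system is consistent at an unbounded sequence of times. To upgrade these to \emph{distinct} DAGs I would use that the arrival process has positive intensity $\lambda$: almost surely infinitely many blocks arrive, and each arrival occurring after the system has emptied strictly enlarges the vertex set of $G$ before the backlog can empty again. Choosing one consistency time out of each inter-arrival interval in which the backlog is empty therefore produces times $C_1<C_2<\cdots$ with $G(C_j)\neq G(C_k)$ for $j\neq k$, which is precisely the definition of stability; applying Theorem~\ref{thm:tree-one-ended} concludes the argument.

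The hard part will be the bookkeeping needed to make ``the empty state'' well defined and to handle the fact that consistency is not instantaneous: once the last undelivered block is delivered, all peers agree until the next arrival, so consistency holds on a whole time interval and many consistency times share the same DAG. I would address this by fixing a canonical representative in each such interval (for instance the consistency time immediately preceding the next arrival), and by verifying that the empty configuration is reachable, so that positive recurrence of the process genuinely forces infinitely many returns to it rather than merely to some other recurrent state. A secondary point to check is that the reduced possession-configuration chain is irreducible on the communicating class containing the empty configuration, so that positive recurrence of the process entails positive recurrence of that specific state.
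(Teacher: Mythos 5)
Your proposal is correct and follows essentially the same route as the paper: the paper's own proof is a two-line observation that a positive recurrent Markov chain returns to each of its states in finite time and that the initial condition is a consistent state, which yields the infinitely many distinct consistency times of Definition~\ref{def:stability} and hence one-endedness via Theorem~\ref{thm:tree-one-ended}. Your write-up is in fact more careful than the paper's, since you make explicit the reduced possession-configuration state (the full DAG-valued process never revisits a state, so some such reduction is needed), identify the empty configuration with consistency, and handle the distinctness of the DAGs at successive consistency times via the positive arrival intensity.
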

  The proof is given in Appendix \ref{prf:markov-dynamics}.


\begin{corollary}
  Suppose all peers in a stable blockchain system use the tree policy.
  A block $b$ is confirmed in $G(\infty)$ iff there exists a time of consistency $C$ such that $b$ is on the distinguished path in $G(C)$.
  \label{cor:distinguished_path}
\end{corollary}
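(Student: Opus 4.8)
The plan is to reduce the whole statement to one structural invariant about the dynamics just after a time of consistency, and then read off both implications from it. First I would set up language: under the tree policy every block has exactly one outgoing reference, so by Proposition \ref{prop:maximal_path} the limiting DAG $G(\infty)$ is a tree rooted at block $0$, and I will speak of the \emph{parent}, \emph{ancestors}, and \emph{descendants} of a block (ancestors being the blocks on the unique path toward $0$). In this language, $b$ is confirmed (Definition \ref{def:confirmed}) exactly when cofinitely many blocks are descendants of $b$, and the distinguished path of $G(C)$ (Definition \ref{def:distinguished_path}) is exactly the set of ancestors of $\ell_C := \inf \mathcal{L}_p(C)$, the least-indexed block of maximal depth (the value $\mathcal{L}_p(C)$ being common to all peers $p$ at a time of consistency). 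I would also record one easy fact for later: confirmed blocks are totally ordered by the ancestor relation, since any two of them share cofinitely many descendants, hence a common descendant $c$, and the unique path from $c$ to $0$ passes through both, forcing comparability.

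The heart of the argument, and the step I expect to be the main obstacle, is the invariant: \emph{at every time of consistency $C$, every block arriving after $C$ is a descendant of $\ell_C$.} I would prove this by strong induction on the arrival index. Two preliminary facts set up the induction. Since the system is consistent at $C$, every peer holds the full depth-$D$ chain ending at $\ell_C$ (with $D$ the maximal depth in $G(C)$) and never discards blocks, so each peer's view has maximal depth at least $D$ for all $t \ge C$; hence any post-$C$ block attaches to a block of depth at least $D$ and so has depth at least $D+1$, and no new depth-$D$ blocks are ever created. Now fix a post-$C$ block $i$ arriving at peer $p$, attaching to $o_i = \inf \mathcal{L}_p(t^-)$. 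If $d(o_i,0) > D$, then $o_i$ was created after $C$, so by the inductive hypothesis it descends from $\ell_C$, and so does $i$. If $d(o_i,0) = D$, then $p$'s deepest set is its set of depth-$D$ blocks, which—because all depth-$D$ blocks already lie in $B(C)$ and $p$ retains all of $B(C)$—equals the common $\mathcal{L}_p(C)$; hence $o_i = \ell_C$ and again $i$ descends from $\ell_C$. The delicate point this resolves is a peer that has not yet received the current deepest blocks and might extend a shallower chain: the least-index tie-break in the tree policy, together with the fact that $\ell_C$ is never lost from any peer's view, pins the attachment point to $\ell_C$.

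Given the invariant, the ``if'' direction is immediate. The block $\ell_C$ has cofinitely many descendants (every post-$C$ block), so $\ell_C$ is confirmed; and any $b$ on the distinguished path of $G(C)$ is an ancestor of $\ell_C$, so the descendants of $\ell_C$ are among the descendants of $b$, whence $b$ is confirmed too.

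For the ``only if'' direction, suppose $b$ is confirmed, so $b$ has infinitely many descendants. Since $G(\infty)$ is locally finite (Lemma \ref{lem:locally-finite}), this descendant subtree contains, by König's lemma, a block $v$ with $d(v,0) > d(b,0)$; let $t_v$ be its finite arrival time. By stability (Definition \ref{def:stability}) there are infinitely many times of consistency, so I may pick one, $C$, with $C > t_v$. Then the maximal depth in $G(C)$ is at least $d(v,0) > d(b,0)$, so $\ell_C$ is a confirmed block (by the invariant) of depth exceeding that of $b$. As confirmed blocks are totally ordered by ancestry and $b$ is the shallower member of the comparable pair $\{b,\ell_C\}$, block $b$ is an ancestor of $\ell_C$ and therefore lies on the distinguished path of $G(C)$, completing the proof.
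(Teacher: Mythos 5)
Your proof is correct, and its skeleton matches the paper's: the paper likewise reduces both directions of the equivalence to the invariant that, after a time of consistency $C$, every newly arriving block descends from the tip $v_C$ of the distinguished path in $G(C)$ (this is Lemma \ref{lem:distinguished_path}, established inside the proof of Theorem \ref{thm:tree-one-ended}). The differences are in how that invariant is proved and how the converse is closed out. The paper's induction runs over \emph{all} event times, including communication events, and must separately track how a peer's distinguished path changes upon receiving a gossiped block; your induction runs over arrival indices only and disposes of the laggard-peer issue with the depth argument (no block of depth at most $D$ is created after $C$, and every depth-$D$ block is already common knowledge at $C$, so the least-index tie-break forces attachment to $\ell_C$), which is cleaner and avoids reasoning about the gossip dynamics entirely. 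For the ``only if'' direction, the paper argues the contrapositive directly from the lemma (a block never on the distinguished path at any consistency time is missed by all subsequent arrivals), whereas you argue forward via the observation that confirmed blocks are totally ordered by ancestry in the tree and then locate a consistency time at which $\ell_C$ is strictly deeper than $b$; that total-ordering fact is correctly justified and is a small addition not present in the paper. Both routes are sound; yours is arguably tidier on the key invariant, while the paper's contrapositive avoids needing the ordering lemma. (Minor remark: the appeal to K\"onig's lemma is unnecessary, since under the tree policy any proper descendant of $b$ automatically has depth exceeding $d(b,0)$.)
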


  
  

The proof is given in Appendix \ref{prf:distinguished-path}. 
  
Corollary \ref{cor:distinguished_path} shows that for stable blockchain DAGs constructed using the Tree policy, a peer only needs a finite amount of time determine whether or not a particular block will be eventually confirmed. 
If there exists a time of consistency such that $b$ is on the distinguished path in $G(C)$, and additionally the system parameters imply stability (we give conditions for this in Theorem \ref{thm:stability}), then it follows that block $b$, in the limit as $t \rightarrow \infty$, will be confirmed.
Moreover, in stable blockchain systems using the tree policy, Corollary \ref{cor:distinguished_path} provides a necessary and sufficient condition for the confirmation of a block $b$.
In particular, it shows that confirmation is equivalent to the existence of an infinite sequence of blocks $(b_k)_{k \in \mathbb{N}}$ such that there is a path in $G(\infty)$ from $b_k$ to $b$ for all $k \in \mathbb{N}$.
In general, this condition is only a necessary condition, but not sufficient.
Examples of blockchains using the tree policy are Bitcoin and Ethereum, the two most commonly used blockchain implementations \cite{nakamoto2008bitcoin, buterin2013ethereum}.
In Sections \ref{sec:interpretations} and \ref{sec:simulations} we use the fact that for stable blockchain systems using the tree policy, confirmation can be determined in finite time in order to identify and numerically estimate several network parameters related to stability of the Bitcoin P2P network.

\subsection{Throughput-Optimal Policy}
\begin{theorem}
  Let peers add blocks according to the throughput-optimal policy.
  If the system is stable, then the limiting DAG $G(\infty)$ is one-ended.
  \label{thm:throughput-optimal-one-ended}
\end{theorem}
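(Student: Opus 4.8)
The plan is to sidestep a direct analysis of infinite paths and instead reduce the theorem to the confirmation machinery of Section~\ref{sec:structural-properties}. By Lemma~\ref{lem:locally-finite}, stability together with the throughput-optimal policy guarantees that $G(\infty)$ is locally finite, so the hypotheses of Lemma~\ref{lem:infinitely-confirmed} are in force the moment we can exhibit infinitely many confirmed blocks. My strategy is therefore to prove the much stronger statement that under the throughput-optimal policy \emph{every} block is confirmed; since the vertex set of $G(\infty)$ is $\mathbb{N}$, this immediately yields infinitely many confirmed blocks and, moreover, $\widehat{G}(\infty)=G(\infty)$. Lemma~\ref{lem:infinitely-confirmed} then gives that $\widehat{G}(\infty)=G(\infty)$ is one-ended, which is exactly the desired conclusion. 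This argument also establishes the claim underlying Corollary~\ref{cor:throughput-all-confirmed}.

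The heart of the proof is a reachability fact extracted from stability and the definition of the policy. Fix a time of consistency $C_k$. Since blocks are indexed in order of arrival and all peers agree at $C_k$, the vertex set of $G(C_k)$ is an initial segment $\{0,1,\ldots,n_k\}$; write $m_k=n_k+1$ for the first block to arrive after $C_k$. I claim that every block of index at least $m_k$ has a directed path to every block of index at most $n_k$. Indeed, let $c\ge m_k$ be mined by some peer $p$ at time $t>C_k$. Because blocks are never removed, $p$ retains all of $G(C_k)$, so every block $b\le n_k$ lies in $B_p(t^-)$; following incoming references upward inside the finite DAG $G_p(t^-)$ from $b$ must terminate at a tip $\ell$ (a block with no incoming reference), and the edges traversed form a directed path from $\ell$ down to $b$. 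The throughput-optimal rule forces $\mathcal{O}_c$ to contain every leaf of $G_p(t^-)$, in particular $\ell$, so $c\to\ell$ together with the path from $\ell$ to $b$ exhibits a directed path from $c$ to $b$ in $G(\infty)$.

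Granting this fact, confirmation of an arbitrary block $b$ is immediate: by stability (Definition~\ref{def:stability}) there are infinitely many times of consistency, so I may choose $C_k$ with $b\le n_k$; then every block of index at least $m_k$ reaches $b$, and only the finitely many indices in $\{b+1,\ldots,m_k-1\}$ can possibly fail to, which is precisely the requirement of Definition~\ref{def:confirmed}. Hence every block is confirmed and the reduction of the first paragraph closes the argument. I expect the reachability fact to be the main obstacle: one must argue carefully that knowledge acquired by a peer at a consistency time is never lost, that under the throughput-optimal policy a newly mined block references \emph{all} current tips rather than merely those on a longest chain, and that each retained block is the endpoint of a directed path from some such tip. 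The bookkeeping distinguishing the mining instant $t$ from $t^-$, and the verification that the vertex set of $G(C_k)$ is genuinely an initial segment, are the secondary points where care is needed.
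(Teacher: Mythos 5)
Your proposal is correct, and its first half --- the reachability fact that a block mined after a time of consistency $C_k$ acquires a directed path to every block of $G(C_k)$, because it references every leaf of $G_p(t^-)$ and every retained block is reachable from some leaf by walking incoming references up to a tip --- is essentially the paper's Lemma \ref{lem:throughput-optimal-one-ended}, proved the same way. Where you genuinely diverge is in how one-endedness is then extracted. The paper proves a second combinatorial lemma (Lemma \ref{lem:throughput-optimal-path}: every infinite path must contain a block arriving in each interval $[C_k, C_{k+1}]$) and, given two infinite paths $p_1, p_2$, explicitly builds the interleaving witness $p_3 = b_k^1 \to b_{k+1}^2 \to b_{k+2}^1 \to \cdots$, using the reachability lemma to supply the connecting subpaths. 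You instead upgrade the reachability fact to ``every block is confirmed'' (which is the paper's Corollary \ref{cor:throughput-all-confirmed}, there derived as a consequence of the theorem's machinery), observe that then $\widehat{G}(\infty) = G(\infty)$, and invoke Lemma \ref{lem:infinitely-confirmed} together with Lemma \ref{lem:locally-finite}. This reduction is legitimate and non-circular: Lemma \ref{lem:infinitely-confirmed} is established in Section \ref{sec:structural-properties} independently of anything in Section \ref{sec:one-ended-policies}, and your confirmation argument uses only the reachability fact, not the theorem being proved. Your route is more economical --- it dispenses with Lemma \ref{lem:throughput-optimal-path} entirely and delivers Corollary \ref{cor:throughput-all-confirmed} as a byproduct rather than an afterthought --- but it leans wholly on Lemma \ref{lem:infinitely-confirmed}, whose proof is the most delicate part of Section \ref{sec:structural-properties}; the paper's argument is self-contained within its section and exhibits the equivalence witness constructively. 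The secondary points you flag do go through: $B(C_k)$ is an initial segment because every arrived block instantly enters some peer's block set, and the condition $G(C_j) \neq G(C_k)$ for $j \neq k$ in Definition \ref{def:stability} forces $n_k \to \infty$, so every block is eventually covered by some $G(C_k)$.
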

  The proof is given in Appendix \ref{prf:throughput-optimal-one-ended}.

This theorem does not follow from Theorem \ref{thm:tree-one-ended}, since in general, it is not true that adding or removing countably many edges to or from a one-ended DAG results again in a one-ended DAG.
A counterexample in each direction is given in Figure \ref{fig:ends}.

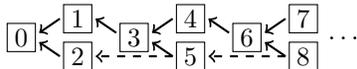
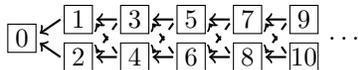
\begin{figure}[htp]
    \centering
    \captionsetup{justification=centering}
    \begin{subfigure}[!bt]{0.8\columnwidth}
    {\centering
        \begin{tikzpicture}[scale=0.5]
        \draw [fill=white] (0, 0) rectangle (0.75, 0.75);        \node at (0.375, 0.375) {0};
        \draw [fill=white] (1.5, 0.5) rectangle (2.25, 1.25);        \node at (1.875, 0.875) {1};
        \draw [fill=white] (1.5, 0.25) rectangle (2.25, -0.5);        \node at (1.875, -0.125) {2};
        \draw [fill=white] (3, 0) rectangle (3.75, 0.75);        \node at (3.375, 0.375) {3};
        \draw [fill=white] (4.5, 0.5)
        rectangle (5.25, 1.25);        \node at (4.875, 0.875) {4};
        \draw [fill=white] (4.5, 0.25)
        rectangle (5.25, -0.5);        \node at (4.875, -0.125) {5};
        \draw [fill=white] (6, 0) rectangle (6.75, 0.75);        \node at (6.375, 0.375) {6};
        \draw [fill=white] (7.5, 0.5) rectangle (8.25, 1.25);        \node at (7.875, 0.875) {7};
        \draw [fill=white] (7.5, 0.25) rectangle (8.25, -0.5);        \node at (7.875, -0.125) {8};
        
        \draw[->, thick] (1.4, 0.875) -- (0.85, 0.5);
        \draw[->, thick] (1.4, -0.125) -- (0.85, 0.25);
        \draw[->, thick] (2.9, 0.5) -- (2.35, 0.875);
        \draw[->, thick] (4.4, 0.875) -- (3.85, 0.5);
        \draw[->, thick] (4.4, -0.125) -- (3.85, 0.25);
        \draw[->, thick] (5.9, 0.5) -- (5.35, 0.875);
        \draw[->, thick] (7.4, 0.875) -- (6.85, 0.5);
        \draw[->, thick] (7.4, -0.125) -- (6.85, 0.25);
        
        \draw[->, dashed, thick] (4.4, -0.125) -- (2.35, -0.125);
        \draw[->, dashed, thick] (7.4, -0.125) -- (5.35, -0.125);
        
        \node at (9.0, 0.375) {\ldots};
        \end{tikzpicture}
        \par}
    \caption{Adding the dashed edges increases the number of ends from 1 to 2.}
\end{subfigure}

\begin{subfigure}[!bt]{0.8\columnwidth}
    {\centering
        \begin{tikzpicture}[scale=0.5]
        \draw [fill=white] (0, 0) rectangle (0.75, 0.75);        \node at (0.375, 0.375) {0};
        \draw [fill=white] (1.5, 0.5) rectangle (2.25, 1.25);        \node at (1.875, 0.875) {1};
        \draw [fill=white] (1.5, 0.25) rectangle (2.25, -0.5);        \node at (1.875, -0.125) {2};
        \draw [fill=white] (3, 0.5) rectangle (3.75, 1.25);        \node at (3.375, 0.875) {3};
        \draw [fill=white] (3, 0.25) rectangle (3.75, -0.5);        \node at (3.375, -0.125) {4};
        \draw [fill=white] (4.5, 0.5)
        rectangle (5.25, 1.25);        \node at (4.875, 0.875) {5};
        \draw [fill=white] (4.5, 0.25)
        rectangle (5.25, -0.5);        \node at (4.875, -0.125) {6};
        \draw [fill=white] (6, 0.5) rectangle (6.75, 1.25);        \node at (6.375, 0.875) {7};
        \draw [fill=white] (6, 0.25) rectangle (6.75, -0.5);        \node at (6.375, -0.125) {8};
        \draw [fill=white] (7.5, 0.5) rectangle (8.25, 1.25);        \node at (7.875, 0.875) {9};
        \draw [fill=white] (7.5, 0.25) rectangle (8.25, -0.5);        \node at (7.875, -0.125) {10};
        
        \draw[->, thick] (1.4, 0.875) -- (0.85, 0.5);
        \draw[->, thick] (1.4, -0.125) -- (0.85, 0.25);
        \draw[->, thick] (2.9, 0.875) -- (2.35, 0.875);
        \draw[->, thick] (2.9, -0.125) -- (2.35, -0.125);
        \draw[->, thick] (4.4, 0.875) -- (3.85, 0.875);
        \draw[->, thick] (4.4, -0.125) -- (3.85, -0.125);
        \draw[->, thick] (5.9, 0.875) -- (5.35, 0.875);
        \draw[->, thick] (5.9, -0.125) -- (5.35, -0.125);
        \draw[->, thick] (7.4, 0.875) -- (6.85, 0.875);
        \draw[->, thick] (7.4, -0.125) -- (6.85, -0.125);
        
        \draw[->, dashed, thick] (2.9, -0.025) -- (2.35, 0.775);
        \draw[->, dashed, thick] (2.9, 0.775) -- (2.35, -0.025);
        \draw[->, dashed, thick] (4.4, -0.025) -- (3.85, 0.775);
        \draw[->, dashed, thick] (4.4, 0.775) -- (3.85, -0.025);
        \draw[->, dashed, thick] (5.9, -0.025) -- (5.35, 0.775);
        \draw[->, dashed, thick] (5.9, 0.775) -- (5.35, -0.025);
        \draw[->, dashed, thick] (7.4, -0.025) -- (6.85, 0.775);
        \draw[->, dashed, thick] (7.4, 0.775) -- (6.85, -0.025);
        
        \node at (9.0, 0.375) {\ldots};
        \end{tikzpicture}
        \par}
    \caption{Removing the dashed edges increases the number of ends from 1 to 2.}
\end{subfigure}
    \captionsetup{justification=justified, singlelinecheck=false}
    \caption{Examples of one-ended DAGs where adding and removing countably many edges increases the number of ends from 1 to 2.}
    \label{fig:ends}
\end{figure}

\begin{corollary}
  Under the throughput-optimal policy, if the exogenous arrival and gossip processes together are taken as a Markov process, then $G(\infty)$ is one-ended if the Markov process is positive recurrent.
\label{cor:throughput-markov}
\end{corollary}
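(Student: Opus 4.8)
The plan is to show that positive recurrence of the driving Markov process implies stability in the sense of Definition \ref{def:stability}, and then to invoke Theorem \ref{thm:throughput-optimal-one-ended}, which guarantees that a stable system under the throughput-optimal policy has a one-ended limiting DAG $G(\infty)$. The observation that makes this reduction clean is that stability is a property of the block communication dynamics and the arrival rate alone, and is entirely independent of the reference selection policy: the block sets $B_p(t)$ evolve only according to block indices (each gossip epoch transfers the lowest-indexed missing block), while the policy affects only the reference (edge) set. Consequently, the positive-recurrence-to-stability argument is identical to the one used for the tree policy in Corollary \ref{cor:markov_dynamics}; the only change here is that the final invocation uses Theorem \ref{thm:throughput-optimal-one-ended} in place of Theorem \ref{thm:tree-one-ended}.

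Concretely, I would first set up the Markov process on a countable state space that tracks the system configuration relative to the common agreed-upon prefix, informally the backlog of blocks that some peer possesses but not all, together with the assignment of which peers hold which of these blocks. Under Poisson arrivals and Poisson inter-communication epochs, this relative description is Markov and countable, and the distinguished state in which the backlog is empty is exactly the configuration in which $B_p(t) = B(t)$ for all $p$, i.e., a time of consistency in the sense of Definition \ref{def:consistent}. Positive recurrence of the process then implies that this empty-backlog state is positive recurrent, hence visited infinitely often almost surely, yielding an infinite sequence of times of consistency $(C_k)_{k \in \mathbb{N}}$.

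Next I would verify the non-degeneracy condition $G(C_j) \neq G(C_k)$ for $j \neq k$ required by Definition \ref{def:stability}. The only way to leave the empty-backlog state is through a block arrival, so between any two consecutive visits to the consistent state at least one block arrives and is added to $G$. Since the map $t \mapsto G(t)$ is monotone non-decreasing and gains a vertex at each arrival, $G$ strictly changes between consecutive times of consistency, giving the required infinite sequence of distinct consistent DAGs. This establishes stability, and applying Theorem \ref{thm:throughput-optimal-one-ended} completes the proof.

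The main obstacle is the careful specification of the countable relative state space and the verification that the empty-backlog configuration is a genuine single recurrent state of the Markov process: one must confirm that the relative description really does collapse all consistent configurations (which differ in their absolute block indices as these grow without bound) to a single state, and that returns to it have the claimed recurrence properties. Once this bookkeeping is in place, the remaining steps are routine and, as noted, mirror the tree-policy argument of Corollary \ref{cor:markov_dynamics} verbatim up to the final citation.
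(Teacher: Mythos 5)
Your proposal is correct and follows essentially the same route as the paper: the paper's proof simply states that the argument is identical to that of Corollary \ref{cor:markov_dynamics} (positive recurrence yields infinitely many returns to the consistent state, hence stability, and then Theorem \ref{thm:throughput-optimal-one-ended} applies). Your write-up supplies more detail on the relative state space and the non-degeneracy check $G(C_j) \neq G(C_k)$ than the paper does, but the underlying argument is the same.
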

The proof is given in Appendix \ref{prf:throughput-markov}.

\begin{corollary}
  In a stable blockchain system using the throughput-optimal policy, all blocks in $G(\infty)$ are confirmed blocks.
  \label{cor:throughput-all-confirmed}
\end{corollary}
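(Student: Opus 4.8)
The plan is to prove the statement directly from the definition of confirmation (Definition \ref{def:confirmed}): for an arbitrary block $b$, I would show that all but finitely many blocks have a directed path to $b$ in $G(\infty)$. Two reductions come first. Since the system is stable, Definition \ref{def:stability} supplies an infinite sequence of times of consistency $(C_k)_k$ whose DAGs $G(C_k)$ are pairwise distinct; because $t \mapsto G(t)$ is monotone non-decreasing and $G(t)$ is almost surely finite at every finite $t$, the times $C_k$ must be unbounded (were they bounded by some $T$, every $G(C_k)$ would be a subgraph of the finite DAG $G(T)$, forcing only finitely many distinct DAGs). Hence every block arrives before some time of consistency, and it suffices to fix $b$, pick a time of consistency $C$ occurring after $b$ has arrived, and reason about the blocks mined after $C$.

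The combinatorial core is the following deterministic fact about the throughput-optimal policy: a freshly mined block references \emph{every} leaf (tip, i.e.\ in-degree-$0$ vertex) of its miner's local DAG, and in any finite DAG of this type every vertex lies below some tip. Concretely, for a vertex $v$ of a finite DAG $G_p(s)$ I would follow incoming reference edges upward: if $v$ has in-degree $0$ it is already a tip, and otherwise some block $v_1 > v$ references it; iterating yields a strictly increasing sequence $v < v_1 < v_2 < \cdots$ that must terminate in a tip $\ell$ by finiteness, producing a directed path $\ell \to \cdots \to v$. This is the reverse-direction analogue of the maximal-path statement of Proposition \ref{prop:maximal_path}.

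With these in hand, fix $b$ and a time of consistency $C$ with $b \in B(C)$. At $C$ all peers share the same block set, so $b \in B_p(C)$ for every peer $p$; since block sets only grow, $b \in B_p(t)$ for all $p$ and all $t \ge C$. Now take any block $c$ mined after $C$, say at peer $p$ at time $t$. Then $b$ lies in the finite DAG $G_p(t^-)$, so by the tip fact there is a tip $\ell$ of $G_p(t^-)$ with a directed path from $\ell$ to $b$; since $c$ references all tips of $G_p(t^-)$, it references $\ell$, giving a directed path $c \to \ell \to \cdots \to b$ inside $G_p(t) \subseteq G(\infty)$. Thus every block mined after $C$ has a path to $b$. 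As only finitely many blocks are mined up to time $C$, all but finitely many blocks—in particular all but finitely many blocks of index exceeding $b$—have a path to $b$, so $b$ is confirmed. Since $b$ was arbitrary, every block of $G(\infty)$ is confirmed.

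The step I expect to require the most care is guaranteeing that $b$ is \emph{permanently} present in every peer's local DAG after some finite time; this is precisely where stability enters, through the existence of a time of consistency past $b$'s arrival, after which monotonicity of the block sets preserves $b$ everywhere. Everything else is deterministic graph combinatorics. I would also emphasize that this corollary is strictly stronger than what Lemmas \ref{lem:one-ended-confirmed} and \ref{lem:infinitely-confirmed} yield for a general one-ended locally finite DAG, namely merely \emph{infinitely many} confirmed blocks: the additional strength comes entirely from the throughput-optimal policy referencing \emph{all} tips, so that a single tip dominating $b$ already forces the confirmation of $b$ rather than of only a distinguished subsequence of blocks.
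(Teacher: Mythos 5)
Your proof is correct and follows essentially the same route as the paper: the paper derives this corollary from its Lemma \ref{lem:throughput-optimal-one-ended}, whose content is exactly your ``combinatorial core'' (a newly mined block references all leaves of its miner's local DAG, every vertex of that finite DAG is reachable from some leaf, and $G(C)$ is contained in the miner's local DAG for any time of consistency $C$ preceding the mining time), combined with stability supplying a time of consistency after the arrival of the target block. Your additional observations --- that the $C_k$ are unbounded and that $b$ persists in every peer's block set after $C$ --- are correct fillings-in of details the paper leaves implicit.
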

The proof is given in Appendix \ref{prf:throughput-all-confirmed}.

As a result of Corollary \ref{cor:throughput-all-confirmed}, we note that all blocks will eventually be confirmed.
Hence, we denote this policy as \emph{throughput-optimal}.

Lemma \ref{lem:locally-finite} and Theorems \ref{thm:tree-one-ended} and \ref{thm:throughput-optimal-one-ended} establish that stable \\ blockchain systems using the tree and throughput-optimal policies have one-ended limiting DAGs. 
In particular, Corollary \ref{cor:distinguished_path} shows that under the tree policy, there exists an infinite path consisting of all (infinitely many) confirmed blocks; Corollary \ref{cor:throughput-all-confirmed} shows that under the throughput-optimal policy, stability implies that all blocks will eventually be confirmed.


\section{Stability Analysis}
\label{sec:stability}
In this section, we provide quantitative bounds on the block arrival rate as a function of the structure of the peer-to-peer network $H$ so that the block communication process is stable.
Recall that stability is a property of the communication infrastructure supporting the blockchain system.
It is necessary to have a stable communication process among peers so that the blockchain system can confirm infinitely many blocks, as is shown in Lemmas \ref{lem:one-ended-confirmed} and \ref{lem:locally-finite} and Theorems \ref{thm:tree-one-ended} and \ref{thm:throughput-optimal-one-ended}.

For the rest of this section we assume that $H$ is an arbitrary, fixed, undirected, and connected network on $N$ peers.
We recall that $A$ is the arrival process of blocks into our system, and for any $m \in \mathbb{Z}$, we denote by $A_m$ the time of the $m$-th arrival to the system. 
In particular, we let $X_{[m, n]}(A)$ denote the earliest time when $B_p(t) = \{m, \ldots, n\} \ \forall p$, when the arrival process is restricted only to the arrivals $A_m, \ldots, A_n$. 
In other words when considering $X_{[m, n]}(A)$, arrivals begin at time $A_m$, and no arrival occurs after time $A_n$.
Technically, $A$ is a marked point process on $\mathbb{R}_+$, with the convention that the first arrival after time 0 corresponds to $A_1$.
Thus, for any $m \in \mathbb{Z}$, the $m$-th arrival occurs at time $A_m$ and the mark of the $m$-th point (occurring at time $A_m$) consists of the following:
\begin{enumerate}
    \item The peer $p_m \in \{1, \ldots, N\}$ at which $A_m$ arrives in the system.
    \item The sequence of points of the processes $(T_p - A_m)_{p \in \{1, \ldots, N\}}$. 
    In words, this is the set of all potential communication times between peers, shifted by $A_m$.
    A pictorial representation of this part of the marks of the process $A$ is in Figure \ref{fig:marks-realization}.
\end{enumerate}
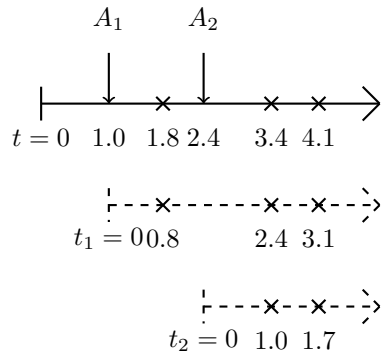
\begin{figure}[ht!]
  {\centering
    \begin{tikzpicture}[scale=0.9]
    \draw[thick] (0, 0.25) -- (0, -0.25);
    \node at (0, -0.5) {$t=0$};
    \draw[thick] (0, 0) -- (5, 0);
    \draw[thick] (4.75, 0.25) -- (5, 0);
    \draw[thick] (4.75, -0.25) -- (5, 0);
    
    \draw[thick, ->] (1, 0.75) -- (1, 0);
    \node at (1, -0.5) {$1.0$};
    \node at (1, 1.25) {$A_1$};
    
    \draw[thick] (1.7, 0.1) -- (1.9, -0.1);
    \draw[thick] (1.7, -0.1) -- (1.9, 0.1);
    \node at (1.8, -0.5) {$1.8$};
    
    \draw[thick, ->] (2.4, 0.75) -- (2.4, 0);
    \node at (2.4, -0.5) {$2.4$};
    \node at (2.4, 1.25) {$A_2$};
    
    \draw[thick] (3.3, 0.1) -- (3.5, -0.1);
    \draw[thick] (3.3, -0.1) -- (3.5, 0.1);
    \node at (3.4, -0.5) {$3.4$};
    
    \draw[thick] (4.0, -0.1) -- (4.2, 0.1);
    \draw[thick] (4.0, 0.1) -- (4.2, -0.1);
    \node at (4.1, -0.5) {$4.1$};


    \draw[thick, dashed] (1.0, -1.25) -- (1.0, -1.75);
    \node at (1.0, -2) {$t_1 = 0$};
    \draw[thick, dashed] (1.0, -1.5) -- (5, -1.5);
    \draw[thick, dashed] (4.75, -1.75) -- (5, -1.5);
    \draw[thick, dashed] (4.75, -1.25) -- (5, -1.5);
    
    \draw[thick] (1.7, -1.4) -- (1.9, -1.6);
    \draw[thick] (1.7, -1.6) -- (1.9, -1.4);
    \node at (1.8, -2) {$0.8$};
    
    \draw[thick] (3.3, -1.4) -- (3.5, -1.6);
    \draw[thick] (3.3, -1.6) -- (3.5, -1.4);
    \node at (3.4, -2) {$2.4$};
    
    \draw[thick] (4.0, -1.4) -- (4.2, -1.6);
    \draw[thick] (4.0, -1.6) -- (4.2, -1.4);
    \node at (4.1, -2) {$3.1$};
    

    \draw[thick, dashed] (2.4, -2.75) -- (2.4, -3.25);
    \node at (2.4, -3.5) {$t_2 = 0$};
    \draw[thick, dashed] (2.4, -3) -- (5, -3);
    \draw[thick, dashed] (4.75, -3.25) -- (5, -3);
    \draw[thick, dashed] (4.75, -2.75) -- (5, -3);
    
    \draw[thick] (3.3, -2.9) -- (3.5, -3.1);
    \draw[thick] (3.3, -3.1) -- (3.5, -2.9);
    \node at (3.4, -3.5) {$1.0$};
    
    \draw[thick] (4.0, -2.9) -- (4.2, -3.1);
    \draw[thick] (4.0, -3.1) -- (4.2, -2.9);
    \node at (4.1, -3.5) {$1.7$};

    \end{tikzpicture}
    \par}
    \caption{An example realization of the temporal marks for the for arrivals $A_1$ and $A_2$.
    Arrivals are represented by vertical arrows and the points of the transmission processes $T_p, p \in [N]$ are represented by $\times$ signs.}
    \label{fig:marks-realization}
\end{figure}

We add the following standard assumptions made in the analysis of P2P networks (~\cite{ganeshnotes, sanghavi2007gossiping,  shah2009gossip}), and state our main result regarding stability stated below in Theorem \ref{thm:stability}.

\begin{assumption}
  The arrival process $A$ is an arbitrary stationary point process with intensity $\lambda$ such that blocks arrive uniformly at each peer.
  \label{assump:stationary_arrivals}
\end{assumption}
\begin{assumption}
  The outgoing communications $T_p$ from each peer $p \in \{1, \ldots, N\}$, occur as a rate 1 Poisson point process; the receiving peer for each communication is chosen uniformly and independently at random from the neighbors of $p$ in $H$.
  The communication process $T_p$ is independent from the communications of all other peers $T_q$ as well as from the arrival point process $A$.
  \label{assump:poisson_gossip}
\end{assumption}

\begin{definition}
    Let $H$ be an undirected network on a vertex set $V$ and let $S \subseteq V$ be a subset of vertices. 
    The conductance $\phi_{H}^{(S)}$ of the set $S$ is given by 
    $$
    \phi_{H}^{(S)} = \frac{\sum_{p \in S, q \in S^C}\frac{1}{d(p)}\mathbf{1}_{pq}}{\frac{1}{N}|S||S^C|},
    $$
    where $\mathbf{1}_{pq}$ is an indicator random variable for the edge $(p, q)$ being in the edge set of $H$, $d(p)$ is the degree of $p$, and $|\cdot|$ is the cardinality of the set $\cdot$.
    The conductance of the network $H$, denoted by $\phi_H$, is then defined as 
    \begin{align*}
        \phi_H = \inf_{S\subseteq V} \phi_H^{(S)}.
    \end{align*}
    \label{def:conductance}
\end{definition}

The conductance of a network $H$ is an indicator of how much information propagation on $H$ is affected by bottlenecks -- networks with higher conductance are less affected by bottlenecks.
Observe that the numerator of the expression for $\phi_H^{(S)}$ represents the rate of attempted communication on edges from $S$ to $S^C$.
Also note that by choosing $S$ to be a cut consisting of a single vertex, we can obtain the bound $\phi_H \leq \frac{N}{N-1}$; equality is achieved if $H$ is complete.

\begin{theorem}
   Let $\phi_H^{(S)}$ be the conductance of the cut $S$ and $\phi_H$ be the  conductance of the Peer-to-Peer network $H$ on the vertex set $\{1,\cdots,N\}$ and
    the point processes $(T_p)_{p=1}^N$ are mutually independent for all $p \in \{1, 2, \ldots, N\}$. Then there exists $\mu$ satisfying $$\frac{\phi_H}{2\log N} \leq \mu \leq \inf_{S \subset H}|S^C|\phi_H^{(S)},$$ such that the blockchain system is stable for all arrival rates $\lambda$ satisfying $0 < \lambda < \mu < \infty$.
   \label{thm:stability}
\end{theorem}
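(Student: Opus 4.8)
The plan is to identify $\mu$ with the \emph{saturation throughput} of the dissemination dynamics and then to sandwich it by two bounds on the associated asymptotic maximal dater. Concretely, I would first invoke the monotone separable structure of the model (in the spirit of \cite{baccelli1995saturation}): the maximal daters $X_{[1,n]}(A)$ defined above are subadditive, $X_{[1,n+m]} \le X_{[1,n]} + X_{[n+1,n+m]}\circ\theta_n$, so Kingman's subadditive ergodic theorem gives an almost-sure and $L^1$ limit $\gamma := \lim_{n\to\infty} X_{[1,n]}/n = \inf_n \mathbb{E}[X_{[1,n]}]/n$. The saturation rule then yields that the system is stable for every $\lambda < \mu := 1/\gamma$ (and unstable for $\lambda > \mu$), so it remains only to prove $\tfrac{1}{\inf_S |S^C|\phi_H^{(S)}} \le \gamma \le \tfrac{2\log N}{\phi_H}$; since $H$ is connected we have $\phi_H>0$, hence $\mu\in(0,\infty)$.

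For the upper bound on $\mu$ (equivalently a lower bound on $\gamma$) I would use a cut/flow argument in the saturated system. Fix a proper cut $S$, and track each block by its source peer $p_i$, which is uniform on $\{1,\dots,N\}$ by Assumption \ref{assump:stationary_arrivals}. Every block whose source lies in $S$ must traverse at least one link from $S$ to $S^C$ before all peers in $S^C$ hold it, and each transmission across the cut carries a single block. The $S\to S^C$ crossing events form a point process of rate $R_S := \sum_{p\in S,\,q\in S^C}\tfrac{1}{d(p)}\mathbf{1}_{pq}$, so the number of distinct blocks from $S$ that have crossed by time $t$ is at most the number of crossing events, which is $R_S t\,(1+o(1))$. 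Since by time $X_{[1,n]}$ all $\approx n|S|/N$ blocks sourced in $S$ have reached $S^C$, this forces $n|S|/N \le R_S\,X_{[1,n]}(1+o(1))$, whence $\gamma \ge \tfrac{|S|/N}{R_S} = \tfrac{1}{|S^C|\phi_H^{(S)}}$ using $R_S = \phi_H^{(S)}|S||S^C|/N$. Taking the infimum over $S$ gives $\mu \le \inf_S |S^C|\phi_H^{(S)}$.

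The lower bound on $\mu$ (an upper bound on $\gamma$) is where the main difficulty lies, because many blocks contend for the shared, bandwidth-limited links, so a naive single-rumor estimate does not apply. The key structural observation is that the FCFS oldest-missing-block-first rule endows dissemination with a pipeline/priority structure. Let $\tau_j$ be the first time every peer holds $\{1,\dots,j\}$, so $\tau_0=0$, $\tau_j$ is nondecreasing, and $X_{[1,n]}=\tau_n$. On the interval $[\tau_{j-1},\tau_j)$ every peer already holds $\{1,\dots,j-1\}$, so any peer still lacking block $j$ has $j$ as its lowest missing block; hence every contact from a current holder of $j$ to such a peer delivers exactly block $j$. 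Therefore, on this interval block $j$ propagates precisely as a continuous-time PUSH rumor under rate-$1$ Poisson clocks choosing a uniform neighbor (Assumption \ref{assump:poisson_gossip}), started from the nonempty set of peers holding $j$ at time $\tau_{j-1}$. A standard conductance bound for push rumor spreading then gives $\mathbb{E}[\tau_j-\tau_{j-1}] \le \tfrac{2\log N}{\phi_H}$, uniformly in $j$ by time-homogeneity and the strong Markov property. Summing, $\mathbb{E}[X_{[1,n]}]=\mathbb{E}[\tau_n] \le n\,\tfrac{2\log N}{\phi_H}$, so $\gamma \le \tfrac{2\log N}{\phi_H}$ and $\mu \ge \tfrac{\phi_H}{2\log N}$.

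The principal obstacle is thus this lower bound, and within it the crucial and nonobvious step is the reduction of congested multi-block dissemination to a \emph{sequence} of single-block push-rumor processes via the oldest-first priority; once that reduction is in place, the only external input needed is the conductance-based rumor-spreading estimate, which is where the constant $2$ enters. Two routine points must still be handled carefully: upgrading the high-probability rumor-spreading guarantee to an expectation bound on each increment $\tau_j-\tau_{j-1}$ (using the geometric tails of the spreading time), and controlling the $o(1)$ law-of-large-numbers corrections in the cut argument uniformly over the finitely many cuts $S$. Both follow from the stationarity and independence postulated in Assumptions \ref{assump:stationary_arrivals} and \ref{assump:poisson_gossip}.
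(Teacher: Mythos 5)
Your proposal is correct and follows the same overall strategy as the paper: establish monotone separability of the maximal daters, invoke the Baccelli--Foss saturation rule to identify $\mu^{-1}$ with $\lim_n X_n/n$, lower-bound $X_n$ by a cut argument counting transmissions across $S \to S^C$ (your computation $R_S = \phi_H^{(S)}|S||S^C|/N$ and the resulting $\gamma \ge 1/(|S^C|\phi_H^{(S)})$ is exactly the paper's), and upper-bound $X_n$ by reducing to a sequence of single-rumor spreads each costing at most $2\log N/\phi_H$ in expectation. The one place you genuinely diverge is the mechanism for that last reduction. The paper uses external monotonicity to \emph{delay} the arrivals so that block $j$ enters only once blocks $1,\dots,j-1$ are fully disseminated; the delayed system dominates the saturated one, and each block then spreads in isolation. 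You instead work directly in the saturated system and observe that the oldest-missing-block-first rule already serializes dissemination: on $[\tau_{j-1},\tau_j)$ every peer holds $\{0,\dots,j-1\}$, so any contact from a holder of $j$ to a non-holder necessarily delivers $j$, and block $j$ evolves as a pure push rumor from the (nonempty) set of its current holders. Your route costs a little more bookkeeping (strong Markov property at the stopping times $\tau_{j-1}$, monotonicity of push spreading in the initial holder set) but buys a direct bound on $\mathbb{E}[X_n]$ for the saturated system itself, making the pipelining structure of the FCFS rule explicit rather than hiding it inside the comparison with a shifted arrival process; the paper's version is shorter because external monotonicity is already available from Lemma \ref{lem:monotone_separable}. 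Both arguments rest on the same external input, the conductance-based push-rumor estimate, and yield identical constants.
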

The proof is given in Appendix \ref{prf:stability}.

This theorem provides quantitative bounds on the maximum stable block arrival rate in terms of the conductance of the peer-to-peer communication network $H$. 

\begin{remark}
  In Section \ref{sec:model} and Theorem \ref{thm:stability}, we assume that the communication processes $T_p$ are rate $1$.
  More generally, if the processes $T_p$ are rate $B$, the bounds in Theorem \ref{thm:stability} become
  $$
    \frac{B\phi_H}{2\log N} \leq \mu \leq \inf_{S \subset H}B|S^C|\phi_H^{(S)}.
  $$
  This corresponds to giving each peer a communication bandwidth of $B$ blocks per second instead of unit bandwidth.
  \label{remark-bandwidth}
\end{remark}

In order to prove Theorem \ref{thm:stability}, we use the \emph{monotone separability} framework \cite{baccelli1995saturation}, which is made precise in Lemma \ref{lem:monotone_separable}. 
An exposition of the monotone separability framework is provided in the survey paper \cite{foss2004overview}.

\begin{lemma}
  For all $n \geq m,$ $X_{[m, n]}$ is monotone separable; namely, it satisfies the following.
  \begin{enumerate}
      \item For all $n \geq m,$ $X_{[m, n]}$ is causal; namely $$X_{[m, n]}(A) \geq A_n.$$
      \item For all $n \geq m,$ $X_{[m, n]}$ is externally monotonic; namely $$X_{[m, n]}(A') \geq X_{[m, n]}(A)$$ if $A'$ is a point process such that $A'_m \geq A_m$ for all $m \in \mathbb{N}$.
      \item For all $n \geq m,$ $X_{[m, n]}$ is homogeneous; namely $$X_{[m, n]}(A + c) = X_{[m, n]}(A) + c \ \forall c \in \mathbb{R}.$$
      \item For all $n \geq m,$ $X_{[m, n]}$ is separable; namely $$X_{[m, n]}(A) = X_{[l+1, n]}(A)$$ if $X_{[m, l]} \leq A_{l+1}$.
  \end{enumerate}
  \label{lem:monotone_separable}
\end{lemma}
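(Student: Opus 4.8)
The plan is to treat $X_{[m,n]}(A)$ as a deterministic functional of the arrival epochs $A_m,\dots,A_n$ together with their marks -- the peer assignments $p_k$ and the meeting epochs of $(T_p)_p$, encoded relative to each arrival -- and to reduce each of the four axioms to a statement about the per-peer acquisition times. For a peer $q$ and an index $k\in\{m,\dots,n\}$, let $D_q^k$ be the first time that $k\in B_q(t)$ under the restricted input; since only blocks $m,\dots,n$ arrive, peer $q$ holds the set $\{m,\dots,n\}$ exactly once it holds every such block, so $X_{[m,n]}(A)=\max_{q}\max_{m\le k\le n}D_q^k$. The structural fact I would use throughout is the communication rule: at a meeting the sender transmits the \emph{lowest}-indexed block the receiver lacks (among those the sender holds), so a block is transmitted on a meeting only when it lies in $B_{\text{sender}}\setminus B_{\text{receiver}}$, and dissemination respects the block ordering. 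This single observation drives both the monotonicity and the separability arguments.

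Causality is immediate: block $n$ does not exist before $A_n$, so no peer can hold $\{m,\dots,n\}$ before $A_n$, giving $X_{[m,n]}(A)\ge A_n$. Homogeneity follows directly from the mark convention: replacing $A$ by $A+c$ shifts every arrival and, because the meeting epochs are recorded relative to the arrivals, every meeting epoch by the same $c$, so the entire sample path is a time-translate by $c$ of the original and hence so is the first time all peers are consistent, giving $X_{[m,n]}(A+c)=X_{[m,n]}(A)+c$.

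For external monotonicity I would couple the two inputs so that they share the peer assignments and the same realization of the meeting process $(T_p)_p$, and show by induction over the events (arrivals and meetings) in increasing time order that $D_q^k(A')\ge D_q^k(A)$ for every $q$ and $k$. A mined block gives $D_{p_k}^k(A')=A'_k\ge A_k=D_{p_k}^k(A)$. For a reception, peer $q$ acquires $k$ at a meeting only when the sender already holds $k$ and $k$ is the lowest block $q$ lacks; by the induction hypothesis every acquisition feeding this configuration occurs no earlier in the $A'$ system, and the meeting epochs coincide, so the reception of $k$ by $q$ occurs no earlier. Hence each $D_q^k$ is non-decreasing under delaying the input, and taking the maximum over $q$ and $k$ yields $X_{[m,n]}(A')\ge X_{[m,n]}(A)$.

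Separability is the crux, and it is where the index-ordered transmission rule does the real work. Assume $X_{[m,l]}(A)\le A_{l+1}$, and set $\tau:=X_{[m,l]}(A)$, so that at time $\tau$ every peer holds exactly $\{m,\dots,l\}$ and block $l+1$ has not yet arrived. I would establish two claims. First, after $\tau$ no block in $\{m,\dots,l\}$ is ever transmitted again: each such block lies in $B_q(\tau)$ for every $q$, so it never belongs to any set $B_{\text{sender}}\setminus B_{\text{receiver}}$ and is therefore inert for the remainder of the dynamics. Consequently the evolution of the indices $\{l+1,\dots,n\}$ after $\tau$ is driven solely by the arrivals $A_{l+1},\dots,A_n$ and the post-$\tau$ meeting epochs. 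Second, this is precisely the dynamics defining $X_{[l+1,n]}(A)$, in which all peers start agreeing on the indices below $l+1$ and only blocks $l+1,\dots,n$ arrive, with the same underlying $(T_p)_p$; so the first time all peers hold $\{m,\dots,n\}$ equals the first time all peers hold $\{l+1,\dots,n\}$ on top of the common set, namely $X_{[l+1,n]}(A)$. The main obstacle I anticipate is making the inertness argument fully rigorous: I must verify that the hypothesis $X_{[m,l]}\le A_{l+1}$ (rather than mere consistency at some time) is exactly what guarantees that the low blocks are completely disseminated before any higher-index activity begins, so that the two systems can be coupled meeting-for-meeting after $\tau$ and their completion times identified.
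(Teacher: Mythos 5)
Your treatment of causality, homogeneity, and separability is correct and follows essentially the same route as the paper: causality is immediate, homogeneity is the translation argument enabled by the convention that meeting epochs are recorded relative to the arrivals, and your ``inertness'' argument for separability is a more explicit rendering of the paper's observation that block $l+1$ arrives to an empty system, so the dissemination of blocks $l+1,\ldots,n$ is unaffected by the already-fully-disseminated blocks $m,\ldots,l$ and coincides meeting-for-meeting with the $[l+1,n]$ system.

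The gap is in external monotonicity. Your inductive invariant --- that the per-peer acquisition times satisfy $D_q^k(A')\ge D_q^k(A)$ for \emph{every} peer $q$ and \emph{every} block $k$ --- is false, precisely because of the lowest-index-first transmission rule you rely on. Delaying the arrival of a low-indexed block can remove it from a sender's buffer at a given meeting and thereby \emph{promote} a higher-indexed block to be the lowest one the receiver lacks, so that the higher-indexed block is delivered strictly earlier. Concretely, take peers $p,q,r$ with block $1$ mined at $r$ and block $2$ mined at $p$, a meeting $r\to p$ at time $0.5$, and meetings $p\to q$ at times $1$ and $5$. With $A_1=0$, $A_2=0.1$, peer $p$ holds $\{1,2\}$ at time $1$ and sends block $1$, so $D_q^2(A)=5$. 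With $A'_1=0.7$, $A'_2=0.8$ (both delays, ordering preserved), block $1$ misses the $r\to p$ meeting, so at time $1$ peer $p$ holds only block $2$ and sends it: $D_q^2(A')=1<5$. Hence the induction over events does not close as stated. The lemma itself is still expected to hold because $X_{[m,n]}$ is the maximum over all $q$ and $k$, and the quantity that behaves monotonically is the global departure time $D_k=\max_q D_q^k$ (equivalently, the running maximum), not the individual $D_q^k$; this is how the paper frames its (admittedly terse) argument, reducing to single-point perturbations of the superposed arrival-and-meeting process. To repair your proof you would need a weaker invariant --- for instance, that the set of (peer, block) pairs already delivered in the delayed system is, at every meeting epoch, contained in the corresponding set for the original system after accounting for the promotion effect, or an exchange argument showing that an early delivery of a high block in $A'$ is always compensated by a late delivery of the displaced low block --- rather than coordinatewise monotonicity of the $D_q^k$.
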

The proof is given in Appendix \ref{prf:monotone_separable}.

Define $X_n$ to be the earliest time $t$ when $B_p(t) = \{1, \ldots, n\}$ for all peers $p \in \{1, \ldots, N\}$, such that all of the arrivals $A_1, \ldots, A_n$ arrive at time $t = 0$ and no arrival occurs after $A_n$.
$X_n$ is called the maximal dater in the queueing theory literature.
We state below a result of Baccelli and Foss \cite{baccelli1995saturation} regarding the stability of monotone separable systems.
\begin{theorem}[Baccelli and Foss \cite{baccelli1995saturation}]
  For a monotone separable system, the limit 
  $$0 \leq \mu^{-1} := \lim_{n \to \infty}\frac{X_n}{n} = \lim_{n\to\infty}\frac{\mathbb{E}[X_n]}{n}$$
  exists almost surely. Moreover, the system is stable if the arrival rate satisfies $\lambda < \mu$ and unstable if $\lambda > \mu$.
  \label{thm:monotone-separable}
\end{theorem}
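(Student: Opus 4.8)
The plan is to treat this as an application of Kingman's subadditive ergodic theorem together with the saturation rule of Baccelli and Foss. The existence of the limit $\mu^{-1}$ is obtained by showing that the saturated maximal daters $(X_n)_n$ form a subadditive sequence driven by a stationary ergodic process, while the stability dichotomy is obtained by comparing the saturated dynamics (all arrivals placed at time $0$) with the real arrival stream of intensity $\lambda$.

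First I would establish the subadditive inequality
$$X_{m+n} \le X_m + X_n \circ \theta^{m},$$
where $X_n \circ \theta^m$ denotes the maximal dater of a fresh batch of $n$ blocks run with the gossip marks shifted to begin at the completion epoch of the first batch. This follows from the four properties in Lemma \ref{lem:monotone_separable}. Concretely, starting from the saturated configuration of $m+n$ blocks, I delay blocks $m+1,\dots,m+n$ from time $0$ to the time $X_m$ at which the first $m$ blocks have reached every peer; external monotonicity guarantees that this delayed system has completion time at least $X_{m+n}$, separability (applied with $l=m$, so that the first batch finishes exactly at the arrival epoch $X_m$ of the delayed batch) guarantees that the first batch no longer influences the second, and homogeneity converts the time shift by $X_m$ into the additive term. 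Stationarity of the arrival and gossip marks (Assumptions \ref{assump:stationary_arrivals} and \ref{assump:poisson_gossip}) then identifies $X_n\circ\theta^m$ in distribution with $X_n$. Separately, I would verify the integrability hypothesis $\mathbb{E}[X_1] < \infty$: since $H$ is finite and connected and each peer gossips at rate $1$ to a uniformly chosen neighbor, the dissemination time of a single block is dominated, along a fixed spanning tree of $H$, by a sum of $N-1$ exponential variables each of rate at least $1/N$, giving $\mathbb{E}[X_1]\le N(N-1)<\infty$.

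With subadditivity and integrability in hand, Kingman's subadditive ergodic theorem applies to the stationary ergodic driving sequence (the arrivals and gossip marks) and yields that $X_n/n$ converges almost surely and in $L^1$ to the deterministic constant $\mu^{-1}=\inf_n \mathbb{E}[X_n]/n = \lim_n \mathbb{E}[X_n]/n \ge 0$; nonnegativity is immediate since each $X_n\ge 0$ and causality gives $X_{[1,n]}(A)\ge A_n$. This is the first assertion.

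The stability dichotomy is the remaining, and harder, part. Since $A$ has intensity $\lambda$, the strong law gives $A_n/n \to 1/\lambda$, while the saturated daters satisfy $X_n/n \to \mu^{-1}$. When $\lambda>\mu$, i.e. $1/\lambda < \mu^{-1}$, the saturated completion times outrun the arrival epochs, the backlog $X_{[1,n]}(A)-A_n$ diverges, and consistency (Definition \ref{def:consistent}) fails for all large indices, so only finitely many times of consistency occur and the system is unstable in the sense of Definition \ref{def:stability}. The delicate direction is $\lambda < \mu$: here I would invoke the saturation rule, constructing the stationary regime by a backward (Loynes-type) coupling and using the separability property to reset the dynamics at the epochs where every peer's block set coincides. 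The main obstacle is showing that these emptying (consistency) epochs recur infinitely often with finite mean spacing — equivalently, that the stationary maximal dater of the real system is almost surely finite when $1/\lambda > \mu^{-1}$. This requires combining the subadditive growth rate $\mu^{-1}$ with the arrival rate not merely along subsequences but so as to produce genuine renewal instants, which is precisely the content of the Baccelli--Foss saturation-rule argument and the technically most demanding step of the proof.
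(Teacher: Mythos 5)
The paper does not prove this statement: Theorem \ref{thm:monotone-separable} is quoted from Baccelli and Foss \cite{baccelli1995saturation} and used as a black box in the proof of Theorem \ref{thm:stability}, so your sketch can only be compared against the original source. What you propose is indeed the architecture of that source: the subadditive inequality $X_{m+n} \le X_m + X_n\circ\theta^m$ obtained by delaying the second batch to the epoch $X_m$ and invoking external monotonicity, separability and homogeneity in exactly the way you describe; Kingman's subadditive ergodic theorem for the almost-sure and $L^1$ existence of the constant $\mu^{-1}=\inf_n \mathbb{E}[X_n]/n$; and the comparison $X_{[1,n]}(A)\ge A_1+X_n$ plus the strong law for instability when $\lambda>\mu$. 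All of that is correct. One minor remark: your integrability bound $\mathbb{E}[X_1]<\infty$ is derived from the specific gossip dynamics on a finite connected $H$, which is fine for the application in this paper, but the theorem as stated is about general monotone separable systems, where finiteness of $\mathbb{E}[X_1]$ is a standing hypothesis of the framework rather than something to prove.

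The genuine gap is the one you flag yourself: the direction ``$\lambda<\mu$ implies stability'' is named but not executed. The subadditive limit $X_n/n\to\mu^{-1}$ together with $A_n/n\to 1/\lambda$ does not by itself produce infinitely many consistency (emptying) epochs for the \emph{real} arrival stream; one must run the Loynes backward scheme, show the stationary maximal dater is a.s.\ finite when $\lambda<\mu$, and use separability to convert finiteness into recurrent regeneration instants. That construction is the actual content of the saturation rule and is the technically demanding part of \cite{baccelli1995saturation}; deferring to it means the attempt is an accurate roadmap of the cited proof with its hardest step left as a pointer, rather than a complete proof. Since the paper itself offers no proof at all, this is not a discrepancy with the paper, but it should be acknowledged that the dichotomy half of the statement remains unproven in your write-up.
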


The key insight in the proof of Theorem \ref{thm:stability} is to use Theorem \ref{thm:monotone-separable} to bound the constant $\mu^{-1}$ as follows.
First, we shift all block arrivals such that each block arrives at the instant the previous block is known to all peers; this provides an upper bound on $\mu^{-1}$.
Next, we find a lower bound on $\mu^{-1}$ by shifting blocks arrivals such that all blocks are present in the system at time $t = 0$ and lower bounding the time, for any set $S$, for all blocks initially contained in $S$ to be known to some peer in $S^C$.

Theorem \ref{thm:stability} shows that a guaranteed stability condition is $\lambda < \frac{\phi_H}{2\log N}$ and that the true stability region for a blockchain system is upper bounded by the condition $\lambda < \inf_{S \subset H}\phi_H^{(S)}$.
In particular, we find that the critical rate $\mu$ depends on both $N$ and the network topology of $H$.
As shown previously in Lemmas \ref{lem:one-ended-confirmed} and \ref{lem:locally-finite} and in Theorems \ref{thm:tree-one-ended} and \ref{thm:throughput-optimal-one-ended}, the stability of the communication dynamics guarantees the one-endedness of the blockchain DAG $G(\infty)$, which in turn implies the existene of infinitely many confirmed blocks in the limit as $t\to\infty$.

\subsection{Scalability}
In this section, we provide an illustrative example using Theorem \ref{thm:stability} to reject the scalability of a sequence of peer-to-peer networks (note that Theorem \ref{thm:stability} cannot be used to ensure scalability).
In addition, we show that no network on $N$ peers can support an arrival rate greater than $\frac{N}{N-1}$; hence even if a sequence of networks is scalable, it is impossible to support a block arrival rate of greater than $2$.

\subsubsection{Large Stars are Not Scalable}
From Theorem \ref{thm:stability}, we note that a necessary condition for scalability of the sequence of networks $(H_k)_{k \in \mathbb{N}}$ is that the bound $\inf_{S \subset H_k} |S^C|\phi_{H_k} \not\to 0$ as $k \to \infty$.

Recall that a \emph{star} on $k+1$ vertices is a connected graph consisting of a single vertex of degree $k$, and $k$ vertices each of degree $1$ (see Figure \ref{fig:star} for a pictorial example).
We show that if $H$ is a star with $k+1$ peers, the stability region is bounded above by $\frac{k+1}{k^2}$, which decreases to $0$ with $k$.
Thus, sequences of networks containing large stars are not scalable as per Definition \ref{def:scalability}.
The non-scalability of stars can be intuitively explained as follows.
Suppose that there are $k$ leaves in a star.
As the central node evenly shares unit bandwidth among the $k$ leaves, each leaf receives incoming blocks (from other peers) at a decreasing rate as $k \to \infty$.
In order to have stability, the total block arrival rate must be small enough that all blocks can be transmitted to the leaf with the reduced bandwidth.
Thus, the stability region must decrease to $0$, since $\frac1k$ decreases to $0$.
Indeed, suppose $H$ is a star consisting of $N$ peers and consider a cut $S$ such that $S^C$ consists of a single leaf $p$.
Then the stability of $H$ is bounded above by:
$$
\mu \leq \inf_{S \subset H}|S^C|\phi_H^{(S)} \leq (1)\frac{\sum_{q \in S}\frac{1}{d(q)}\mathbf{1}_{pq}}{\frac{1}{N}|S||S^C|} = (1)\frac{\frac{1}{N-1}}{\frac{1}{N}|S||S^C|} = (1)\frac{\frac{1}{N-1}}{\frac{1}{N}(N-1)(1)} = \frac{N}{(N-1)^2} \approx \frac{1}{N-1}.
$$
Thus, we can approximate the upper bound on the the stability region of a star by the incoming bandwidth to any leaf, which agrees with the intuitive explanation of non-scalability.

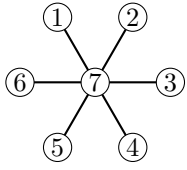
\begin{figure}
    \centering
    {\centering
  \begin{tikzpicture}[scale=0.5]
  
  \draw [-, thick] (-1, 0) -- (0, -1.732);
  \draw [-, thick] (1, 0) -- (0, -1.732);
  \draw [-, thick] (2, -1.732) -- (0, -1.732);
  \draw [-, thick] (1, -3.464) -- (0, -1.732);
  \draw [-, thick] (-1, -3.464) -- (0, -1.732);
  \draw [-, thick] (-2, -1.732) -- (0, -1.732);
  
  \draw[fill=white] (-1, 0) circle [radius=0.375];
  \node at (-1, 0) {$1$};
  \draw[fill=white] (1, 0) circle [radius=0.375];
  \node at (1, 0) {$2$};
  \draw[fill=white] (2, -1.732) circle [radius=0.375];
  \node at (2, -1.732) {$3$};
  \draw[fill=white] (1, -3.464) circle [radius=0.375];
  \node at (1, -3.464) {$4$};
  \draw[fill=white] (-1, -3.464) circle [radius=0.375];
  \node at (-1, -3.464) {$5$};
  \draw[fill=white] (-2, -1.732) circle [radius=0.375];
  \node at (-2, -1.732) {$6$};
  \draw[fill=white] (0, -1.732) circle [radius=0.375];
  \node at (0, -1.732) {$7$};
  \end{tikzpicture}  
}
    \caption{A star on $7$ vertices/peers.}
    \label{fig:torus}
\end{figure}

\subsubsection{Constant Per-Peer Arrival Rates Cannot be Supported}
Recall that the global arrival process is thinned to each peer such that the rates of blocks arrivals to all peers are equal.
We show that for a network of $N$ peers, the per-peer block arrival rate cannot exceed $\frac{1}{N-1}$.

\begin{proposition}
  Let $H$ be a peer-to-peer network between $N$ peers.
  The critical arrival rate $\mu$ of new blocks does not exceed $\frac{N}{N-1}$.
  In particular, in a stable blockchain system, no peer adds blocks at a rate greater than $\frac{1}{N-1}$.
  \label{prop:per-peer-arrival-rate}
\end{proposition}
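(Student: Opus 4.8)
The plan is to leverage the upper bound on the critical rate established in Theorem \ref{thm:stability}, namely $\mu \le \inf_{S \subset H} |S^C|\phi_H^{(S)}$, and to exhibit a cut whose associated quantity is at most $\frac{N}{N-1}$. Rather than searching for the optimal cut, I would restrict attention to the family of single-vertex cuts (those with $|S^C| = 1$) and argue by averaging: the minimum over this family is at most its average, and the average turns out to equal exactly $\frac{N}{N-1}$. This mirrors the remark following Definition \ref{def:conductance} that equality $\phi_H = \frac{N}{N-1}$ is attained by single-vertex cuts in the complete graph.

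First I would specialize the conductance to a cut with $S^C = \{v\}$ for a peer $v$, so that $|S| = N-1$ and $|S^C| = 1$. From Definition \ref{def:conductance}, the numerator collapses to a sum over the neighbors of $v$, each weighted by the reciprocal degree of that neighbor, while the denominator is $\frac{1}{N}(N-1)(1) = \frac{N-1}{N}$. This gives
\[
|S^C|\,\phi_H^{(V\setminus\{v\})} \;=\; \frac{N}{N-1}\sum_{p \sim v}\frac{1}{d(p)},
\]
where the sum runs over peers $p$ adjacent to $v$.

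Next I would sum this expression over all $N$ choices of $v$ and swap the order of summation. The double sum $\sum_{v}\sum_{p \sim v}\frac{1}{d(p)}$ equals $\sum_{p}\frac{1}{d(p)}\,d(p) = \sum_p 1 = N$, since each peer $p$ is counted once for each of its $d(p)$ neighbors $v$. Hence the total over all single-vertex cuts is $\frac{N}{N-1}\cdot N = \frac{N^2}{N-1}$, and dividing by the $N$ cuts shows the average value is $\frac{N}{N-1}$. Since a minimum never exceeds an average, some single-vertex cut attains a value at most $\frac{N}{N-1}$, so $\inf_{S \subset H}|S^C|\phi_H^{(S)} \le \frac{N}{N-1}$, and Theorem \ref{thm:stability} yields $\mu \le \frac{N}{N-1}$.

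Finally, for the per-peer statement I would invoke Assumption \ref{assump:stationary_arrivals}: blocks arrive uniformly across the $N$ peers, so each peer mines at rate $\lambda/N$. In a stable system the total rate satisfies $\lambda < \mu \le \frac{N}{N-1}$, whence the per-peer rate is strictly below $\frac{1}{N-1}$. I expect no serious obstacle here beyond spotting the averaging identity; the only point requiring care is the correct specialization of the conductance to a single-vertex cut, in particular that the weight $\frac{1}{d(p)}$ attaches to the neighbor $p \in S$ rather than to the isolated vertex $v \in S^C$. It is precisely this placement that makes the swap of summation telescope cleanly to $N$ and deliver the constant $\frac{N}{N-1}$.
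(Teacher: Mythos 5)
Your proposal is correct and rests on the same pillar as the paper's proof: both apply the upper bound $\mu \le \inf_{S\subset H}|S^C|\phi_H^{(S)}$ from Theorem \ref{thm:stability} to the family of single-vertex cuts $S^C=\{v\}$, for which $|S^C|\phi_H^{(S)} = \frac{N}{N-1}\sum_{p\sim v}\frac{1}{d(p)}$. Where you diverge is in how you certify that some such cut has value at most $\frac{N}{N-1}$. The paper takes the extremal route: it chooses $v$ to be a vertex of \emph{minimum} degree, so that each neighbor $p$ of $v$ satisfies $\frac{1}{d(p)}\le\frac{1}{d(v)}$ and the sum over the $d(v)$ neighbors is at most $1$. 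You instead average over all $N$ single-vertex cuts and use the handshake-type identity $\sum_v\sum_{p\sim v}\frac{1}{d(p)} = \sum_p d(p)\cdot\frac{1}{d(p)} = N$, so the average of the cut values is exactly $\frac{N}{N-1}$ and the minimum cannot exceed it. Both arguments are one step long and yield the identical constant; your averaging version has the small added virtue of showing that $\frac{N}{N-1}$ is the \emph{exact average} over single-vertex cuts (not merely an upper bound at one special vertex), which makes the tightness on the complete graph transparent, while the paper's version pinpoints a concrete witnessing cut (the minimum-degree peer, i.e.\ the bandwidth bottleneck), which aligns with the intuitive explanation given for the star example. Your handling of the per-peer statement via Assumption \ref{assump:stationary_arrivals} (each peer mines at rate $\lambda/N < \frac{1}{N-1}$) is the same implicit step the paper takes.
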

\begin{proof}
  Let $p$ be a a peer in $H$ with minimal degree (the existence of such a peer is due to the well-ordering principle on $\mathbb{N}$).
  Consider a cut $S$ so that $S^C = \{p\}$.
  As $p$ is a vertex of minimal degree, for all peers $q$ such that $p$ and $q$ are connected by an edge in $H$, the rate of attempted communication from $p$ to $q$ is at least as great as the rate of attempted communication from $q$ to $p$.
  Thus, we can bound the stability region of the network $H$:
  $$
    \mu \leq \inf_{S \subset H}|S^C|\phi_H^{(S)} \leq (1)\frac{\sum_{q \in S}\frac{1}{d(q)}\mathbf{1}_{pq}}{\frac{1}{N}|S||S^C|} \leq (1)\frac{\sum_{q \in S}\frac{1}{d(p)}\mathbf{1}_{pq}}{\frac{1}{N}|S||S^C|} = (1)\frac{1}{\frac1N(N-1)(1)} = \frac{N}{N-1}.
  $$
\end{proof}

Notice that Propositon \ref{prop:per-peer-arrival-rate} makes no assumptions on the network $H$, other than the number of vertices.
Thus, even in a scalable sequence of peer-to-peer networks, the per-peer arrival rate decreases to zero. In other words, for \emph{any} growing sequence of networks, no positive per-peer arrival rate of blocks to the network nodes can be supported on all networks in the sequence.
\section{Blockchain System Design Insights}
\label{sec:interpretations}
In this section we discuss design insights for blockchain systems based on the results of Section \ref{sec:stability}.
We assume that the arrival process $A$ and communication processes $T_p, p \in \{1, \ldots, N\}$ are stationary and ergodic on a P2P network $H$ so that all metrics are time-invariant.

\subsection{One-Endedness as a Form of Distributed Consensus}
\label{sec:one_end_dist_consensus}
The goal of the blockchain paradigm is to enable a set of anonymous peers to agree on a distributed ledger, where information is stored in discrete units called blocks.
In particular, as an infinite amount of bandwidth is consumed in the limit as $t\to\infty$, a natural requirement is that the number of confirmed blocks also tends to infinity as $t\to\infty$.

We show in Lemma \ref{lem:one-ended-confirmed} that the one-endedness and local finiteness of the limiting DAG $G(\infty)$ is a sufficient condition for the existence of infinitely many confirmed blocks.
We then show, in Theorems \ref{thm:tree-one-ended} and \ref{thm:throughput-optimal-one-ended}, that under the assumption of stability,  blockchains using the tree and throughput-optimal policies have one-ended limiting DAGs.
For the throughput-optimal policy, we find that all blocks are eventually confirmed.
For the tree policy, we find that an external observer who is aware of the states of $G_p(t)$ for all peers $p \in \{1, \ldots, N\}$ can determine, in finite time, whether or not any particular block will be eventually confirmed.
In particular, such determinations occur at times of consistency.
This ties the confirmation of blocks under the tree policy to the network dynamics on the peer-to-peer network $H$.

We thus identify the following metrics on the network $H$.
As the underlying peer-to-peer networks in the Bitcoin and Ethereum blockchain implementations (\cite{nakamoto2008bitcoin, buterin2013ethereum}) use the tree policy, these metrics provide insight into the behavior of those systems.
In Section \ref{sec:simulations}, we use these metrics in a simulation environment to numerically estimate key network properties of the Bitcoin peer-to-peer network.

\subsection{Performance Metrics}

In this section, we identify some quantitative system performance metrics which further characterize the performance of the blockchain system when it is stable.
Recall Theorem \ref{thm:stability} provides bounds on the maximum block arrival rate $\mu$ to guarantee stability. 



\noindent{\textbf{Time to Consistency}} -- 
This is defined as the minimum time a peer should wait after a block arrival $b$ (in expectation under steady state) such that all other peers have knowledge of $b$.
Our simulation results (in Figure \ref{fig:TtC-BlockRate}) suggest that the time to consistency increases monotonically with the block arrival rate as expected. 
In practice, peers wait for six future blocks to arrive after any given block before trusting that is a part of the ledger \cite{confirmation}. 
This choice is made ad-hoc, assuming a fixed block arrival rate of roughly one in every $10$ minutes \cite{nakamoto2008bitcoin}.
Our simulation results in Figure \ref{fig:TtC-BlockRate} provide a quantitative way of choosing the threshold as a function of the block arrival rate.


\noindent{\textbf{Cycle Length}} -- 
Cycle length is defined as the sum of the mean (under steady state) busy period and mean idle period. 
A busy period is the time it takes for an inconsistent system to reach consistency -- that is, the mean length of a busy period is equal to the time to consistency metric. An idle period is the length of time for which a consistent system remains consistent.

Observe that as the cycle length is at least the mean idle time, it goes to infinity as the block arrival rate goes to zero.
Thus, the cycle length metric captures the trade-off between the time to consistency (which goes to $0$ as the block arrival rate goes to $0$) and the block arrival rate.

Our simulation results indicate that the cycle length may be a convex function of block arrival rates (Figure \ref{fig:CycleLength-Blockrate}); thus on a given P2P network $H$ there may be a unique optimal block arrival rate that minimizes the cycle length.
Of note, the results in Figure \ref{fig:CycleLength-Blockrate} identify that the cycle length may satisfy two key robustness properties.
The first is that for a fixed $N$, there is a wide range of block arrival rates for which the cycle length is approximately constant.
The second is that there exists a range of block arrival rates for which the cycle length is nearly invariant to the number of peers in the network.


\noindent{\textbf{Consistency Fraction}} -- 
This metric is defined as the expected fraction of peers $p \in \{1, \ldots, N\}$ for which $B_p(t) = B(t)$ at any time $t$ (in steady state) -- we call these peers \emph{consistent}. Notice that this metric does not depend on time as the system is assumed to be in steady state.
In particular, this metric provides a lower bound on the growth rate of the distinguished path of tree policy blockchains -- all consistent peers add blocks to the distinguished path, but some inconsistent peers may do so as well.
In an implementation such as the Bitcoin blockchain, arrivals at inconsistent peers contribute to wasted mining power and energy consumption.

\noindent{\textbf{Growth Rate of the Distinguished Path}} -- For tree policy blockchains, the growth rate of the distinguished path characterizes the \emph{progress} of the system in steady state; namely the rate at which eventually confirmed blocks are added.
In particular, for tree policy blockchains, the growth rate of the distinguished path characterizes the block throughput, as only blocks on the distinguished path will eventually be confirmed.

Our simulation results (Figure \ref{fig:GrowthRate-BlockRate}) indicate that despite a monotonically decreasing consistency fraction, the growth rate of the distinguished path increases to a maximum before decaying.
This suggests that there may be a unique block arrival rate which maximizes the growth rate of the distinguished path.
We note that the arrival rate which minimizes the cycle length appears to be less than the arrival rate which maximizes the growth rate of the distinguished path.

When the block arrival rate is small, almost all blocks are on the distinguished path and very few are \emph{orphaned} (not on the distinguished path).
Therefore, a small increase in arrival rate increases the throughput.
On the contrary, for large block arrival rates, nearly all blocks are orphaned; hence increasing the block arrival rate decreases throughput.

\noindent{\textbf{Age of Information}} -- 
This is a relatively new metric in the queueing theory literature and has had significant impact on scheduling algorithms \cite{kaul2012real}.
We measure the age of information for a peer $p$ in discrete units, where an age of 0 indicates that the peer is consistent, an age of 1 indicates that the peer is 1 block away from being consistent, \textit{etc.}.
As expected, the age of information increases monotonically with block arrival rates (see Figure \ref{fig:NumBlocksBehind-BlockRate}).
The age of information is inversely related to the consistency fraction.

\subsection{Trade-Offs Between the Metrics}
We note that there are various trade-offs between the metrics identified above, which are confirmed by the simulations in Section \ref{sec:simulations}.
In particular, we note that the time to consistency, consistency fraction, and age of information are all optimized as the block rate decreases to zero.
However, at the expense of performance with respect to these metrics, increasing the block rate  can decrease the cycle length and increase the growth rate of the distinguished path in tree policy blockchains, as shown in Figures \ref{fig:CycleLength-Blockrate} and \ref{fig:GrowthRate-BlockRate}.

The cycle length and growth rate are more sophisticated metrics and characterize the temporal dynamics of a blockchain system.
The growth rate of the distinguished path also characterizes the generation rate of orphaned blocks.
Minimizing the rate of orphaned blocks is desirable as orphaned blocks may pose security threats in applications such as cryptocurrencies \cite{gervais2016security}.
Nevertheless, a detailed analysis of the security of blockchains is application-specific and therefore out of the scope of this paper \cite{pass2017analysis, garay2015bitcoin}, as our goal is to study aspects of blockchain which are universal to all applications.

For the core blockchain protocol, the block arrival rate is the only system parameter that can be chosen by a system designer.
This underscores the need for further study on the performance of blockchain systems with regard to the metrics identified in this work in order to improve current blockchain systems and design future ones.
\section{Simulation Results}
\label{sec:simulations}
In this section, we numerically analyze the blockchain system under two settings -- a synthetic data setting and a real data setting comprising of block arrival data of the Bitcoin network \cite{blockchair2020data, nakamoto2008bitcoin}. 

\subsection{Synthetic Data}

We numerically study our stochastic network model and further characterize its performance whenever it is stable. 
In particular, we use the metrics identified in Section \ref{sec:interpretations} to gain further insights into the network behavior. 
This complements our theoretical result which gives bounds on the stability region. 
We analyze the metrics identified in Section \ref{sec:interpretations} with respect to varying block arrival rates, using synthetic parameters for the network.
This is only possible with synthetic data, as the real data set consists of a single arrival process of blocks and thereby we cannot use it to directly assess the impact of varying the block arrival rate on system performance.

\noindent {\bf Simulation Setup} -- 
We consider three different P2P networks comprising of the complete network on $10, 20, 30$ peers (nodes). 
In all these cases, each peer attempts a communication at rate $1$.
All simulations were run for $500$ cycles, with $30$ independent simulations for each block arrival rate.
The error bars represent $95\%$ confidence intervals.
Theorem \ref{thm:stability} gives bounds on the stability region in the three cases as $0.47 \leq \mu_{10} \leq 1.1$, $0.35 \leq \mu_{20} \leq 1.05$ and $0.30 \leq \mu_{30} \leq 1.03$ respectively.
Our simulation suggests that the true critical value is closer to the lower bound in all of these cases.


\noindent{\textbf{Time to Consistency}} --
\begin{figure}[ht!]
    \centering
    \includegraphics[width=0.65\columnwidth]{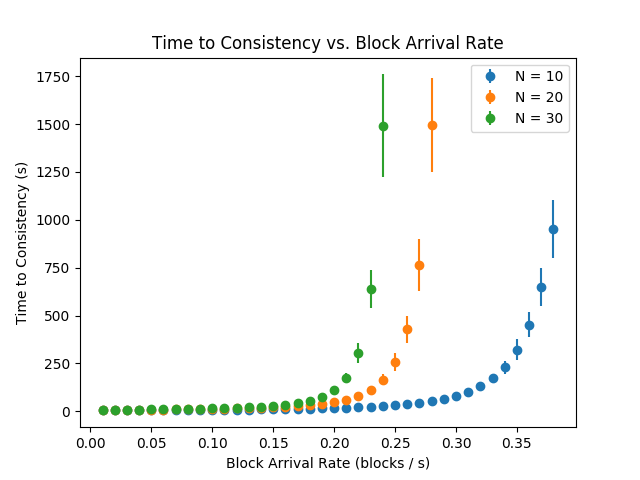}
    \caption{The mean time to consistency.}
    \label{fig:TtC-BlockRate}
\end{figure}
Figure \ref{fig:TtC-BlockRate} shows the effect of increasing the block arrival rate on the time to consistency.
As expected, we observe that the the time to consistency grows monotonically with block arrival rate.

\noindent{\textbf{Cycle Length}} --
\begin{figure}[ht!]
    \centering
    \includegraphics[width=0.65\columnwidth]{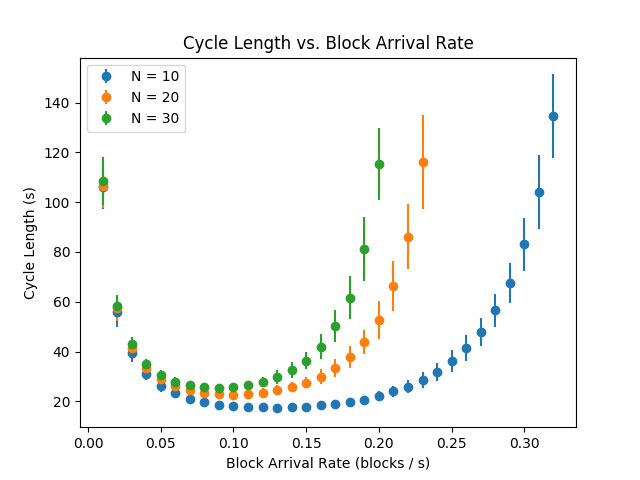}
    \caption{The mean cycle length.}
    \label{fig:CycleLength-Blockrate}
\end{figure}
In Figure \ref{fig:CycleLength-Blockrate} we depict the average cycle length, which is the sum of the mean time to consistency and the mean length of consistency.
Observe that the cycle length has asymptotes to infinity at both $0$ and $\mu$, which are observed in Figure \ref{fig:CycleLength-Blockrate}.
We observe that the cycle length appears to be a convex function of block arrival rates for complete networks and is nearly flat near its infimum.
This suggests that when $H$ is a complete network, there is a wide range of block arrival rates for which the cycle length is approximately constant, indicating a sense of robustness for block arrival rates for the designer of the system to choose.
The figure also suggests there is a reasonably large set of block rates for which the cycle length is fairly robust to changes in the number of peers.

\noindent{\textbf{Consistency Fraction}} --
\begin{figure}[ht!]
    \centering
    \includegraphics[width=0.65\columnwidth]{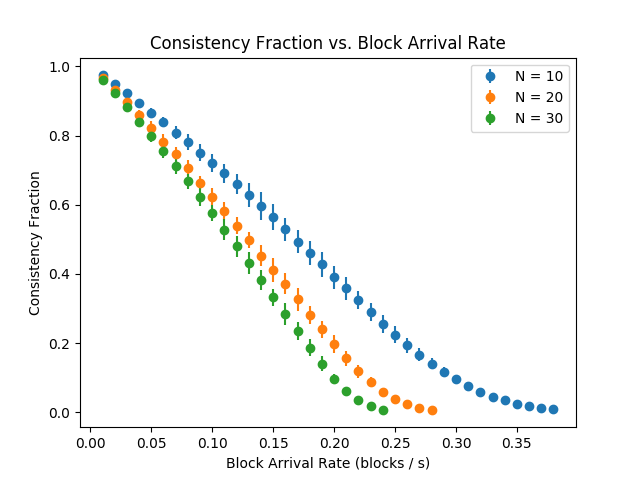}
    \caption{The consistency fraction.}
    \label{fig:FracConsistent-BlockRate}
\end{figure}
Figure \ref{fig:FracConsistent-BlockRate} captures the relationship between increasing block rates and 
the mean fraction of consistent peers.
There appears to be an inflection point when one half of peers are consistent -- above this point the graph is concave; below it is convex.

\noindent{\textbf{Growth Rate of the Distinguished Path}} --
\begin{figure}[ht!]
    \centering
    \includegraphics[width=0.65\columnwidth]{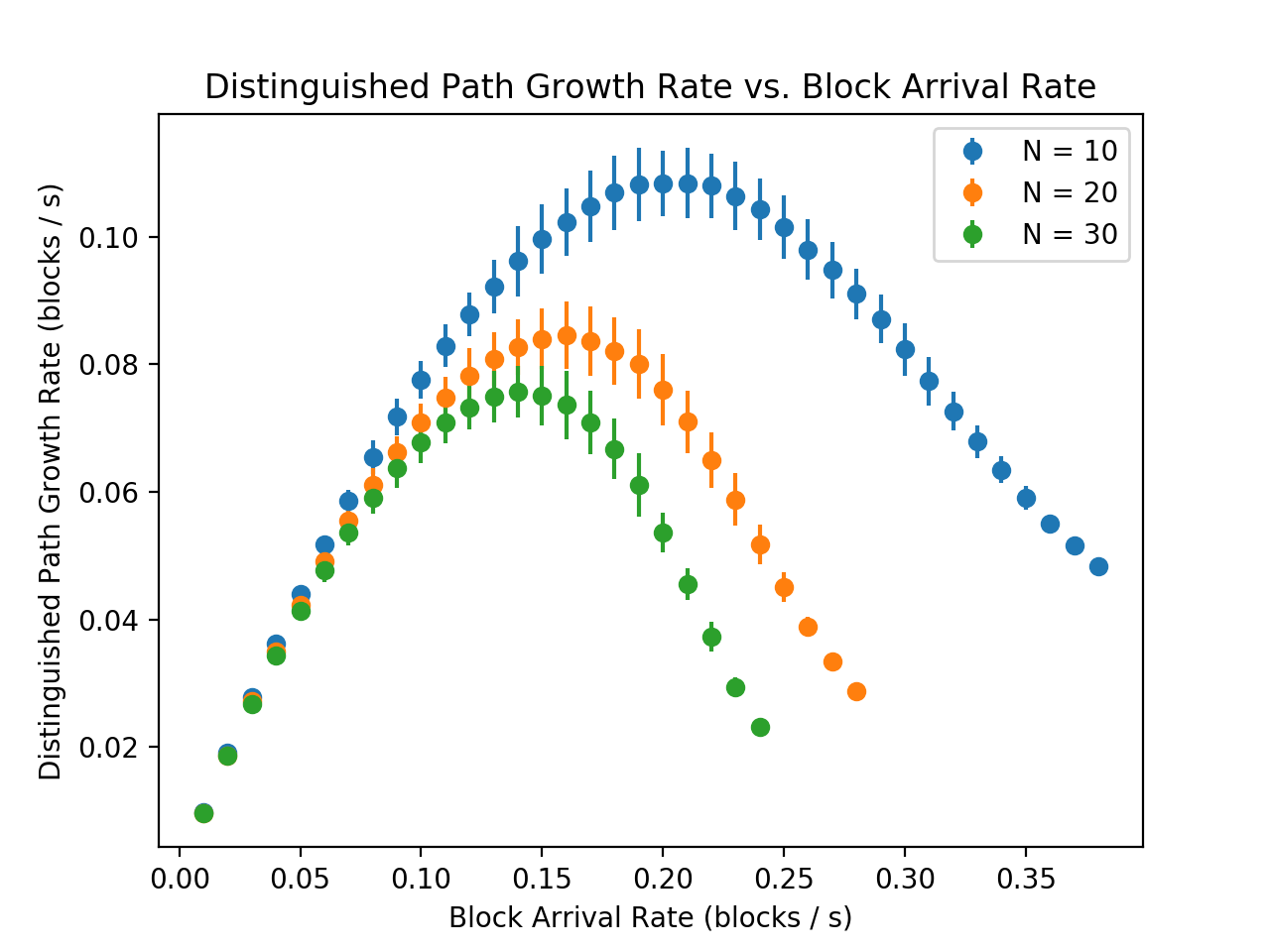}
    \caption{The growth rate of the distinguished path under the tree policy.}
    \label{fig:GrowthRate-BlockRate}
\end{figure}
Figure \ref{fig:GrowthRate-BlockRate} shows the relationship between the block arrival rate and the growth rate of the distinguished path.
As with consistency fraction, the growth rate curve appears to have an inflection point.
We observe that the growth rate of the distinguished path appears to have a unique maximum.
However, the block arrival rate that produces this maximum is not equal to the rate that minimizes the cycle length or the rate that produces an inflection point for the consistency fraction.

\noindent{\textbf{Age of Information}} --
\begin{figure}[ht!]
    \centering
    \includegraphics[width=0.65\columnwidth]{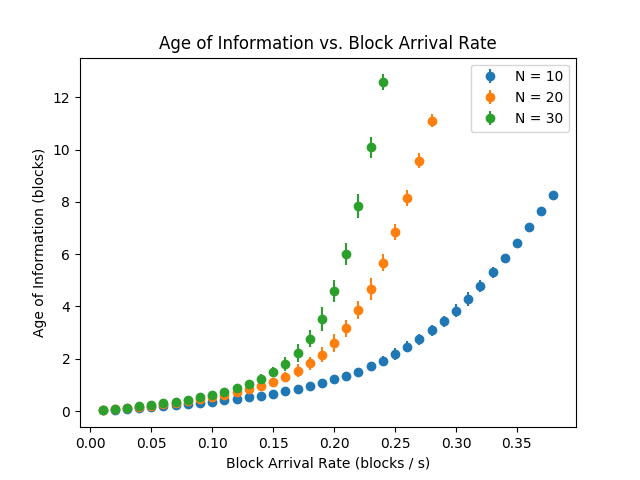}
    \caption{The average age of information.}
    \label{fig:NumBlocksBehind-BlockRate}
\end{figure}
Figure \ref{fig:NumBlocksBehind-BlockRate} depicts, for a typical peer $p$, the mean number of blocks behind consistency.
As expected, the age of information at a typical peer tends to infinity as the block arrival rate approaches the critical value.

\subsection{Real Data}
In this section, we analyze the performance of the blockchain system using real data from the Bitcoin Blockchain network \cite{blockchair2020data, nakamoto2008bitcoin}.
We do not consider the growth rate metric as the data given in \cite{blockchair2020data} does not include the blockchain DAG.

\noindent {\bf Experimental Setup} 
We consider a network of $3500$ peers connected on $4$ different network topologies -- the complete network, the (one-dimensional) torus with degree $32$, $32$-regular tree and a $32$-regular random network \cite{janson2011random}.
We choose this set of parameters guided by the measurement study in \cite{decker2013information}, which measures Bitcoin network and reports that it contains about $3500$ peers (this is an estimate), with a typical peer being connected to $32$ other peers. 
According to the measurement study \cite{gencer2018decentralization}, the average bandwidth of a peer in the Bitcoin P2P network is $73.1$ Megabits per second.
The Bitcoin protocol \cite{nakamoto2008bitcoin} specifies that a block is $1$ Megabyte ($8$ Megabits). 
Thus, we use a Poisson process of rate $\frac{73.1\text{ Mbps}}{{8\text{ bits per byte}}} = 9.14\text{ blocks/s}$ for communication among peers.
For the block arrivals, we consider the real data of block generation in the Bitcoin network provided in \cite{blockchair2020data}. 
This data set consists of $2000$ consecutive block arrival times between December $27$, $2019$ till January $09$, $2020$. 
We assume that each arriving block arrives at a peer chosen uniformly at random, as the data set does not provide complete information on who mined the block.

\noindent {\bf Results} - We provide numerical results on the metrics identified in Section \ref{sec:interpretations} in Tables \ref{tab:time_to_consistency}, \ref{tab:cycle_length}, \ref{tab:consistency_fraction} and \ref{tab:age_of_information} along with $95\%$ confidence intervals.
For comparison, we also simulate the same setup replacing the real data for block arrivals with an equivalent Poisson process of rate $1$ block per $600$ seconds.
This is the block arrival rate specified in the Bitcoin whitepaper \cite{nakamoto2008bitcoin} and can be verified by examining inter-arrival times for blocks in the data set collected in \cite{blockchair2020data}.

We observe from our results in Tables  \ref{tab:time_to_consistency}, \ref{tab:cycle_length}, \ref{tab:consistency_fraction} and \ref{tab:age_of_information} that, for the metrics we consider, the estimates on the real block arrival data and the equivalent Poisson data are similar, thereby providing another validation for the Poisson block arrival model typically used in the literature \cite{nakamoto2008bitcoin, bagaria2019prism, yang2019prism, papadis2018stochastic, li2018blockchain, li2019markov, frolkova2019bitcoin}.

\begin{table}[ht]
\caption{Time to Consistency (s)}
\centering
\begin{tabular}{| c || c | | c |}
    \hline
    Network Topology & Poisson Input & Bitcoin Input \\
    \hline
    \hline
    Complete & $1.91 \pm 0.0158$ & $1.92 \pm 0.0172$\\
    \hline
    Torus & $21.5 \pm 0.103$ & $21.5 \pm 0.121$\\
    \hline
    Tree & $48.4 \pm 0.636$ & $48.6 \pm 0.701$\\
    \hline
    Random & $1.97 \pm 0.0167$ & $1.98 \pm 0.0191$\\
    \hline
\end{tabular}
\label{tab:time_to_consistency}

\end{table}\begin{table}[ht]
\caption{Cycle Length (s)}
\centering
\begin{tabular}{| c || c | | c |}
    \hline
    Network Topology & Poisson Input & Bitcoin Input \\
    \hline
    \hline
    Complete & $625 \pm 0.000718$ & $551 \pm 0.000724$\\
    \hline
    Torus & $647 \pm 2.026$ & $570 \pm 1.99$\\
    \hline
    Tree & $674 \pm 6.73$ & $603 \pm 9.40$\\
    \hline
    Random & $625 \pm 0.000621$ & $551 \pm 0.000710$\\
    \hline
\end{tabular}
\label{tab:cycle_length}
\end{table}

\begin{table}[ht]
\centering
\caption{Consistency Fraction}
\begin{tabular}{| c || c | | c |}
    \hline
    Network Topology & Poisson Input & Bitcoin Input \\
    \hline
    \hline
    Complete & $0.998 \pm 1.95\text{e-}5$ & $0.998 \pm 2.34\text{e-}5$\\
    \hline
    Torus & $0.983 \pm 0.000154$ & $0.981 \pm 1.60e-4$\\
    \hline
    Tree & $0.977 \pm 0.000624$ & $0.974 \pm 0.000758$\\
    \hline
    Random & $0.998 \pm 1.93\text{e-}5$ & $0.998 \pm 1.88\text{e-}5$\\
    \hline
\end{tabular}
\label{tab:consistency_fraction}
\end{table}

\begin{table}[ht]
\caption{Age of Information (blocks)}
\centering
\begin{tabular}{| c || c | | c |}
    \hline
    Network Topology & Poisson Input & Bitcoin Input \\
    \hline
    \hline
    Complete & $0.00153 \pm 1.95\text{e-}5$ & $0.00174 \pm 2.32\text{e-}5$\\
    \hline
    Torus & $0.0170 \pm 0.000168$ & $0.0193 \pm 0.000177$\\
    \hline
    Tree & $0.0232 \pm 0.000662$ & $0.0261 \pm 0.000774$\\
    \hline
    Random & $0.00161 \pm 1.93\text{e-}5$ & $0.00183 \pm 1.88\text{e-}5$\\
    \hline
\end{tabular}
\label{tab:age_of_information}
\end{table}

From the results in Tables \ref{tab:time_to_consistency}, \ref{tab:cycle_length}, \ref{tab:consistency_fraction},\ref{tab:age_of_information}, it appears that the impact of the network topology on the performance of the Bitcoin network (using the data in \cite{decker2013information, gencer2018decentralization}) is minimal. 
This is in large part due to the fact that block propagation delay is much lower than the block arrival rate \cite{decker2013information}.
However, the results from our synthetic simulations suggest that with increasing network sizes (and thus with increasing block propagation delay), the effect of network topology should become more pronounced.
The current trend of blockchain adoption suggests that that the size of blockchain networks is rapidly increasing \cite{decker2013information, gencer2018decentralization}, and thus understanding the impact of network size and topology on performance is crucial for designing future blockchain systems.

\section{Related Work}
\label{sec:related_work}

We classify the related work into three categories - Peer-to-Peer networks, queueing theory, and a growing body of blockchain literature.

\noindent {\bf Peer-to-Peer Networks and Gossip Algorithms} - 
Stochastic models for standard P2P systems have been widely studied in the literature.
Yang and de Veciana~\cite{yang2004service} study the rate at which a file can be spread to peers on a P2P network under bursty requests, \textit{e.g.} when a new episode of a popular television show first becomes available.
Qiu and Srikant~\cite{qiu2004modeling} develop a steady-state fluid model to study the interactions between the number of peers who can disseminate a file and the number of peers requesting that file on a P2P network.
Massouli\'e and Vojnovic~\cite{massoulie2005coupon} use fluid models study the dynamics of peers entering and leaving a P2P network when the pieces of a file are initially distributed amongst peers.
Zhu and Hajek~\cite{zhu2012stability} study the stability of P2P systems in the context of the \emph{missing piece syndrome}, wherein the only peer with a particular piece of a file departs the system.
Baccelli \textit{et al.} \cite{baccelli2013p2p} studies P2P packet dissemination with non-uniform connectivity of the peers but do not consider network congestion.
All of these systems study P2P file transfers with dynamic P2P networks under the assumption that all pieces of a file to be disseminated exist at some peer at time 0.
Our work assumes a static P2P network, but we analyze the dissemination of new files that arrive to the system exogenously.

In the spirit of the previously mentioned literature, there is a body of research concerned with developing \emph{gossip algorithms} with the goal of solving distributed algorithmic problems, such as computing the average of measurements taken by several peers, using local algorithms.
See~\cite{shah2009gossip} for a more detailed discussion of these algorithms.
In addition, these algorithmic results concern the spreading time of a rumor on a P2P network with respect to the underlying graphical topology of the peers
~\cite{chierichetti2010rumour, fountoulakis2010rumor,  fountoulakis2012ultra, ganeshnotes,  panagiotou2017asynchronous, sanghavi2007gossiping, shah2009gossip}.
As above, these papers assume that all content (rumors) to be spread exists in the network at time 0; our work incorporates exogenous arrivals to study block propagation on P2P networks in blockchain-like systems.

Ioannidis \textit{et al.}~\cite{ioannidis2009optimal} study hybrid networks where a server updates peers in a P2P network on a real-time situation such as road traffic, and the peers share updates with each other in order to minimize the overall age of information.
While the authors of~\cite{ioannidis2009optimal} do study peer-to-peer networks in the context of information arrivals, only the newest information is of concern in their setup and thus their model does not capture the causality of references between arriving blocks to a blockchain.
Our model enforces that under stability, all peers are aware of all blocks.

\noindent {\bf Queueing Theory Approaches to Blockchain Systems} - 
Due to the temporal block arrival dynamics, recent research efforts have analyzed queueing models for blockchain systems and have focused specifically on transaction, without considering the impact of the distributed network dynamics. 
Many of these papers~\cite{li2018blockchain, li2019markov, ricci2019learning, kawase2017transaction, misic2019modeling} study the duration between when a transaction is introduced to a cryptocurrency system and when it is included in a block.

Li~\textit{et al.} \cite{li2018blockchain, li2019markov} consider a model for the blockchain where transactions arrive according to a stationary arrival process into the system. 
The model assumes that each transaction arrives into the network, waits for a random independent time duration to be included in a block, and then another random independent duration when this block is disseminated to all peers, and then exits the queueing system.
These models however, do not capture the bandwidth limitation of the P2P network. 
In contrast, in our model, the network is bandwidth-limited and block dissemination times of depend on the instantaneous network congestion.

Frolkova and Mandjes~\cite{frolkova2019bitcoin} use a $G/M/\infty$ queue with batch departures to model blockchain systems -- in their model, if a block $b$ completes its service, all blocks which arrived to the system before block $b$ which are still in service also depart the system.
The infinity-server model in~\cite{frolkova2019bitcoin} implicitly assumes unbounded communication bandwidth; our work considers block propagation on arbitrary networks of $N$ peers with bounded bandwidth.
Ricci \textit{et al.}~\cite{ricci2019learning} and Kawase and Kasahara~\cite{kawase2017transaction} use the $M/G/1$ queue and one of its variants to study the amount of time from when a transaction is created to when it is included in a block.
Despite the fact that the blockchain protocol is designed to address consensus on distributed ledgers~\cite{nakamoto2008bitcoin}, all of \cite{frolkova2019bitcoin, ricci2019learning, kawase2017transaction} analyze the blockchain as a centralized ledger.
Our results address the fundamental distributed dynamics underlying consensus in blockchain systems.

Misic \textit{et al.}~\cite{misic2019modeling} model the Bitcoin blockchain system using a Jackson network of $M/G/1$ queues.
However, they assume that the capacity of their network far exceeds the block arrival rate so that their model is \textit{de facto} stable and scalable.
Our work approaches blockchain dynamics in more generality and also bounds the stability region of the system in order to asses scalability.
Our model is the first to consider both distributed consensus dynamics as well as stability and scalability of blockchain systems.

\noindent {\bf Other Blockchain Models} - 
Papadis~\textit{et al.}~\cite{papadis2018stochastic} propose a stochastic network model for blockchain systems -- \cite{papadis2018stochastic} considers the limit of a P2P model when the communication delays are negligible compared to the block arrival rate.
Their paper provides no explicit analysis of stability.
In contrast, we model communication delays and thus establish that the system need not always be stable. 
Furthermore, we derive bounds on the stability region.
In addition, our model explicitly considers the evolution of the blockchain graph structure, allowing us to characterize the performance of various policies by which blocks add references.


Several papers in the literature make the implicit assumption that blockchain systems constructed according to the tree policy are one-ended in the temporal limit~\cite{nakamoto2008bitcoin, li2018blockchain, li2019markov, frolkova2019bitcoin, papadis2018stochastic, eyal2018majority, gobel2016bitcoin, bagaria2019prism, yang2019prism, buterin2013ethereum}.
The throughput optimal policy is introduced in~\cite{lewenberg2015inclusive}, which also implicitly assumes one-endedness.
This paper provides conditions for when these assertions hold.
Pass \textit{et al.}~\cite{pass2017analysis} and Sompolinsky and Zohar~\cite{sompolinsky2015secure} show a condition equivalent to one-endedness under the tree policy, assuming unbounded bandwidth.
Thus, their analyses do not shed insight on the effects of bandwidth limits and communication delays, and network topologies.

\noindent {\bf Measurements and System Implementations} - Decker and Wattenhofer \cite{decker2013information} perform a measurement analysis of the Bitcoin P2P network.
They provide measurements on the number of peers and average degree in the network.
They also find that information propagation in the Bitcoin P2P network resembles a gossip protocol.
Recently, there have been efforts to modify the original Bitcoin protocol \cite{nakamoto2008bitcoin}, to improve the blockchain system under various metrics. 
For example, Bagaria \textit{et al.}~\cite{bagaria2019prism} and Yang \textit{et al.}~\cite{yang2019prism} propose improvements to the throughput of blockchain-like systems from an information theoretic perspective over the standard models of Bitcoin and Ethereum, the two largest blockchain implementations~\cite{nakamoto2008bitcoin, buterin2013ethereum}.
Bojja \textit{et al.} and Fanti \textit{et al.} \cite{bojja2017dandelion, fanti2018dandelion++} propose new protocols for P2P communication that preserve peer anonymity for blockchain systems. 
Conducting stability and scalability analyses for these protocols is an exciting avenue for future work.

\section{Concluding Remarks}
\label{sec:conclusion}
In this paper, we model blockchain systems as a gossiping protocol on a peer-to-peer network subject to exogenous block arrivals.
We show that when the gossiping protocol is stable, any blockchain constructed according to the tree or throughput-optimal policy is one-ended.
We then determine bounds on the maximum block arrival rate for a P2P network $H$ such that the stochastic model is stable.
Following this analysis, we examine the scalability of several commonly studied network topologies.
We then verify our insights through simulations on both synthetic and real data.

There are several open problems that arise from this paper.
Future improvements to our bounds in Theorem \ref{thm:stability} would allow for more complete scalability analyses.
This may require the development of novel mathematical tools.
In addition, having analytic expressions for the performance metrics identified in this paper is important for assessing and comparing different design choices for the network.

In this paper, we study the fundamental aspects of distributed consensus in blockchain systems, namely the dynamics of the blockchain DAG and the requisite stability and scalability.
Extending our model to use transactions as the atomic unit is a natural direction for future work.
\\
\newline
\noindent {\bf Acknowledgements} 
This work was completed while AG and AS was at The University of Texas at Austin.
AG was supported by a Ripple Foundation Fellowship, awarded to The University of Texas at Austin.
AS thanks Fran\c cois Baccelli for support and funding through the Simons Foundation grant (\#197892) awarded to The University of Texas at Austin. AS also thanks Sergey Foss for pointing out reference \cite{foss2004overview}.
The authors thank anonymous reviewers and Daniel Sadoc Menasch\'e for their insightful comments on the presentation of our results.

\pagebreak
\bibliographystyle{plain}
\bibliography{references}

\pagebreak
\appendix
\section{Some Technical Considerations for Limiting DAGs}
\label{app:technical-dag}
In this appendix, we discuss the notion of convergence of a sequence of finite DAGs $(G_k)_{k \in \mathbb{N}}$ to a limit which is an infinite DAG.
This allows us to define a limiting DAG $G(\infty)$ (which is the temporal limit of the blockchain DAG) and characterize the confirmed blocks contained in the limiting DAG.
Such problems of convergence are studied in \cite{aldous2007processes, benjamini2011recurrence, baccelli2019doeblin}.

We use the space $\mathcal{B}_*$ of locally finite, connected, and rooted DAGs where each vertex has a unique mark as defined in \cite{aldous2007processes}.
We consider the metric space $(\mathcal{B}_*, d_*)$ where $d_*(G_1, G_2) = \frac{1}{r+1}$ if $r$ is the least non-negative integer such that the $r$-balls centered at the roots of $G_1$ and $G_2$ are equal.
If $G_1, G_2$ are finite DAGs in $\mathcal{B}_*$ with $G_1 = G_2$, we use the convention that $d_*(G_1, G_2) = 0$ as for any large enough radius $r$, the $r$-ball encompasses the entire (finite) DAG.
It is established in \cite{aldous2007processes} that $(\mathcal{B}_*, d_*)$ is a complete metric space.

\begin{lemma}
  Any sequence of rooted DAGs $(G_k)_{k \in \mathbb{N}}$ satisfying the following conditions is Cauchy in $(\mathcal{B}_*, d_*)$:
  \begin{itemize}
      \item Every DAG in $(G_k)_k$ has the same root vertex.
      \item $G_k$ is connected for all $k \in \mathbb{N}$.
      \item $G_k \subseteq G_{k+1}$ for all $k \in \mathbb{N}$.
      \item There exists $n \in \mathbb{N}$ such that no vertex in $G_k$ has degree greater than $n$, for all $k \in \mathbb{N}$.
  \end{itemize}
  Furthermore, the limit $\lim_{k\to\infty}G_k = \bigcup_{k \in \mathbb{N}}G_k$.
  \label{lem:dag-cauchy}
\end{lemma}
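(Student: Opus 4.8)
The plan is to prove the Cauchy property directly, exploiting the uniform degree bound to control how the balls around the common root grow, and then to identify the abstract limit guaranteed by completeness with the concrete union $\bigcup_{k} G_k$.

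First I would fix a radius $r$ and track the $r$-ball $B_r(G_k)$ centered at the common root. The technical point to state carefully is that adjoining vertices and edges can only decrease graph distances, so since $G_k \subseteq G_{k+1}$, every vertex within $r$ hops of the root in $G_k$ remains within $r$ hops in $G_{k+1}$; hence $B_r(G_k) \subseteq B_r(G_{k+1})$ as an inclusion of subgraphs. The crucial use of the hypotheses is that the degree bound $n$ caps the number of vertices reachable in $r$ hops by $1 + n + n(n-1) + \cdots + n(n-1)^{r-1}$, a finite quantity depending only on $n$ and $r$. Thus $(B_r(G_k))_k$ is a non-decreasing sequence of subgraphs of a single finite graph, and a non-decreasing family of subsets of a finite set must stabilize: there is a $K_r$ with $B_r(G_k) = B_r(G_{K_r})$ for all $k \ge K_r$. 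Given $\varepsilon > 0$, choosing $r$ with $\frac{1}{r+1} < \varepsilon$ then yields, for all $k, l \ge K_r$, coinciding $r$-balls and hence $d_*(G_k, G_l) \le \frac{1}{r+1} < \varepsilon$, so the sequence is Cauchy. By the completeness of $(\mathcal{B}_*, d_*)$ from \cite{aldous2007processes}, it converges to some limit in $\mathcal{B}_*$.

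Next I would identify this limit with $G := \bigcup_k G_k$. I would first check $G \in \mathcal{B}_*$: it is rooted at the common root, connected as a union of nested connected graphs sharing that root, carries the unique marks inherited from the $G_k$, and is locally finite since the degree bound $n$ passes to the union. The heart of the identification is the ball identity $B_r(G) = \bigcup_k B_r(G_k) = B_r(G_{K_r})$. The inclusion $\supseteq$ is immediate from $G_k \subseteq G$; for $\subseteq$, any vertex or edge lying within distance $r$ of the root in $G$ is witnessed by a finite path, whose finitely many edges all lie in a single $G_k$ by nestedness, so the vertex or edge already belongs to $B_r(G_k)$. Consequently $B_r(G_k) = B_r(G)$ for all $k \ge K_r$, giving $d_*(G_k, G) \le \frac{1}{r+1} \to 0$; uniqueness of limits in a metric space then forces $G$ to equal the abstract limit, and since $\lim_{k} G_k = \bigcup_k G_k$ this is exactly the claim.

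The main obstacle, and the step where the hypotheses are genuinely indispensable, is the stabilization of the $r$-balls: without the uniform degree bound the $r$-ball of the union could contain infinitely many vertices, so the nested balls would never stabilize and the union would fail to be locally finite. This simultaneously shows why the degree bound is needed for convergence and why it is needed for the limit to reside in $\mathcal{B}_*$. The distance-monotonicity under edge addition and the passage from path-finiteness to membership in a single $G_k$ are routine and would be dispatched quickly.
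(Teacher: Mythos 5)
Your proof is correct and follows essentially the same route as the paper: both arguments show that the uniform degree bound forces the nested $r$-balls around the common root to stabilize at some finite index $K_r$, which immediately gives the Cauchy property. Your write-up is in fact slightly more complete than the paper's, since you also verify that the union lies in $\mathcal{B}_*$ and identify it with the metric limit via the ball identity $B_r(G) = B_r(G_{K_r})$ — a step the paper's proof leaves implicit.
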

\begin{proof}
    For all $k \in \mathbb{N}$, denote by $V_{k}^{r}$ the vertices of $G_k$ within $r$ hops of the root.
    It is clear that for each $r \in \mathbb{N}$, there is a finite index $k_r$ such that for $m \geq k_r$, $V_{m}^{r} = V_{k_r}^{r}$, for otherwise there exists a finite index $n_r$ such that some vertex in the $r$-ball of the root of $G_{n_r}$ has degree greater than $n$.
    
    In particular, for any two indices $m, l \geq k_r$, $d_*(G_m, G_l) \leq \frac{1}{r+1}$ as the $r$-balls centered at their common root vertices agree.
    It follows immediately that the sequence $(G_k)_{k \in \mathbb{N}}$ is Cauchy in $(\mathcal{B}_*, d_*)$. 
\end{proof}

The first three conditions imposed in Lemma \ref{lem:dag-cauchy} are immediate from the construction of our system model in Section \ref{sec:model} by considering block 0 as the common root vertex and marking each block with its arrival index.
The fourth condition is established for the tree and throughput-optimal policies in Lemma \ref{lem:locally-finite}.

Note by choosing $G_k$ to be $G(A_k)$, which is the blockchain at the moment of the $k$-th block arrival, we have a sequence of graphs satisfying the conditions in Lemma \ref{lem:dag-cauchy}, and thus $G(\infty) = \lim_{k\to\infty}G(A_k) = \bigcup_{k \in \mathbb{N}}G(A_k)$.

\section{Proof of Proposition \ref{prop:maximal_path}}
\begin{proof}[Proof of Proposition \ref{prop:maximal_path}]
\label{prf:maximal_path}
   We proceed by strong induction.
   For the base case, note that the edge from a vertex 1 must connect to vertex 0 by the definition of an edge selection policy.
   
   Suppose next that all maximal paths on block structure consisting of vertices $0, \ldots, k$ end at vertex 0 for some $k \in \mathbb{N}$, and suppose that vertex $k+1$ arrives at time $t_{k+1}$.
   Every edge from vertex $k+1$ either connects directly to vertex 0, in which case we have a maximal path ending at vertex 0, or ends at some vertex in $\{1, \ldots, k\}$.
   In the latter case, all maximal paths from vertex $k+1$ include as a subpath the maximal path from some vertex $i \in \{1, \ldots, k\}$; thus from the definition of a maximal path, all maximal paths from vertex $k+1$ end at vertex 0 at time $t_{k+1}$, and as the edges are fixed henceforth, the proposition follows.
\end{proof}

\section{Proofs from Section \ref{sec:structural-properties}}
\subsection{Proof of Proposition \ref{prop:trust}}
\label{prf:trust}
\begin{proof}[Proof of Proposition \ref{prop:trust}]
  Since $b$ is a confirmed block, all but finitely many blocks of index greater than $b$ have a directed path to $b$ in $G(\infty)$.
  Since the arrival rate at each peer $p \in \{1,\cdots,N\}$ is positive, all peers eventually add infinitely many blocks with index greater than $b$.
  As only finitely many of those blocks cannot have a path to $b$ (as $b$ is confirmed), it means that there is eventually a block added by peer $p$ at some $t$, with a path to $b$ in $G_p(t)$, and thus also a path in $G(\infty)$.
\end{proof}

\subsection{Proof of Lemma \ref{lem:infinitely-confirmed}}
\label{prf:infinitely-confirmed}
\begin{proof}[Proof of Lemma \ref{lem:infinitely-confirmed}]
  As $G(\infty)$ is locally finite, it contains an infinite path, and in particular each confirmed block lies on some infinite path (for otherwise a confirmed block must have infinite in-degree).
  We first show that there exists an infinite path in $\widehat{G}(\infty)$.
  Suppose otherwise, and all connected components of $\widehat{G}(\infty)$ have finite cardinality. Thus, $\widehat{G}(\infty)$ is an union of infinite collection of finite non-empty connected DAGs. 
  Thus, each block in a connected component of $\widehat{G}(\infty)$ (which is confirmed by construction of $\widehat{G}(\infty)$), has infinitely many blocks that do not reference it, which contradicts the definition of a block being confirmed.
  Thus, $\widehat{G}(\infty)$ contains at-least one infinite connected component, i.e., there is at-least one infinite path $p_1$.
  Suppose there are two infinite paths in $\widehat{G}(\infty)$ that intersect only finitely often. 
  This implies that there are confirmed blocks on either paths of $\widehat{G}(\infty)$, that are missing references from infinitely many other blocks.
  As in a locally finite DAG $G(\infty)$, all neighbors of a confirmed block are also confirmed, the two paths contradicts the definition that blocks in $\widehat{G}(\infty)$ are confirmed.
\end{proof}

\subsection{Proof of Lemma \ref{lem:one-ended-confirmed}}
\label{prf:one-ended-confirmed}
\begin{proof}[Proof of Lemma \ref{lem:one-ended-confirmed}]
  We proceed by contradiction.
  Suppose $G(\infty)$ is one ended and the number of confirmed blocks are finite. 
  This implies that there is a confirmed block of greatest index, denoted by $b^{'}$.
  Note by definition of a confirmed block, there are infinitely many blocks having a path to $b^{'}$. 
  As the DAG is locally finite, this implies that block $b^{'}$ lies on an infinite path denoted by $p$.
  Denote by block $b$ the first block with index greater than $b^{'}$ which lies on the infinite path $p$.
  Let $(v_k)_{k \geq 1}$, be the collection of all blocks indexed greater than or equal to $b+1$, with no directed path to $b$ in $G(\infty)$.
  The existence of such an infinite sequence $(v_k)_{k \geq 1}$ follows as block $b$ is not confirmed.
  It suffices to now establish that there exists an infinite path $(v_{k_i})_{i \geq 1}$ as a subset of $(v_k)_{k \geq 1}$.
  The existence of such a infinite path contradicts the fact that $G(\infty)$ is one-ended, as the infinite path $(v_{k_i})_{i \geq 1}$ and the infinite path $p$, intersect only finitely many times.
  
  By construction, all blocks in $G(\infty)$ have a path to $0$.
  Suppose that there is no infinite path in $(v_k)_{k \geq 1}$.
  This implies that the maximum DAG distance (number of ``hops'') from block $0$, to any block in $(v_k)_{k \geq 1}$ is finite.
  However, this contradicts the local finiteness of $G(\infty)$. 
  Thus there exists an infinite path  $(v_{k_i})_{i \geq 1}$ as a subset of $(v_k)_{k \geq 1}$.
\end{proof}

\section{Proofs from Section \ref{sec:one-ended-policies}}
\subsection{Proof of Lemma \ref{lem:locally-finite}}
\label{prf:locally-finite}
\begin{proof}[Proof of Lemma \ref{lem:locally-finite}]
  Notice that when a block arrives at time $t$, the set of outgoing edges are the subset of blocks that have arrived before it.
  Almost surely, for all $t$, only finitely many block have arrived before time $t$.
  Thus, almost-surely, all blocks have finite out-degree.
  
  In both the tree and throughput-optimal policies, new blocks are only connected to leaves. 
  Thus, each peer can add at most one incoming edge to any block, since that block is no longer a leaf after the addition of such an edge.
  Thus, the in-degree of all blocks in $G(t)$, under both policies is bounded above by $N$.
\end{proof}

\subsection{Proof of Theorem \ref{thm:tree-one-ended}}
\label{prf:tree-one-ended}
\begin{proof}[Proof of Theorem \ref{thm:tree-one-ended}]
  Consider the set of blocks $(v_k)_{k \in \mathbb{N}}$, such that for each $k \geq 1$, $v_k$ corresponds to the start of the distinguished path in $G(C_k)$. 
  First, we shall establish under the hypothesis of the theorem, that for all $k \geq 0$, $v_k \neq v_{k+1}$. 
  To do so, fix any $k \geq 0$, and denote by $Z_k$ to be the length of the distinguished path in $G(C_k)$. Denote by the first exogenous block $i \in \mathbb{N}$, to arrive at a peer $p \in \{1,\ldots,N\}$, at time $C_k < t <C_{k+1}$. 
  By the definition of tree policy, this extends the length of the distinguished path in $G_p(t)$ to $Z_k + 1$. 
  However, at time $C_{k+1}$, all peers are aware of the block $i$, all peers' distinguished path length is at-least $Z_{k}+1$ and thus, the length of the distinguished path in $G(C_{k+1})$ is of length at least $Z_k + 1$. 

  \begin{lemma}
    Suppose all peers use the tree policy.
    Let $C$ be a time of consistency.
    Let $v_C$ be the start of the distinguished path in $G(C)$. 
    Then, for all $t \geq C$, the distinguished path in $G(t)$ passes through $v_C$.
    \label{lem:distinguished_path}
  \end{lemma}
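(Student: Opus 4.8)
The plan is to exploit the fact that, under the tree policy, every block has a single outgoing reference fixed at its creation, so by Proposition~\ref{prop:maximal_path} the global DAG $G(t)$ is a rooted tree with root $0$ and all edges oriented toward $0$. Under this structure a maximal path from a block $b$ has length $d(b,0)$, so the distinguished path of $G(t)$ runs from the deepest block (breaking ties by least index) down to $0$; by the remark following Definition~\ref{def:distinguished_path} its starting vertex is exactly $\inf\mathcal{L}_p(t)$. In particular $v_C = \inf \mathcal{L}_p(C)$ is the least-indexed block of maximal depth $Z$ in $G(C)$, where $Z$ is the length of the distinguished path in $G(C)$. I first record two facts used throughout: $d(v_C,0)=Z$ for all time, since the ancestors of $v_C$ are frozen once created; and $v_C \in B_p(t)$ for every peer $p$ and every $t \ge C$, since $B_p(C)=B(C)$ and blocks are never removed.

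The core of the argument is the invariant, proved by induction over the arrival epochs after $C$: \emph{every block $b \in G(t)$ with $d(b,0) > Z$ has $v_C$ as an ancestor}. Communications create no blocks or edges, so only arrivals need to be checked, and a block's depth in the view of any peer that holds it coincides with its global depth, because the communication rule guarantees a block's parent is already present upon delivery, so each $G_p(t)$ is a tree rooted at $0$ whose ancestor chains agree with the global one. When block $i$ arrives at peer $p$, it attaches to $u := \inf \mathcal{L}_p(t^-)$, a deepest block in $p$'s local DAG. Since $B_p(t^-) \supseteq B(C)$, peer $p$ has maximal depth at least $Z$, hence $d(u,0)\ge Z$ and $d(i,0) = d(u,0)+1 > Z$. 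If $d(u,0) > Z$, the induction hypothesis applies to $u$, and as $i$ is a child of $u$ it too descends from $v_C$. The remaining case $d(u,0)=Z$ is where $v_C$ must be identified as the attachment point.

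The main obstacle is this tie-breaking case, which forces me to track indices as the peers' views diverge after $C$. I would dispatch it with an auxiliary claim: \emph{every block of depth exactly $Z$ in $G(t)$, $t \ge C$, has index at least $v_C$}. Blocks of depth $Z$ already present at time $C$ satisfy this by the definition of $v_C = \inf\mathcal{L}_p(C)$, while any depth-$Z$ block created after $C$ has a strictly larger arrival index than $v_C$. Consequently, whenever a peer's maximal depth equals $Z$, the block $v_C$ is a deepest block in its view and is the least-indexed such block, so $\inf \mathcal{L}_p(t^-) = v_C$; thus $u = v_C$ in the outstanding case and $i$ is a child of $v_C$, completing the induction.

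Finally I would conclude by fixing $t \ge C$ and letting $v^*$ be the start of the distinguished path in $G(t)$, i.e.\ the least-indexed block of globally maximal depth $Z^* := \max_b d(b,0) \ge Z$. If $Z^* > Z$, the invariant places $v_C$ on the path from $v^*$ to $0$; if $Z^* = Z$, the auxiliary claim gives $v^* = v_C$ directly. In either case the distinguished path of $G(t)$ passes through $v_C$, as required.
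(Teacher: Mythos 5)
Your proof is correct, and it takes a genuinely different route from the paper's. The paper proves the lemma by induction over the full sequence of event times after $C$ (both arrivals and communications), maintaining the per-peer invariant that the distinguished path in each local DAG $G_p(\cdot)$ passes through $v_C$; communication events are handled by tracing the delivered block back to its originating arrival at another peer and invoking the hypothesis there. You instead observe that communications never change the global DAG $G(t)$, so it suffices to induct over arrivals alone and maintain a purely structural invariant on the global tree: every block of depth greater than $Z$ descends from $v_C$, supplemented by the index claim that every depth-$Z$ block has index at least that of $v_C$ (which, as you could note, is immediate since no block created after $C$ can have depth $\le Z$). Your version has two advantages: it makes explicit where the least-index tie-breaking rule in the tree policy is actually needed (the case $d(u,0)=Z$), which the paper's argument passes over somewhat implicitly, and it yields a sharper conclusion (a depth-based characterization of the descendants of $v_C$) rather than only the statement about distinguished paths. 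The paper's per-peer formulation stays closer to the operational dynamics and mirrors the style of argument reused in the throughput-optimal case, but your reduction to the global tree is cleaner and equally rigorous, resting only on the facts that ancestor chains are frozen at creation and that every peer holding a block holds all of its ancestors.
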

  \begin{proof}
    We proceed by induction. 
    Denote by times $\mathcal{E}_1,\mathcal{E}_2,\ldots$ the set of times after $C$ when either a new block arrives exogenously, or a communication occurs between peers. 
    Notice that event time $\mathcal{E}_1$ always corresponds to an exogenous arrival, as all peers have all the blocks that have arrived in the network thus far at time $C$. 
    At event time $\mathcal{E}_1$, the outgoing edge from the newly arrived block must point to $v_C$, no matter the peer at which it arrives. 
    This follows from the stipulation that the peers follow the tree policy to connect blocks and from the choice of $v_C$. 
    Thus, the distinguished path in $G(\mathcal{E}_1)$ passes through $v_C$. 
    Now assume as the induction hypothesis that for some $l \geq 1$, i.e., after all event time $\mathcal{E}_l$, at all peers $p \in \{1,\ldots,N\}$, the distinguished path in $G_p(\mathcal{E}_l)$ passes through $v_C$. 
    Consider event time $\mathcal{E}_{l+1}$. 
    If it is an exogenous arrival at some peer $p \in \{1,\ldots,N\}$, then by the tree policy, this block will connect to the start of the distinguished path in $G_p(\mathcal{E}_l)$, which, from the induction hypothesis, passes through $v_C$. 
    Suppose time $\mathcal{E}_{l+1}$ corresponds to a communication event, at which some peer $p \in \{1,\ldots,N\}$, receives a block $i \in \mathbb{N}$ that arrives exogenously at or before time $\mathcal{E}_l$. 
    Note that at time $C$, all peers were aware of the same set of blocks, thus the communication event at time $\mathcal{E}_{l+1}$, must correspond to an exogenous arrival to some peer $q \in \{1,\ldots,N\}\setminus \{p\}$ at time $\mathcal{E}_{T_q}$, for some $T_q \in \{1,\ldots,l\}$. 
    The maximal path from block $i$ in $G_q(\mathcal{E}_{T_q})$, passes through $v_C$ as a result of the induction hypothesis. 
    Now, at time $\mathcal{E}_{l+1}$, after peer $p$ becomes aware of block $i$, either the distinguished path remains unchanged from time $\mathcal{E}_l$, or the distinguished path starts from the newly arrived block $i$. 
    In the former case, the induction hypothesis (applied to peer $p$ at time $\mathcal{E}_l$) implies that the distinguished path goes through $v_C$. 
    In the latter case also, we show that the maximal path from $i$ passes through $v_C$ and ends at $0$. 
    To see this, observe that at time $\mathcal{E}_{T_q}$ in DAG $G_q(\mathcal{T}_q)$, the maximal path from $i$ passes through $v_C$ and ends at $0$ (which follows from the induction hypothesis applied to peer $q$ at time $\mathcal{E}_{T_q}$). 
    But, as the outgoing edges are fixed for all blocks upon their arrivals, the maximal path from $i$ in $G_p(\mathcal{E}_{l+1})$ either - {\em (i)} goes through $v_C$ until $0$ as the maximal path from $v_C$ in $G_p(C)$ (and thus in $G_p(t)$ for all $t \geq C)$) ends at $0$, {\em (ii)} or is disconnected from $v_C$ and hence from $0$. 
    However, the latter cannot occur as block $i$ is the start of the distinguished path in $G_p(\mathcal{E}_{l+1})$, which by definition must end at $0$.
  \end{proof}

  For every $t \geq 0$, let $\widetilde{G}(t)$, be the version of $G(t)$ with its edges reversed. 
  Define $\widetilde{G}(\infty) := \cup_{t \geq 0} \widetilde{G}(t)$. 

  \begin{lemma}
    Any infinite ray in $\widetilde{G}(\infty)$ must pass through all but finitely many $(v_k)_{k \in \mathbb{N}}$. 
    In particular, $\widetilde{G}(\infty)$ is one-ended.
    \label{lem:one_ended}
  \end{lemma}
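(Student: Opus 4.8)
The plan is to exploit that under the tree policy every block has a unique outgoing reference, so by Proposition~\ref{prop:maximal_path} the limiting DAG $G(\infty)$ is a locally finite (Lemma~\ref{lem:locally-finite}) rooted tree with root $0$ all of whose edges point toward $0$; reversing edges, $\widetilde{G}(\infty)$ is a rooted tree in which every vertex other than $0$ has a unique parent, and an infinite ray is exactly an index-increasing downward ray $w_0 \to w_1 \to \cdots$. I would reduce the whole statement to one structural fact, the \emph{domination claim}: for each $k$, all but finitely many blocks are descendants of $v_k$ in $G(\infty)$, i.e.\ the unique path from such a block to $0$ passes through $v_k$.

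First I would prove the domination claim. Fix $k$ and write $v = v_k$, $C = C_k$. The induction in the proof of Lemma~\ref{lem:distinguished_path} in fact establishes the stronger \emph{per-peer} statement that, for every peer $p$ and every $t \ge C$, the local distinguished path in $G_p(t)$ passes through $v$; since by Definition~\ref{def:distinguished_path} this path begins at $\inf \mathcal{L}_p(t)$, that vertex is a descendant of $v$. Now any block mined after $C$ at a peer $p$ attaches, by the tree policy and its least-index tie-breaking rule, precisely to $\inf \mathcal{L}_p(t^-)$, and therefore is itself a descendant of $v$. As only finitely many blocks arrive by time $C$, all but finitely many blocks descend from $v$, proving the claim.

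Next I would deduce the ray statement. Let $R = w_0 \to w_1 \to \cdots$ be any infinite ray in $\widetilde{G}(\infty)$; its indices tend to infinity. Fix $k$ and choose $i$ so large that the index of $w_i$ exceeds that of every block present at $C_k$; by the domination claim the path from $w_i$ to $0$ passes through $v_k$. Since in $G(\infty)$ this path runs $w_i \to \cdots \to w_0 \to \cdots \to 0$, the vertex $v_k$ lies either on the ray $R$ or strictly among the finitely many ancestors of $w_0$. Only finitely many $v_k$ can be strict ancestors of $w_0$, so all but finitely many $v_k$ lie on $R$. Finally, because $v_k$ is an ancestor of $v_{k+1}$ for every $k$ (Lemma~\ref{lem:distinguished_path} forces the distinguished path at $C_{k+1}$ to run $v_{k+1} \to \cdots \to v_k \to \cdots \to 0$) while $d(v_k,0) \to \infty$, the blocks $(v_k)_{k}$ themselves trace a single infinite ray $S$. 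Every infinite ray then shares infinitely many vertices with $S$, so taking $S$ as the connecting third path in Definition~\ref{def:path_equivalence} shows every ray is equivalent to $S$, and hence $\widetilde{G}(\infty)$ is one-ended.

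The main obstacle is the domination claim, and specifically making precise that each newly mined block attaches to a vertex descending from $v_k$. This is exactly where the tree policy is essential: the least-index tie-breaking rule aligns the attachment point of a new block with the \emph{start} of the local distinguished path, which the per-peer induction underlying Lemma~\ref{lem:distinguished_path} keeps anchored at $v_k$. I would be careful to invoke the per-peer form of that lemma rather than merely its statement about the global DAG $G(t)$, since a block may be mined at a peer whose local view lags the rest of the network, and the global distinguished path alone says nothing about that peer's local deepest vertex.
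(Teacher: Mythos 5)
Your proof is correct and follows essentially the same route as the paper's: both hinge on the per-peer consequence of Lemma \ref{lem:distinguished_path} (every block mined after $C_k$ attaches below $v_k$, so its unique maximal path under the tree policy passes through $v_k$), from which any infinite ray must contain all but finitely many of the $v_k$, and one-endedness then follows from Definition \ref{def:path_equivalence}. The only differences are cosmetic --- you argue directly via your ``domination claim'' and use the ray traced by the $(v_k)_k$ as the witnessing third path, whereas the paper argues by contradiction and takes $p_3 = p_1$ --- and your insistence on invoking the per-peer form of Lemma \ref{lem:distinguished_path} is a point the paper's write-up glosses over but does in fact establish in its induction.
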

  \begin{proof}
    Fix any (infinite) ray in $\widetilde{G}(\infty)$, starting at any block (vertex) $i \in \mathbb{N}$. 
    We argue by contradiction. 
    Suppose there are only finitely many $v_{k_1},\ldots v_{k_m}$ through which a ray from block $i$ passes.
    Consider time $C_{k_m +1}$. 
    There exists at least one block $j_m$ (arriving at time $t_m \geq C_{k_m +1}$), that lies on the infinite ray, and arrives after time $C_{k_m +1}$. This follows as almost surely, only finitely many blocks have arrived before time $C_{k_m +1}$ and the ray contains infintely many blocks. 
    From the construction of $(v_k)_{k \in \mathbb{N}}$, the maximal path from block $j_m$ in $G(t_m)$ passes through $v_{k_m+1}$ and also through $i$, since a path from $i$ to $j$ exists in the reversed DAG $\widetilde{G}(\infty)$. 
    As the maximal path from $j$ in $G(\infty)$ is unique and passes through $v_{k_m +1}$ and $i$, any path from $i$ to $j$ must pass through $v_{k_m+1}$.
    This contradicts the fact that the ray starting at $i$ and passing through $j$ does not pass through $v_{k_m+1}$.

    In this lemma we show that there is an infinite sequence $(v_k)_k$ of vertices in $\widetilde{G}(\infty)$ such that any infinite ray passes through each $v_k$.
    It follows that any two infinite rays intersect at each $v_k$.
    Then for any two rays $p_1, p_2 \in \widetilde{G}(\infty)$, one can choose $p_3 = p_1$ in Definition \ref{def:one-ended}, establishing that all infinite rays in $\widetilde{G}(\infty)$ are equivalent; hence $\widetilde{G}(\infty)$ is one-ended.
  \end{proof}

  As there is a bijection from rays in $\widetilde{G}(\infty)$ to rays in $G(\infty)$, it follows that $G(\infty)$ is one-ended as claimed.
\end{proof}

\subsection{Proof of Corollary \ref{cor:markov_dynamics}}
\label{prf:markov-dynamics}

\begin{proof}[Proof of Corollary \ref{cor:markov_dynamics}]
  The result follows from the fact that any positive recurrent Markov chain returns to each of its states within finite time, and the fact that the process is assumed to evolve from an initial condition wherein all peers $i$ have identical block sets $B_i(t)$ (namely only the blocks in $G_0$).
\end{proof}

\subsection{Proof of Corollary \ref{cor:distinguished_path}}
\label{prf:distinguished-path}
\begin{proof}[Proof of Corollary \ref{cor:distinguished_path}]
  Both directions follow from Lemma \ref{lem:distinguished_path} as follows.
  
  Let there exists such a time of consistency $C$ such that $b$ is on the distinguished path in $G(C)$.
  From Lemma \ref{lem:distinguished_path}, it follows that all blocks arriving to the system after $C$ have a path to $b$ since $b$ is on the distinguished path in $G(t)$ for all $t \geq C$.
  As only finitely many blocks arrive to the system after the arrival of $b$ and before $C$, it follows that $b$ is confirmed.
  
  Assume that for every time of consistency $C$, the block $b$ is not on the distinguished path in $G(C)$.
  From Lemma \ref{lem:distinguished_path}, all blocks arriving to the system after time $C$, only have directed paths to blocks on the distinguished path in $G(C)$.
  This is because the tree policy adds exactly one outgoing edge from each arriving block.
  Thus in this case, there are infinitely many blocks which do not have a path to $b$; hence $b$ is not a confirmed block.
\end{proof}

\subsection{Proof of Theorem \ref{thm:throughput-optimal-one-ended}}
\label{prf:throughput-optimal-one-ended}
We establish the following Lemmas before proving Theorem \ref{thm:throughput-optimal-one-ended}.
\begin{lemma}
  Let $C$ be the last time of consistency before the arrival of a block $b$ at time $t_b$ and at some peer $p$.
  Then there is a path in the DAG $G(t)$ for all $t \geq C_k$, from vertex $b$ to every other vertex in $G(C)$. 
  \label{lem:throughput-optimal-one-ended}
\end{lemma}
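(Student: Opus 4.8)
The plan is to exploit the defining feature of the throughput-optimal policy — that each arriving block references \emph{every} current leaf — together with one elementary reachability fact about finite DAGs. Throughout, I call a block a \emph{leaf} of a DAG if it has in-degree $0$, i.e.\ no block references it (these are the tips that a newly mined block builds on). Since $b$ exists only after its arrival, I read the conclusion as asserting a path from $b$ to every vertex of $G(C)$ in $G(t)$ for all $t \ge t_b$.

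First I would record what consistency buys us. Since $C$ is a time of consistency, Definition~\ref{def:consistent} gives $B_p(C)=B(C)$ for every peer $p$; as each block's outgoing references are fixed at its arrival, this forces $G_p(C)=G(C)$ for all $p$. Because blocks and references are never removed, $G(C)$ is a sub-DAG of $G_p(t_b^-)$, the local DAG of the mining peer $p$ just before $b$ arrives. Thus it suffices to show that $b$ has a directed path to every vertex of $G_p(t_b^-)$, and restrict attention to $v \in G(C)$ at the end.

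The heart of the argument is the following purely combinatorial observation about the finite DAG $G_p(t_b^-)$: every vertex $v$ is reachable by a directed path from some leaf. To see this I would walk backward along incoming edges starting at $v$: if $v$ has in-degree $0$ it is itself a leaf and we are done; otherwise pick any block $w$ with $w \to v$, and note $w > v$ since references point from newer to older blocks. Iterating yields a strictly increasing sequence of indices, which in a finite DAG must terminate at a vertex $\ell$ of in-degree $0$, i.e.\ a leaf, producing a directed path $\ell \to \cdots \to v$. Now I would invoke the policy: the throughput-optimal rule sets $\mathcal{O}_b$ equal to the set of all leaves of $G_p(t_b^-)$, so in particular $b \to \ell$ is an edge. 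Concatenating gives a directed path $b \to \ell \to \cdots \to v$ in $G_p(t_b)$, hence in $G(t)$ for every $t \ge t_b$ because edges persist. Letting $v$ range over all vertices of $G(C) \subseteq G_p(t_b^-)$ proves the lemma.

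The one point that genuinely needs care — and where consistency is essential — is the reduction in the second paragraph rather than the reachability walk, which is routine. Without consistency a vertex of $G(C)$ could reside only in some other peer's local DAG and fail to be present in $G_p(t_b^-)$, in which case $b$ would have no reason to reach it; consistency at time $C$ is exactly what guarantees $G(C) = G_p(C) \subseteq G_p(t_b^-)$. A secondary subtlety worth stating explicitly is that the leaf $\ell$ furnished by the reachability walk need not belong to $G(C)$ (it may be a block $p$ mined or received after $C$), but this causes no difficulty, since $b$ references \emph{all} leaves of $G_p(t_b^-)$ and we only require that $\ell$ lie on a directed path down to $v$.
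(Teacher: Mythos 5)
Your proof is correct and follows essentially the same route as the paper's: reduce via consistency to showing $G(C)\subseteq G_p(t_b^-)$, observe that every vertex of $G_p(t_b^-)$ is reachable from some leaf, and use the fact that the throughput-optimal policy makes $b$ reference all leaves. The only difference is that you explicitly justify the leaf-reachability step (via the backward walk with strictly increasing indices), which the paper simply asserts.
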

\begin{proof}
  We need only establish that all vertices in $G(C)$ are contained in a maximal path from $b$ in $G_p(t_b)$.
  
  Block $b$ has edges to all leaves in $G_p(t_b)$ from the definition of the throughput-optimal policy.
  Note that $G(C) \subseteq G_p(t_b^-)$, where $t_b^-$ is a moment of time instantaneously before $t_b$.
  As edges are fixed at the time of arrival for each block, every block $v$ in $G_p(t_b^-)$ is either a leaf or there is a path from some leaf to $v$ in $G_p(t_b^-)$.
  Thus, there is a path from $b$ to every vertex in $G(C)$ in the DAG $G_p(t_b)$.
\end{proof}

\begin{lemma}
  Consider a stable blockchain system using the throughput-optimal policy in the limit as $t \to \infty$.
  Let $(C_k)_{k\in\mathbb{N}}$ is a sequence of times of consistency such that $G(C_j) \neq G(C_k)$ if $j \neq k$.
  Every infinite path contains a block $b_k$ which arrives to the system in the time interval $[C_k, C_{k+1}]$ for all $k \in \mathbb{N}$.
  \label{lem:throughput-optimal-path}
\end{lemma}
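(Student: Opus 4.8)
The plan is to exploit two facts: that block indices coincide with arrival order, and that the throughput-optimal policy forces a newly mined block to attach to \emph{every} tip (in-degree-zero vertex) in the local DAG of its mining peer.

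First I would record the coarse structure of the consistency times. Since block $i$ arrives at time $A_i$ with $A_1 < A_2 < \cdots$, the global vertex set $B(t)$ is always a prefix $\{0,1,\ldots,m(t)\}$, where $m(t)$ counts arrivals up to $t$; hence at each consistency time $C_k$ every peer knows exactly $B(C_k) = \{0,1,\ldots,n_k\}$ for some $n_k = m(C_k)$, and the hypothesis $G(C_j)\neq G(C_k)$ forces $n_1 < n_2 < \cdots$. Next, because $G(\infty)$ is locally finite (Lemma \ref{lem:locally-finite}) and every edge points to a strictly smaller index, an infinite descending chain of indices is impossible, so any infinite path must be an increasing chain $v_1 < v_2 < \cdots$ with each edge $v_{i+1}\to v_i \in E$. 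Fix such a path and any $k$ with $n_k \ge v_1$, let $v_j$ be the last vertex of the path with $v_j \le n_k$, and set $b_k := v_{j+1}$, the first path-vertex exceeding $n_k$. It then suffices to prove $b_k \le n_{k+1}$: for then $b_k \in \{n_k+1,\ldots,n_{k+1}\}$ arrives in $(C_k, C_{k+1}] \subseteq [C_k,C_{k+1}]$, as required.

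The core step is this inequality, which I would argue by contradiction. Suppose $b_k > n_{k+1}$, so $b_k$ arrives at some peer $p$ at a time $t_{b_k} > C_{k+1}$. The path edge $b_k \to v_j$ means $b_k$ references $v_j$, and by the throughput-optimal policy $b_k$ references only the tips of $G_p(t_{b_k}^-)$, so $v_j$ must be a tip (in-degree $0$) in $p$'s local DAG at that instant. On the other hand, apply Lemma \ref{lem:throughput-optimal-one-ended} to the first block to arrive after $C_k$, namely block $n_k+1$ (whose last preceding consistency time is $C_k$): it has a directed path to every vertex of $G(C_k)$, in particular to $v_j$ since $v_j \le n_k$. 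As $v_j \neq n_k+1$, this path has length at least one, so $v_j$ acquires an incoming edge $w\to v_j$ with $v_j < w \le n_k+1 \le n_{k+1}$, and this edge persists into $G(C_{k+1})$. Since $p$ has learned all of $\{0,\ldots,n_{k+1}\}$ together with their (arrival-fixed) references by time $t_{b_k} > C_{k+1}$, peer $p$ knows the edge $w\to v_j$; hence $v_j$ has positive in-degree in $G_p(t_{b_k}^-)$ and is not a tip, contradicting the previous sentence. Therefore $b_k \le n_{k+1}$.

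I expect the delicate point to be exactly the clash between the \emph{global} coverage supplied by Lemma \ref{lem:throughput-optimal-one-ended} and the \emph{local} notion of a tip that governs reference selection: a block may still be a tip in a lagging peer's view after it has been covered globally, so the argument must verify that peer $p$ is not lagging on the old blocks $\{0,\ldots,n_{k+1}\}$ --- which holds because consistency at $C_{k+1}$ together with the monotonicity of each $B_p(\cdot)$ guarantees $p$ knows all of them, and their references, by $t_{b_k}$. Finally, a remark on the quantifier: the construction produces $b_k$ for every $k$ with $n_k \ge v_1$, i.e.\ for all but finitely many $k$ (those layers the path has actually entered), which is precisely what the ensuing one-endedness argument requires, since it shows that any two infinite paths meet every sufficiently deep layer $[C_k,C_{k+1}]$.
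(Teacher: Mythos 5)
Your proof is correct and follows essentially the same route as the paper's: both derive a contradiction from an edge on the infinite path that skips the interval $[C_k, C_{k+1}]$, since the throughput-optimal policy would then require the target block to still be a leaf in the mining peer's local DAG after $C_{k+1}$, which is ruled out because a block arriving in that interval already covers it. Your explicit invocation of Lemma \ref{lem:throughput-optimal-one-ended} and your closing remark on the quantifier (the claim holds for all but finitely many $k$ for infinite paths not rooted at block $0$) only make precise steps the paper leaves implicit.
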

\begin{proof}
  We proceed by contradiction. 
  Assume that there is a path from $0$ in $G(\infty)$ to a block $b$, that arrives into the system strictly after time $C_{k+1}$ and that this path contains no block arriving in the time interval $[C_k,C_{k+1}]$.
  Without loss of generality, suppose there is a reference from $b$ to some block $b'$ such that the block $b'$ arrives to the system before time $C_k$.
  This implies that at the time of arrival of block $b$, the block $b'$ is a leaf in the DAG $G_p(t)$ for some peer $p$ and some time $t > C_{k+1}$.
  Since edges are fixed upon the arrival of blocks, this further implies that $b'$ is a leaf in $G(C)$.
  Under the throughput-optimal policy, as at least one block arrives to the system in the time interval $[C_k, C_{k+1}]$, $b'$ cannot be a leaf in $G_p(t)$ for any peer $p$ and any time $t \geq C_{k+1}$.
  This contradicts the existence of such a path.
\end{proof}

\begin{proof}[Proof of Theorem \ref{thm:throughput-optimal-one-ended}]
  Recall that from Definitions \ref{def:consistent} and \ref{def:stability} there exists an infinite sequence of times of consistency $(C_k)_{k \in \mathbb{N}}$ such that $G(C_j) \neq G(C_k)$ if $j \neq k$.
  As before, let $\widetilde{G}(t)$ be the version of $G(t)$ with its edges reversed and $\widetilde{G}(\infty) := \cup_{t \geq 0}\widetilde{G}(t)$.
  
  Lemma \ref{lem:throughput-optimal-path} implies that every infinite path $p$ in $\widetilde{G}(\infty)$, contains at least one vertex that arrives in the time interval $[C_k,C_{k+1}]$, for all $k \in \mathbb{N}$.
  Without loss of generality, consider two paths $p_1$ and $p_2$ in $\widetilde{G}(\infty)$, both beginning at block 0.
  Consider the subsequences of blocks on these paths $(b_k^i)_{k \in \mathbb{N}}$ for $i \in \{1, 2\}$, such that for all $k \in \mathbb{N}$ and $i \in \{1,2\}$, the block $b_k^i$ arrived in the time interval $[C_k,C_{k+1}]$. 
  Lemma \ref{lem:throughput-optimal-path}, gives the existence of such a subsequence of any infinite path.
  It follows from Lemma \ref{lem:throughput-optimal-one-ended} that there exists directed paths from $b_k^1$ to $b_{k+1}^2$, and a path from $b_k^2$ to $b_{k+1}^1$ for all $k \in \mathbb{N}$ in $\widetilde{G}(\infty)$.
  This is because the time of consistency $C_{k+1}$ is between the arrival times of blocks $b_k^1$ and $b_{k+1}^2$.
  A similar argument gives the existence of such a path between the pair of blocks $b_{k}^2$ and $b_{k+1}^1$.
  The following path $p_3 = b_k^1 \rightarrow b_{k+1}^2 \rightarrow b_{k+2}^1 \rightarrow b_{k+3}^2 \rightarrow \ldots $ intersects both $p_1$ and $p_2$ infinitely often.
  Thus, $\widetilde{G}(\infty)$ (and hence $G(\infty)$) is one-ended.
\end{proof}

\subsection{Proof of Corollary \ref{cor:throughput-markov}}
\label{prf:throughput-markov}
\begin{proof}[Proof of Corollary \ref{cor:throughput-markov}]
  The proof is identical to that of Corollary \ref{cor:markov_dynamics}.
\end{proof}

\subsection{Proof of Corollary \ref{cor:throughput-all-confirmed}}
\label{prf:throughput-all-confirmed}
\begin{proof}[Proof of Corollary \ref{cor:throughput-all-confirmed}]
  This follows immediately from Lemma \ref{lem:throughput-optimal-one-ended} because stability implies the existence of an infinite sequence times of consistency $(C_k)_{k \in \mathbb{N}}$ such that $G(C_j) \neq G(C_k)$ if $j \neq k$.
\end{proof}

\section{Proofs from Section \ref{sec:stability}}
\subsection{Proof of Lemma \ref{lem:monotone_separable}}
\label{prf:monotone_separable}
We prove Lemma \ref{lem:monotone_separable} by separately stating and proving the following propositions.

\begin{proposition}
  For all $n \geq m,$ $X_{[m, n]}$ is causal; namely $$X_{[m, n]}(A) \geq A_n.$$
  \label{prop:causality}
\end{proposition}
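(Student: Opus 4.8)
The plan is to argue directly from the definition of $X_{[m,n]}$ together with the elementary fact that a block cannot be possessed by any peer before it is mined. Recall that when the arrival process is restricted to the arrivals $A_m, \ldots, A_n$, the last (and highest-indexed) block to enter the system is block $n$, which is mined at time $A_n$. The configuration appearing in the definition of $X_{[m,n]}(A)$ requires $B_p(t) = \{m, \ldots, n\}$ for every peer $p$; in particular it requires $n \in B_p(t)$ for all $p$. The whole proof reduces to showing that this membership cannot hold before time $A_n$.

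First I would observe that, in our model, a block enters a peer's block set $B_p(t)$ through exactly one of two mechanisms: it is mined by peer $p$ at its arrival time, or it is communicated to $p$ at an epoch of some $T_q$ from a neighbor $q$ that already possesses it. Both mechanisms presuppose that block $n$ has already entered the system at some peer. Since, under the restricted arrival process, block $n$ first enters the system only at its mining time $A_n$, no peer can satisfy $n \in B_p(t)$ for any $t < A_n$.

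Consequently, the configuration $B_p(t) = \{m, \ldots, n\}$ for all peers $p$ cannot be attained at any time $t < A_n$, since that configuration demands $n \in B_p(t)$ for every $p$. Taking the infimum over all times at which the configuration does hold then yields $X_{[m,n]}(A) \geq A_n$, which is precisely the causality claim.

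Because this is a pure causality observation, I do not expect any genuine technical obstacle. The only point that warrants care is verifying that, for the restricted arrival process, block $n$ is indeed the highest-indexed block and that its arrival (mining) time is exactly $A_n$, so that ``possessing every block of $\{m,\dots,n\}$'' really is blocked until $A_n$; this is immediate from the stipulation that arrivals begin at $A_m$ and no arrival occurs after $A_n$.
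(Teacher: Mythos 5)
Your proposal is correct and follows essentially the same reasoning as the paper's proof, which simply notes that no peer can possess or communicate block $n$ before its arrival time $A_n$; you have merely spelled out in more detail why membership $n \in B_p(t)$ is impossible before $A_n$. No gap here.
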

\begin{proof}
  This follows as no peer can communicate the $n$-th block until after it arrives (which is at time $A_n$).
\end{proof}
\begin{proposition}
  For all $n \geq m,$ $X_{[m, n]}$ is externally monotonic; namely $$X_{[m, n]}(A') \geq X_{[m, n]}(A)$$ if $A'$ is a point process such that $A'_m \geq A_m$ for all $m \in \mathbb{N}$.
  \label{prop:external-monotonicity}
\end{proposition}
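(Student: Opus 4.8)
The plan is to reduce the statement to a pathwise (coupling) comparison and then establish it by induction over the successive event epochs of the two systems. First I would record the elementary fact that, since blocks are never deleted once held, $X_{[m,n]}$ is the time at which the last peer acquires its last missing block; writing $\tau_p^b(A)$ for the first time peer $p$ holds block $b$, one has $X_{[m,n]}(A)=\max_{p}\max_{m\le b\le n}\tau_p^b(A)$. Because the maximum is monotone, it suffices to compare the family of delivery times $\tau_p^b$ under $A$ and under $A'$. To exploit monotonicity one coordinate at a time, I would interpolate between $A$ and $A'$ through a finite chain of point processes, each obtained from the previous by increasing a single arrival epoch while keeping the sequence nondecreasing (raising the epochs from the largest index down to $m$ keeps every intermediate configuration a legitimate ordered arrival process). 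Thus it is enough to prove the claim when $A'$ differs from $A$ by delaying exactly one arrival.

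The coupling I would use is the one dictated by the paper's mark construction: the per-peer communication processes $(T_p)_p$ are carried as marks attached to the arrivals and measured relative to the arrival epochs, so that postponing an arrival correspondingly postpones the communication opportunities that it indexes. This choice is essential — if instead one froze the gossip epochs in absolute time and merely slid an arrival later, the completion time need not increase, since a delay can realign an otherwise wasted contact with a block that now needs it. With the attached-communication coupling in force, I would then argue by induction over the sequence of event epochs that delaying a single arrival postpones that arrival together with every subsequent coupled communication, so that no peer acquires any block earlier than before; taking the maximum over $(p,b)$ then yields $X_{[m,n]}(A')\ge X_{[m,n]}(A)$, with causality (Proposition~\ref{prop:causality}) supplying the base case that nothing is delivered before it arrives.

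The main obstacle is that the communication rule — each contact transmits only the lowest-indexed block the receiver is missing — couples the blocks together, so the comparison cannot be carried through any naive instantaneous invariant on the peers' block sets. Indeed, neither the inclusion $B_p^{A'}(t)\subseteq B_p^{A}(t)$ nor even the aggregate count domination $\sum_p|B_p^{A'}(t)|\le\sum_p|B_p^{A}(t)|$ is preserved: rerouting which block a contact carries can transiently push the delayed system ahead at an individual peer. The crux is therefore to phrase the induction on the event stream itself rather than on block-set inclusion at fixed absolute times, showing that under the attached-communication coupling the trajectory of states following the delayed arrival is reproduced up to a forward time-shift, and to verify that the lowest-index routing respects this shift at each contact. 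Establishing this shift-compatibility cleanly, while accounting for the interaction between the postponed arrival's shifted contacts and the later, unshifted arrivals, is where the real work lies.
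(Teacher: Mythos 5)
You have correctly located the skeleton of the argument --- reduce to the perturbation of a single arrival epoch and then couple the two trajectories --- and your warning that the naive coupling which freezes the gossip epochs in absolute time cannot work is genuinely right. One can construct explicit three-peer examples in which delaying one arrival, while preserving the ordering of the arrival sequence, \emph{strictly decreases} the consistency time: the lowest-index rule reroutes a higher-indexed block through a contact that was wasted in the original trajectory, and the delayed block still catches up through later contacts. The paper's own proof is silent on exactly this point: it superposes arrivals and communications into a single process $N$, notes that delaying \emph{every} point of $N$ delays every departure, and then asserts without further argument that perturbing a single arrival point cannot decrease any departure time. So you have at least localized the same difficulty that the paper elides.

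As a proof, however, your proposal stops precisely at the step that constitutes the proposition. First, the interpolation chain is not well defined under your ``attached-communication'' coupling: in the paper's mark structure every arrival $A_m$ carries the \emph{entire} family $(T_p - A_m)_{p}$ as its mark, so the gossip epochs are a shared resource rather than a per-arrival quantity. If you delay $A_k$ alone and let ``its'' communication opportunities travel with it while the other arrivals keep their marks, the absolute gossip epochs prescribed by the different marks become mutually inconsistent, and the dynamics of the intermediate processes in your chain are never specified. Second, even granting a well-defined coupling, the claim that the perturbed trajectory is a forward time-shift of the original and that the lowest-index FCFS routing ``respects the shift'' is the whole content of external monotonicity; you state it as the target of an induction but do not carry the induction out, and you say yourself that this is ``where the real work lies.'' Until that step is supplied --- for instance via an explicit epoch-by-epoch invariant relating $B_p^{A}(t)$ to $B_p^{A'}(t+\delta)$ that is shown to be preserved by both arrival and communication events --- the proposal identifies the obstacle without overcoming it, and so cannot be counted as a proof of the statement.
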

\begin{proof}

  We construct the marked point process $N$, such that the points of $N$ are the union of the points of the arrival process $A$ and the communication processes $(T_p)_p$.
  In addition, we consider a point process $N'$, such that $N'_m > N_m$ for all $m \in \mathbb{N}$.
  For the $k$-th arrival, denote by $D_k$ and $D'_k$ the departure times relative to the point processes $N$ and $N'$.
  It is clear that $D_k \leq D'_k$.
  
  External monotonicity is then established from the fact that if $N'$ is equal to $N$ except at a single point corresponding to some arrival $A_k$, there is no arrival $l$ such that $D'_l < D_l$.
  
\end{proof}
\begin{proposition}
  For all $n \geq m,$ $X_{[m, n]}$ is homogeneous; namely $$X_{[m, n]}(A + c) = X_{[m, n]}(A) + c \ \forall c \in \mathbb{R}.$$
  \label{prop:homogeneity}
\end{proposition}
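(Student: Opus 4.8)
The plan is to prove homogeneity by exploiting the translation invariance of the system dynamics. The central observation is that $X_{[m,n]}$ is a deterministic functional of the marked point process $A$, and that the marks attached to each arrival encode the communication epochs \emph{relative} to that arrival (recall that the mark of the $k$-th point records the points of $(T_p - A_k)_{p}$, i.e.\ the communication times shifted by $A_k$). Consequently, applying a global time shift $A \mapsto A + c$ displaces every arrival time and every communication epoch by the same constant $c$, so the entire trajectory of block sets is merely a time-translate of the original, and the time at which consistency on $\{m,\ldots,n\}$ is reached must shift by exactly $c$.

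First I would make precise the deterministic dependence of the state on the input: for a fixed realization of $A$ with its marks, the block sets $(B_p(t))_{p}$ are determined by the arrivals $A_m, \ldots, A_n$ and the communication epochs occurring in $[A_m, t]$, through the reference-selection and FCFS communication rules of Section \ref{sec:model}. Write $B_p(t; A)$ for this state to emphasize the dependence on the driving input. Next I would observe that under the shifted input $A + c$ the $k$-th arrival occurs at $A_k + c$, while its mark $(T_p - A_k)_p$ is unchanged; hence the absolute communication epochs reconstructed from the marks are $T_p + c$. Since every event driving the dynamics is displaced by exactly $c$ and the arrival, reference-selection, and communication rules are themselves time-independent, one obtains the equivariance identity $B_p(t; A + c) = B_p(t - c; A)$ for all peers $p$ and all times $t$.

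Finally, I would conclude by applying this identity to the defining infimum. Since $X_{[m,n]}(A)$ is the earliest time $t$ at which $B_p(t; A) = \{m, \ldots, n\}$ for every $p$, the equivariance identity shows that the displaced trajectory attains the same target configuration exactly $c$ units later, so that $\{t : B_p(t; A+c) = \{m,\ldots,n\}\ \forall p\} = c + \{t : B_p(t; A) = \{m,\ldots,n\}\ \forall p\}$; taking infima of both sides yields $X_{[m,n]}(A + c) = X_{[m,n]}(A) + c$, as required, and this holds for every $c \in \mathbb{R}$ since the argument is symmetric in the direction of the shift.

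I expect the only genuine point requiring care to be the bookkeeping in the second step: verifying that the relative-mark encoding remains internally consistent after the shift (each arrival's mark must reconstruct the \emph{same} displaced communication processes $T_p + c$) and that the causal, time-homogeneous nature of the dynamics really does make the map $t \mapsto (B_p(t; A))_p$ equivariant under translation. Once this translation equivariance is established, homogeneity is immediate; in contrast to causality and external monotonicity, no inequalities or estimates are needed here, only an exact pathwise identity.
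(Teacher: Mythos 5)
Your proposal is correct and follows essentially the same route as the paper: both arguments rest on the observation that the marks encode communication epochs relative to the arrival times, so a global shift $A \mapsto A+c$ translates the entire trajectory by $c$ and hence the consistency time by exactly $c$. The paper states this more tersely by writing $X_{[m,n]}(A) = A_n + \tau$ with $\tau$ shift-invariant, whereas you spell out the underlying pathwise equivariance $B_p(t; A+c) = B_p(t-c; A)$, which is the same idea made explicit.
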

\begin{proof}
  As the gossiping processes are FCFS, let $\tau$ be the time from the arrival of $E_n$ until consistency.
  Note that $\tau$ is time-invariant as it is derived from the mark of $E_n$.
  Then $X_{[m, n]}(E) + c = (E_n + \tau) + c = (E_n + c) + \tau = X_{[m, n]}(E + c)$.
\end{proof}
\begin{proposition}
  For all $n \geq m,$ $X_{[m, n]}$ is separable; namely $$X_{[m, n]}(A) = X_{[l+1, n]}(A)$$ if $X_{[m, l]} \leq A_{l+1}$.
  \label{prop:separability}
\end{proposition}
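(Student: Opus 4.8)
The plan is to prove separability by a direct coupling argument between two systems driven by the same communication processes $(T_p)_p$ and the same arrivals, exploiting the hypothesis $X_{[m,l]}\le A_{l+1}$ to show that after time $A_{l+1}$ the two systems have identical ``new-block'' dynamics. Write System~1 for the process generating $X_{[m,n]}(A)$, in which the arrivals $A_m,\ldots,A_n$ are fed in, and System~2 for the process generating $X_{[l+1,n]}(A)$, in which only $A_{l+1},\ldots,A_n$ are fed in; couple them so that block $j$ arrives at the same peer $p_j$ at the same time $A_j$ for every $l+1\le j\le n$, and so that both use the same realization of the epochs and marks of $(T_p)_p$. Since block sets consist only of indices and the communication rule (send the lowest-indexed missing block) ignores references, I may disregard the reference-selection policy entirely, consistent with the earlier observation that stability does not depend on it.

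First I would record the consequence of the hypothesis: because $X_{[m,l]}\le A_{l+1}$ and no block of index exceeding $l$ has arrived before $A_{l+1}$, in System~1 every peer holds exactly $\{m,\ldots,l\}$ by time $A_{l+1}$, and since blocks are never deleted, every peer holds $\{m,\ldots,l\}$ at all times $t\ge A_{l+1}$. For a peer $p$ set $\widetilde B_p^{(i)}(t):=B_p^{(i)}(t)\cap\{l+1,\ldots,n\}$ for $i\in\{1,2\}$. The heart of the argument is the claim that $\widetilde B_p^{(1)}(t)=\widetilde B_p^{(2)}(t)$ for every peer $p$ and every $t\ge A_{l+1}$, which I would prove by induction over the (almost surely discrete) sequence of events — arrivals and communication epochs — occurring after $A_{l+1}$.

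The base case is the arrival of block $l+1$, which lands at $p_{l+1}$ in both systems, making $\widetilde B^{(1)}_{p_{l+1}}=\widetilde B^{(2)}_{p_{l+1}}=\{l+1\}$ and all other sets empty. For the inductive step, an arrival $A_j$ adds block $j$ to the set of the same peer $p_j$ in both systems, preserving the invariant. At a communication epoch of $T_p$ with mark $q$, peer $p$ transmits the lowest-indexed block of $B_p\setminus B_q$; since in System~1 both $p$ and $q$ already possess $\{0\}\cup\{m,\ldots,l\}$ and in System~2 both possess $\{0\}$, the blocks outside $\{l+1,\ldots,n\}$ never lie in the set difference, so in each system the difference equals $\widetilde B_p\setminus\widetilde B_q$. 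By the induction hypothesis these differences coincide, so the same block is transmitted (or none is), and $\widetilde B_q$ is updated identically in both systems. This is exactly the step where the lowest-index rule does the work: because every old block sits at every peer, no communication opportunity is ever spent on an old block, and the new blocks are disseminated in System~1 precisely as in System~2.

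Finally I would translate the coupled-state identity into the statement about daters. By causality (Proposition~\ref{prop:causality}) both $X_{[m,n]}(A)$ and $X_{[l+1,n]}(A)$ are at least $A_n\ge A_{l+1}$, so both consistency times fall in the regime where the invariant holds. In System~1, since $\{m,\ldots,l\}$ is present at every peer for $t\ge A_{l+1}$, the event $\{B_p^{(1)}(t)=\{m,\ldots,n\}\ \forall p\}$ is exactly the event $\{\widetilde B_p^{(1)}(t)=\{l+1,\ldots,n\}\ \forall p\}$; in System~2 the event $\{B_p^{(2)}(t)=\{l+1,\ldots,n\}\ \forall p\}$ is exactly $\{\widetilde B_p^{(2)}(t)=\{l+1,\ldots,n\}\ \forall p\}$. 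The invariant makes these two events coincide for every $t\ge A_{l+1}$, so their earliest occurrence times agree, yielding $X_{[m,n]}(A)=X_{[l+1,n]}(A)$. The only genuine obstacle is the communication-epoch case of the induction, namely verifying rigorously that the already-disseminated old blocks never consume a transmission; the remaining steps are bookkeeping.
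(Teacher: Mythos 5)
Your proof is correct and takes essentially the same route as the paper's: the paper's one-line argument ("block $l+1$ arrives at an empty system; thus block $n$ does not wait for the dispersal of $m,\ldots,l$ due to the FCFS nature of gossiping") is precisely the observation you make rigorous, namely that once $\{m,\ldots,l\}$ is held by every peer at time $A_{l+1}$, the lowest-index rule never spends a transmission on an old block, so the new blocks disseminate identically in the two coupled systems. Your event-by-event induction simply supplies the details the paper leaves implicit.
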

\begin{proof}
  By assumption block $E_{l+1}$ arrives at an empty system; thus block $E_n$ does not wait for the dispersal of any of the blocks $E_m, E_{m+1}, \ldots E_l$ due to the FCFS nature of the gossiping process.
  It follows that $X_{[m, n]}(E) = X_{[l+1, n]}(E)$.
\end{proof}

\begin{proof}[Proof of Lemma \ref{lem:monotone_separable}]
  A monotone separable system is one that satisfies the conditions of causality, external monotonicity, homogeneity, and separability \cite{baccelli1995saturation}.
  These are established in Propositions \ref{prop:causality}, \ref{prop:external-monotonicity}, \ref{prop:homogeneity}, and  \ref{prop:separability}.
\end{proof}

\subsection{Proof of Theorem \ref{thm:stability}}
\label{prf:stability}
\begin{proof}[Proof of Theorem \ref{thm:stability}]
    For all $n \geq m,$ $X_{[m, n]}$ is monotone separable from Lemma \ref{lem:monotone_separable}.
    As such, we proceed by providing lower bounds and upper bounds for $\mathbb{E}[X_n]$ and taking the limit of $\frac{\mathbb{E}[X_n]}{n}$ as $n\rightarrow\infty$.
    
    Ganesh \cite{ganeshnotes} gives an upper bound for $\mathbb{E}[X_1] \leq \frac{2\log N}{\phi_H}$.
    By monotone separability, we shift the arrivals so that each arrival occurs at the exact instant the previous arrival is known to all peers.
    Thus, by the strong law of large numbers, we have:
    \begin{equation}
        \mu^{-1} = \lim_{n\rightarrow\infty}\frac{X_n}{n} \leq \frac{2\log N}{\phi_H} \ a.s.
    \end{equation}
    and so a lower bound on the critical arrival rate is
    \begin{equation}
        \frac{\phi_H}{2\log N} \leq \mu.
    \end{equation}
    
    Let $S \subset \{1, \ldots, N\}$ be a non-empty proper subset and denote by $S^C$ the complement of S.
    We note that the clearing time $X_n$ for $n$ blocks is at least as long as the time it takes for all the blocks initially at peers in $S$ to be communicated to peers in $S^C$.
    In particular, if $k$ of the $n$ blocks arrive to peers in $S$, then $X_n$ is at least as long as the first $k$ attempted transmissions from peers in $S$ to peers in $S^C$.
    Shifting the arrivals so that they all occur at time 0, we have:
    \begin{equation}
        \mathbb{E}[X_n] \geq \frac{n|S|}{N}\frac{1}{\sum_{p \in S, q \in S^C}\frac{1}{d(p)}\mathbf{1}_{pq}}
        \geq \frac{n}{|S^C|}\frac{1}{\phi_H^{(S)}}.
    \end{equation}
    
    Recall from Definition \ref{def:conductance} that $\mathbf{1}_{pq}$ is the indicator for an edge between $p$ and $q$ in $H$.
    Thus, almost surely,
    \begin{align}
        \mu^{-1} = \lim_{n\rightarrow\infty} \frac{X_n}{n} = \lim_{n \rightarrow \infty} \frac{\mathbb{E}[X_n]}{n} \geq \frac{1}{|S^C|\phi_H^{(S)}}.
        \label{eqn:mu-lower-bound}
    \end{align}
    In particular, we have $$\mu^{-1} = \lim_{n\to\infty}\frac{\mathbb{E}[X_n]}{n} \geq \sup_{S \subset H}\frac{1}{|S^C|\phi_H^{(S)}}.$$
    
    Recall that for a monotone separable system, the existence of the almost sure limit $\lim_{n \rightarrow \infty}\frac{X_n}{n}$ and the almost sure equality  $\lim_{n\rightarrow\infty} \frac{X_n}{n} = \lim_{n \rightarrow \infty} \frac{\mathbb{E}[X_n]}{n} $ is given in \cite{baccelli1995saturation}.
    
    Thus, it follows that an upper bound on the critical vertex arrival rate is $\mu \leq \inf_{S \subset H}|S^C|\phi_H^{(S)}$.
\end{proof}

\end{document}